\newtheorem{proposition}[]{Proposition}
\newtheorem{example}[]{Example}
\newtheorem{remark}[]{Remark}
\newtheorem{theorem}[]{Theorem}
\newtheorem{lemma}[]{Lemma}
\newtheorem{corollary}[]{Corollary}
\tikzset{>=stealth, shorten >=1pt}
\tikzset{every edge/.style = {thick, ->, draw}}
\tikzset{every loop/.style = {thick, ->, draw}}
\newcommand{\myquot}[1]{``#1''}
\newcommand{\yess}{\ding{51}}%
\newcommand{\noo}{\ding{55}}%
\newcommand{\quest}{\bf ?}
\newcommand{\nats}{\mathbb{N}}
\newcommand{\set}[1]{\{#1\}}
\newcommand{\ind}[1]{I_{#1}}
\newcommand{\size}[1]{|#1|}
\renewcommand{\epsilon}{\varepsilon}
\newcommand{\occ}[2]{|#1|_{#2}}
\newcommand{\NP}[0]{NP\xspace}
\newcommand{\coNP}[0]{coNP\xspace}
\newcommand{\Lall}{L_\text{S}}
\newcommand{\Lsexists}{L_\text{R}^s}
\newcommand{\Laexists}{L_{\text{R}}^a}
\newcommand{\LsBuchi}{L_{\text{B}}^s}
\newcommand{\LaBuchi}{L_{\text{B}}^a}
\newcommand{\LcoBuchi}{L_{\text{C}}}
\newcommand{\Lcal}{\mathcal{L}}
\newcommand{\doublebal}{D}
\newcommand{\bal}{B}
\newcommand{\unbal}{U}
\newcommand{\Fina}{A_{<\omega}}
\newcommand{\Infa}{A_{=\omega}}
\newcommand{\Exa}{A_{>0}}
\newcommand{\almostallunbal}{U'}
\newcommand{\aut}{\mathcal{A}}
\newcommand{\trans}{\mathcal{T}}
\newcommand{\init}{I}
\newcommand{\run}{\rho}
\newcommand{\projsigma}{p_\Sigma}
\newcommand{\parikhimage}{\Phi_e}
\newcommand{\pa}{PA\xspace}
\newcommand{\nfa}{NFA\xspace}
\newcommand{\dfa}{DFA\xspace}
\newcommand{\rpa}{RPA\xspace}
\newcommand{\srpa}{sRPA\xspace}
\newcommand{\arpa}{aRPA\xspace}
\newcommand{\spa}{SPA\xspace}
\newcommand{\sbpa}{sBPA\xspace}
\newcommand{\bpa}{BPA\xspace}
\newcommand{\abpa}{aBPA\xspace}
\newcommand{\cpa}{CPA\xspace}
\newcommand{\mach}{\mathcal{M}}
\newcommand{\stopp}{\texttt{STOP}}
\newcommand{\instr}{\texttt{I}}
\newcommand{\inc}[1]{\texttt{INC(X}_{#1}\texttt{)}}
\newcommand{\dec}[1]{\texttt{DEC(X}_{#1}\texttt{)}}
\newcommand{\ite}[3]{\texttt{IF X}_{#1}\texttt{=0 GOTO }#2\texttt{ ELSE GOTO }#3}
\newcommand{\balance}[1]{\mathsf{diff}(#1)}
\newcommand{\balanceprime}[1]{\mathsf{diff}'(#1)}
\newcommand{\SET}[1]{[#1]}
\newcommand{\greekC}{\pi_c}
\newcommand{\greekABA}{\chi}
\newcommand{\greekD}{\pi_d}
\newcommand{\greekA}{\pi_a}
\newcommand{\greekB}{\pi_b}
\begin{document}

\title{Parikh Automata over Infinite Words\thanks{Shibashis Guha is supported by the DST-SERB project SRG/2021/000466 \emph{Zero-sum and Nonzero-sum Games for Controller Synthesis of Reactive Systems}. Ismaël Jecker is supported by the ERC grant 950398 (INFSYS). Martin Zimmermann is supported by DIREC – Digital Research Centre Denmark.}}
\newcommand{\email}[1]{\texttt{#1}}
\date{\vspace{-.3cm}}
\author{\small Shibashis Guha\\
\small Tata Institute of Fundamental Research, Mumbai, India\\\medskip 
\small \email{shibashis.guha@tifr.res.in}\\ 
\small Ismaël Jecker\\
\small University of Warsaw, Poland \\\medskip
\small \email{ismael.jecker@gmail.com}\\
\small Karoliina Lehtinen\\
\small CNRS, Aix-Marseille University and University of Toulon, LIS,
\small Marseille, France\\\medskip
\small \email{lehtinen@lis-lab.fr}\\
\small Martin Zimmermann\\
\small Aalborg University of Aalborg, Denmark\\
 \small \email{mzi@cs.aau.dk}
}
\maketitle
\begin{abstract}
Parikh automata extend finite automata by counters that can be tested for membership in a semilinear set, but only at the end of a run, thereby preserving many of the desirable algorithmic properties of finite automata. 
Here, we study the extension of the classical framework onto infinite inputs: We introduce reachability, safety, Büchi, and co-Büchi Parikh automata on infinite words and study expressiveness, closure properties, and the complexity of verification problems.

We show that almost all classes of automata have pairwise incomparable expressiveness, both in the deterministic and the nondeterministic case; a result that sharply contrasts with the well-known hierarchy in the $\omega$-regular setting.
Furthermore, emptiness is shown decidable for Parikh automata with reachability or Büchi acceptance, but undecidable for safety and co-Büchi acceptance.
Most importantly, we show decidability of model checking with specifications given by deterministic Parikh automata with safety or co-Büchi acceptance, but also undecidability for all other types of automata.
Finally, solving games is undecidable for all types.
% paste the full abstract from mainlipics.tex here. Delete after use to make sure we always use the current version of the abstract.
\end{abstract}

\section{Introduction}
\label{sec_intro}
While finite-state automata are the keystone of automata-theoretic verification, they are not expressive enough to deal with the many nonregular aspects of realistic verification problems. Various extensions of finite automata have emerged over the years, to allow for the specification of context-free properties and beyond, as well as the modelling of timed and quantitative aspects of systems. Among these extensions, Parikh automata, introduced by Klaedtke and Rueß~\cite{KR}, consist of finite automata augmented with counters that can only be incremented. A Parikh automaton only accepts a word if the final counter-configuration is within a semilinear set specified by the automaton. As the counters do not interfere with the control flow of the automaton, that is, counter values do not affect whether transitions are enabled, they allow for mild quantitative computations without the full power of vector addition systems or other more powerful models. 

For example, the nonregular language of words that have more $a$'s than $b$'s is accepted by a Parikh automaton obtained from the one-state \dfa accepting $\set{a,b}^*$ by equipping it with two counters, one counting the $a$'s in the input, the other counting the $b$'s, and a semilinear set ensuring that the first counter is larger than the second one.
With a similar approach, one can construct a Parikh automaton accepting the non-context-free language of words that have more $a$'s than $b$'s and more $a$'s than $c$'s.

Klaedtke and Rueß~\cite{KR} showed Parikh automata to be expressively equivalent to a quantitative version of existential WMSO that allows for reasoning about set cardinalities. Their expressiveness also coincides with that of reversal-bounded counter machines~\cite{KR}, in which counters can go from decrementing to incrementing only a bounded number of times, but in which counters affect control flow~\cite{Ibarra78}. The (weakly) unambiguous restriction of Parikh automata, that is, those that have at most one accepting run, on the other hand, coincide with unambiguous reversal-bounded counter machines~\cite{BCKN20}. Parikh automata are also expressively equivalent to weighted finite automata over the groups $(\mathbb{Z}^k, +, 0)$~\cite{DM00,MS01} for $k \geqslant 1$. This shows that Parikh automata accept a natural class of quantitative specifications.

Despite their expressiveness, Parikh automata retain some decidability: nonemptiness, in particular, is \NP-complete~\cite{FL}. For weakly unambiguous Parikh automata, inclusion~\cite{CM17} and regular separability~\cite{CCLP17} are decidable as well.
Figueira and Libkin~\cite{FL} also argued that this model is well-suited for querying graph databases, while mitigating some of the complexity issues related with more expressive query languages. Further, they have been used in the model checking of transducer properties~\cite{FMR20}.
 
As Parikh automata have been established as a robust and useful model, many variants thereof exist: pushdown (visibly~\cite{DFT19} and otherwise~\cite{Kar04}), two-way with~\cite{DFT19} and without stack~\cite{FGM19}, unambiguous~\cite{CFM13}, and weakly unambiguous~\cite{BCKN20} Parikh automata, to name a few. 
Despite this attention, so far, some more elementary questions have remained unanswered. For instance, despite Klaedtke and Rueß's suggestion in~\cite{KR} that the model could be extended to infinite words, we are not aware of previous work on $\omega$-Parikh automata.

Yet, specifications over infinite words are a crucial part of the modern verification landscape. Indeed, programs, especially safety-critical ones, are often expected to run continuously, possibly in interaction with an environment. Then, executions are better described by infinite words, and accordingly, automata over infinite, rather than finite, words are appropriate for capturing specifications.

This is the starting point of our contribution: we extend Parikh automata to infinite inputs, and consider reachability, safety, Büchi, and co-Büchi acceptance conditions. We observe that when it comes to reachability and Büchi, there are two possible definitions: an asynchronous one that just requires both an accepting state and the semilinear set to be reached (once or infinitely often) by the run, but not necessarily at the same time, and a synchronous one that requires both to be reached (once or infinitely often) simultaneously. 
Parikh automata on infinite words accept, for example,  the languages of infinite words 
\begin{itemize}
    \item with some prefix having more $a$'s than $b$'s (reachability acceptance), 
    \item with all nonempty prefixes having more $a$'s than $b$'s (safety acceptance),  
    \item with infinitely many prefixes having more $a$'s than $b$'s (Büchi acceptance), and
    \item with almost all prefixes having more $a$'s than $b$'s (co-Büchi acceptance).
\end{itemize}

We establish that, both for reachability and Büchi acceptance, both the synchronous and the asynchronous variant are linearly equivalent in the presence of nondeterminism, but not for deterministic automata.
Hence, by considering all acceptance conditions and (non)determinism, we end up with twelve different classes of automata.
We show that almost all of these classes have pairwise incomparable expressiveness, which is in sharp contrast to the well-known hierarchies in the $\omega$-regular case. 
Furthermore, we establish an almost complete picture of the Boolean closure properties of these twelve classes of automata. 
Most notably, they lack closure under negation, even for nondeterministic Büchi Parikh automata.
Again, this result should be contrasted with the $\omega$-regular case, where nondeterministic Büchi automata are closed under negation~\cite{Buchi}.

We then study the complexity of the most important verification problems, e.g., nonemptiness, universality, model checking, and solving games.
We show that nonemptiness is undecidable for deterministic safety and co-Büchi Parikh automata. However, perhaps surprisingly, we also show that nonemptiness is decidable, in fact \NP-complete, for reachability and Büchi Parikh automata, both for the synchronous and the asynchronous versions. 
Strikingly, for Parikh automata, the Büchi acceptance condition is algorithmically simpler than the safety one (recall that their expressiveness is pairwise incomparable).

Next, we consider model checking, arguably the most successful application of automata theory in the field of automated verification. 
Model checking asks whether a given finite-state system satisfies a given specification. 
Here, we consider quantitative specifications given by Parikh automata. 
Model checking is decidable for specifications given by deterministic Parikh automata with safety or co-Büchi acceptance. 
On the other hand, the problem is undecidable for all other classes of automata.

The positive results imply that one can model-check an arbiter serving requests from two clients against specifications like \myquot{the accumulated waiting time between requests and responses of client~$1$ is always at most twice the accumulated waiting time for client~$2$ and vice versa} and \myquot{the difference between the number of responses for client~$1$ and the number of responses for client~$2$ is from some point onward bounded by $100$}.
Note that both properties are not $\omega$-regular.

Finally, we consider solving games with winning conditions expressed by Parikh automata. Zero-sum two-player games are a key formalism used to model the interaction of programs with an uncontrollable environment. In particular, they are at the heart of solving synthesis problems in which, rather than verifying the correctness of an existing program, we are interested in generating a program that is correct by construction, from its specifications. In these games, the specification corresponds to the winning condition: one player tries to build a word (i.e., behaviour) that is in the specification, while the other tries to prevent this. As with model checking, using Parikh automata to capture the specification would enable these well-understood game-based techniques to be extended to mildly quantitative specifications. However, we show that games with winning conditions specified by Parikh automata are undecidable for all acceptance conditions we consider.

All proofs omitted due to space restrictions can be found in the appendix.

%%%%%%%%%%%%%%%%%%%%%%%%%%%%%%%%%%%%%%%%%%%%%%%%%%%%%%%%%%%%%%%%
%%%%%%%%%%%%%%%%%%%%%%%%%%%%%%%%%%%%%%%%%%%%%%%%%%%%%%%%%%%%%%%%
%%%%%%%%%%%%%%%%%%%%%%%%%%%%%%%%%%%%%%%%%%%%%%%%%%%%%%%%%%%%%%%%
%%%%%%%%%%%%%%%%%%%%%%%%%%%%%%%%%%%%%%%%%%%%%%%%%%%%%%%%%%%%%%%%
\section{Definitions}
\label{sec_definitions}

An alphabet is a finite nonempty set~$\Sigma$ of letters. As usual, $\epsilon$ denotes the empty word, $\Sigma^*$ ($\Sigma^+$, $\Sigma^\omega$) denotes the set of finite (finite nonempty, infinite) words over $\Sigma$.
The length of a finite word~$w$ is denoted by $\size{w}$ and, for notational convenience, we define $\size{w} = \infty$ for infinite words~$w$.

The number of occurrences of the letter~$a$ in a finite word~$w$ is denoted by $\occ{w}{a}$.
Let $a,b \in \Sigma$. A word~$w \in \Sigma^*$ is $(a,b)$-balanced if $\occ{w}{a} = \occ{w}{b}$, otherwise it is $(a,b)$-unbalanced.
Note that the empty word is $(a,b)$-balanced.

\subparagraph*{Semilinear Sets}
Let $\nats$ denote the set of nonnegative integers. 
Let $\vec{v} = (v_0, \ldots, v_{d-1}) \in \nats^d$ and $\vec{v}\,' = (v'_{0}, \ldots, v'_{d'-1}) \in \nats^{d'}$ be a pair of vectors. We define their concatenation as $\vec{v} \cdot \vec{v}\,' = (v_0, \ldots, v_{d-1}, v'_{0}, \ldots, v'_{d'-1}) \in \nats^{d+d'}$.
We lift the concatenation of vectors to sets $D \subseteq \nats^d$ and $D' \subseteq \nats^{d'}$via $D\cdot D' = \set{\vec{v} \cdot \vec{v}\,' \mid \vec{v} \in D \text{ and } \vec{v}\,' \in D'}$.

Let $d \geqslant 1$. A set~$C \subseteq \nats^d$ is \emph{linear} if there are vectors~$\vec{v}_0, \ldots, \vec{v}_k \in \nats^d$ such that 
\[
C = \left\{ \vec{v}_0 + \sum\nolimits_{i=1}^k c_i\vec{v}_i \:\middle|\: c_i \in \nats \text{ for } i=1,\ldots, k \right\}.
\]
Furthermore, a subset of $\nats^d$ is \emph{semilinear} if it is a finite union of linear sets. 

\begin{proposition}[\cite{GS}]\label{propsemilinearclosure}
If $C,C' \subseteq \nats^d$ are semilinear, then so are $C\cup C'$, $C\cap C'$, $\nats^d \setminus C$, as well as $\nats^{d'} \cdot C$ and $C \cdot \nats^{d'}$ for every $d' \geqslant 1$. 
\end{proposition}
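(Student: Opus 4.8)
The plan is to route everything through the Ginsburg--Spanier characterization of semilinear sets as the Presburger-definable subsets of $\nats^d$, under which the Boolean operations become purely syntactic manipulations of formulas. I would first dispose of the cases that need no machinery at all. Closure under union is immediate from the definition, since a finite union of finite unions of linear sets is again a finite union of linear sets. The two lifting operations are almost as easy: if $C = \bigcup_j C_j$ with each $C_j$ linear, say $C_j = \{\vec v_0^j + \sum_{i=1}^{k_j} c_i \vec v_i^j \mid c_i \in \nats\}$, then $\nats^{d'} \cdot C_j$ is the linear set with base vector $\vec 0 \cdot \vec v_0^j \in \nats^{d'+d}$ and period vectors $\vec e_1 \cdot \vec 0, \ldots, \vec e_{d'} \cdot \vec 0$ (the $d'$ unit vectors of $\nats^{d'}$ padded with $d$ zeros) together with $\vec 0 \cdot \vec v_1^j, \ldots, \vec 0 \cdot \vec v_{k_j}^j$; hence $\nats^{d'} \cdot C = \bigcup_j (\nats^{d'} \cdot C_j)$ is semilinear, and symmetrically for $C \cdot \nats^{d'}$.

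For intersection and complement I would invoke the fact that a set $S \subseteq \nats^d$ is semilinear if and only if it is definable by a first-order formula $\varphi(x_0, \ldots, x_{d-1})$ over $(\nats, +, <)$ with constants, which is the content of \cite{GS}. The easy direction, which is all one needs to get started, expresses a linear set by $\exists c_1 \cdots \exists c_k \bigwedge_{l=0}^{d-1}\bigl(x_l = v_{0,l} + \sum_{i=1}^k c_i v_{i,l}\bigr)$ and a semilinear set as a disjunction of such formulas. Given defining formulas $\varphi_C$ and $\varphi_{C'}$ for semilinear $C, C'$, the sets $C \cap C'$, $C \cup C'$, and $\nats^d \setminus C$ are defined by $\varphi_C \wedge \varphi_{C'}$, $\varphi_C \vee \varphi_{C'}$, and $\neg \varphi_C$; these are again first-order formulas, so by the nontrivial direction of \cite{GS} the corresponding sets are semilinear.

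The main obstacle is precisely this nontrivial direction --- that every Presburger-definable set is semilinear --- since it subsumes quantifier elimination and, in particular, the statement that the complement of a semilinear set is semilinear, which is the whole difficulty repackaged. I would not reprove it but cite \cite{GS}. If a self-contained argument were wanted, one could instead follow Ginsburg and Spanier's original combinatorial route: prove closure under intersection directly, by observing that the set of nonnegative integer solutions of the linear Diophantine system equating two linear parametrizations is itself semilinear and that semilinear sets are closed under affine images, and then derive closure under complement by their more delicate argument. Everything else is routine bookkeeping with base and period vectors.
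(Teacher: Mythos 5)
The paper gives no proof of this proposition; it is imported wholesale from \cite{GS}, so there is nothing to diverge from. Your sketch is correct: the direct constructions for union and for $\nats^{d'}\cdot C$ and $C\cdot\nats^{d'}$ (padding base and period vectors and adding the unit vectors of the new coordinates as periods) are sound, and deferring intersection and complement to the Presburger characterization of \cite{GS} is exactly the standard route and matches what the paper itself relies on.
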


\subparagraph*{Finite Automata}

A (nondeterministic) finite automaton (\nfa)~$\aut = (Q,\Sigma, q_\init, \Delta, F)$ over $\Sigma$ consists of a finite set~$Q$ of states containing the initial state~$q_\init$, an alphabet~$\Sigma$, a transition relation~$\Delta \subseteq Q \times \Sigma \times Q$, and a set~$F \subseteq Q$ of accepting states.
The \nfa is deterministic (i.e.,~a \dfa) if for every state~$q \in Q$ and every letter~$a \in \Sigma$, there is at most one $q'\in Q$ such that $(q,a,q')$ is a transition of $\aut$.
A run of $\aut$ is a (possibly empty) sequence~$(q_0, w_0, q_1) (q_1, w_1, q_2) \cdots (q_{n-1}, w_{n-1}, q_n) $ of transitions with $q_0 = q_\init$. 
It processes the word~$w_0 w_1 \cdots w_{n-1} \in \Sigma^*$. The run is \emph{accepting} if it is either empty and the initial state is accepting or if it is nonempty and $q_n$ is accepting. 
The language~$L(\aut)$ of $\aut$ contains all finite words $w \in \Sigma^*$ such that $\aut$ has an accepting run processing $w$.

\subparagraph*{Parikh Automata}
Let $\Sigma$ be an alphabet, $d \geqslant 1$, and $D$ a finite subset of $\nats^d$. 
Furthermore, let $w = (a_0,\vec{v}_0) \cdots (a_{n-1},\vec{v}_{n-1})$ be a word over $\Sigma \times D$. 
The $\Sigma$-projection of $w$ is $\projsigma(w) = a_0 \cdots a_{n-1} \in\Sigma^*$ and its \emph{extended Parikh image} is $\parikhimage(w)=\sum_{j=0}^{n-1} \vec{v}_j \in \nats^d$ with the convention~$\parikhimage(\epsilon) = \vec{0}$, where $\vec{0}$ is the $d$-dimensional zero vector.

A \emph{Parikh automaton} (\pa) is a pair $(\aut, C)$ such that $\aut$ is an \nfa over $\Sigma \times D$ for some input alphabet~$\Sigma$ and some finite $D \subseteq \nats^d$ for some $d \geqslant 1$, and $C \subseteq \nats^d$ is semilinear. 
The language of $(\aut, C)$ consists of the $\Sigma$-projections of words~$w \in L(\aut)$ whose extended Parikh image is in $C$, i.e.,
\[
L(\aut, C) = \set{ \projsigma(w) \mid w \in L(\aut) \text{ with } \parikhimage(w)\in C }.
\]
The automaton~$(\aut, C)$ is deterministic, if for every state~$q$ of $\aut$ and every $a \in \Sigma$, there is at most one pair~$(q', \vec{v}) \in Q \times D$ such that $(q,(a,\vec{v}),q')$ is a transition of $\aut$. Note that this definition does \emph{not} coincide with $\aut$ being deterministic:
As mentioned above, $\aut$ accepts words over $\Sigma\times D$ while $(\aut, C)$ accepts words over $\Sigma$.
Therefore, determinism is defined with respect to $\Sigma$ only. 

Note that the above definition of $L(\aut, C)$ coincides with the following alternative definition via accepting runs:
A run $\run$ of $(\aut, C)$ is a run \[\rho = (q_0, (a_0, \vec{v}_0), q_1) (q_1, (a_1, \vec{v}_1), q_2) \cdots (q_{n-1}, (a_{n-1}, \vec{v}_{n-1}), q_{n})\] of $\aut$. 
We say that $\rho$ \emph{processes} the word~$a_0 a_1 \cdots a_{n-1}\in \Sigma^*$, i.e., the $\vec{v}_j$ are ignored, and that $\rho$'s extended Parikh image is $\sum_{j=0}^{n-1} \vec{v}_j$.
The run is accepting, if it is either empty and both the initial state of $\aut$ is accepting and the zero vector (the extended Parikh image of the empty run) is in $C$, or if it is nonempty, $q_n$ is accepting, and $\rho$'s extended Parikh image is in $C$.
Finally, $(\aut, C)$ accepts $w \in \Sigma^*$ if it has an accepting run processing $w$.

\begin{example}\label{example_intro}
Consider the deterministic \pa~$(\aut, C)$ with $\aut$ in Figure~\ref{fig_example_intro} and $
C = \set{(n,n) \mid n\in\nats} \cup \set{(n,2n) \mid n \in\nats}$. 
It accepts the language~$\set{a^nb^n \mid n\in\nats} \cup \set{a^nb^{2n} \mid n \in\nats}$.
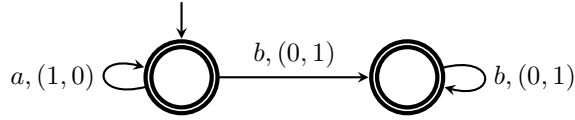
\begin{figure}
    \centering
    \begin{tikzpicture}[ultra thick]
    
    \node[state,accepting] (1) at (0,0) {};
    \node[state,accepting] (2) at (3,0) {};
    
    \path[-stealth]
    (0,1) edge (1)
    (1) edge[loop left] node[left] {$a,(1,0)$} ()
    (1) edge node[above] {$b,(0,1)$} (2)
    (2) edge[loop right] node[right] {$b,(0,1)$} ();
    
    \end{tikzpicture}
    \caption{The automaton for Example~\ref{example_intro}.}
    \label{fig_example_intro}
\end{figure}
\end{example}

A cycle is a nonempty finite run infix
\[(q_0, w_0, q_1) (q_1, w_1, q_2) \cdots (q_{n-1}, w_{n-1}, q_n)(q_n, w_n, q_0)\]
starting and ending in the same state and such that the $q_j$ are pairwise different. 
Note that every run infix containing at least $n$ transitions contains a cycle, where $n$ is the number of states of the automaton.
Many of our proofs rely on the following shifting argument, which has been used before to establish inexpressibility results for Parikh automata~\cite{CFM11}.

\begin{remark}
\label{remark:shifting}
Let $\rho_0 \rho_1 \rho_2 \rho_3$ be a run of a \pa such that $\rho_1$ and $\rho_3$ are cycles starting in the same state.
Then, $\parikhimage(\rho_0 \rho_1 \rho_2 \rho_3) = \parikhimage(\rho_0 \rho_2 \rho_1\rho_3) = \parikhimage(\rho_0 \rho_1 \rho_3 \rho_2 )$.
Furthermore, all three runs end in the same state and visit the same set of states (but maybe in different orders).
\end{remark}

% \subparagraph{Two-counter Machines and Parikh Automata}
% \label{subpara_minsky}

%%%%%%%%%%%%%%%%%%%%%%%%%%%%%%%%%%%%%%%%%%%%%%%%%%%%%%%%%%%%%%%%
%%%%%%%%%%%%%%%%%%%%%%%%%%%%%%%%%%%%%%%%%%%%%%%%%%%%%%%%%%%%%%%%
%%%%%%%%%%%%%%%%%%%%%%%%%%%%%%%%%%%%%%%%%%%%%%%%%%%%%%%%%%%%%%%%
%%%%%%%%%%%%%%%%%%%%%%%%%%%%%%%%%%%%%%%%%%%%%%%%%%%%%%%%%%%%%%%%
\section{Parikh Automata over Infinite Words}
\label{sec_infinitewords}

In this section, we introduce Parikh automata over infinite words by lifting safety, reachability, Büchi, and co-Büchi acceptance from finite automata to Parikh automata.
Recall that a Parikh automaton on finite words accepts if the last state of the run is accepting and the extended Parikh image of the run is in the semilinear set, i.e., both events are synchronized.
For reachability and Büchi acceptance it is natural to consider both a synchronous and an asynchronous variant while for safety and co-Büchi there is only a synchronous variant. 

All these automata have the same format as Parikh automata on finite words, but are now processing infinite words.
Formally, consider $(\aut, C)$ with $\aut = (Q, \Sigma \times D, q_\init, \Delta, F)$.
Fix an infinite run~$(q_0, w_0, q_1)(q_1, w_1, q_2)(q_2, w_2, q_3) \cdots$ of $\aut$ with $q_0 = q_\init$ (recall that each $w_j$ is in $\Sigma \times D$), which we say \emph{processes} $\projsigma(w_0w_1w_2\cdots)$.
\begin{itemize}

    \item The run is \emph{safety accepting} if $\parikhimage(w_0\cdots w_{n-1}) \in C$ and $q_n \in F$  for all $n\geqslant 0$.
    
    \item The run is \emph{synchronous reachability accepting} if $\parikhimage(w_0\cdots w_{n-1}) \in C$ and $q_n \in F$ for some $n \geqslant 0$.
    
    \item The run is \emph{asynchronous reachability accepting} if $\parikhimage(w_0\cdots w_{n-1}) \in C$ for some $n \geqslant 0$ and $q_{n'} \in F$ for some $n'\geqslant 0$.
     
     \item The run is \emph{synchronous Büchi accepting} if $\parikhimage(w_0 \cdots w_{n-1}) \in C$ and $q_n \in F$ for infinitely many $n \geqslant 0$.
    
    \item The run is \emph{asynchronous Büchi accepting} if $\parikhimage(w_0 \cdots w_{n-1}) \in C$ for infinitely many $n \geqslant 0$ and $q_{n'} \in F$ for infinitely many $n' \geqslant 0$.
    
    \item The run is \emph{co-Büchi accepting} if there is an $n_0$ such that $\parikhimage(w_0 \cdots w_{n-1}) \in C$ and $q_n \in F$ for every $n \geqslant n_0$.
    
\end{itemize}
As mentioned before, we do not distinguish between synchronous and asynchronous co-Büchi acceptance, as these definitions are equivalent.
Also, note that all our definitions are conjunctive in the sense that acceptance requires visits to accepting states \emph{and} extended Parikh images in $C$.
Thus, e.g., reachability and safety are not dual on a syntactic level. 
Nevertheless, we later prove dualities on a semantic level.

Similarly, one can easily show that a disjunctive definition is equivalent to our conjunctive one: 
One can reflect in the extended Parikh image of a run prefix whether it ends in an accepting state%(see the proof of Theorem~\ref{theorem_synchvsasynch})
and then encode acceptance in the semilinear set.
So, any given Parikh automaton~$(\aut,C)$ (with disjunctive or conjunctive acceptance) can be turned into another one~$(\aut', C')$ capturing acceptance in $(\aut, C)$ by Parikh images only. 
So, with empty (full) set of accepting states and $C'$ mimicking disjunction (conjunction), it is equivalent to the original automaton with disjunctive (conjunctive) acceptance.

Now, the language~$\Lall(\aut, C)$ of a safety Parikh automaton (\spa)~$(\aut, C)$ contains those words~$w\in \Sigma^\omega$ such that $(\aut, C)$ has a safety accepting run processing $w$.
Similarly, we define the languages
\begin{itemize}
    \item $\Lsexists(\aut, C)$ of synchronous reachability Parikh automata (\srpa),
    \item $\Laexists(\aut, C)$ of asynchronous reachability Parikh automata (\arpa),
    \item $\LsBuchi(\aut, C)$ of synchronous Büchi Parikh automata (\abpa),
    \item $\LaBuchi(\aut, C)$ of asynchronous Büchi Parikh automata (\sbpa), and
    \item $\LcoBuchi(\aut, C)$ of co-Büchi Parikh automata (\cpa).
\end{itemize}
Determinism for all types of automata is defined as for Parikh automata on finite words.
Unless explicitly stated otherwise, every automaton is assumed to be nondeterministic.

\begin{example}
\label{example_omega}
Let $\aut$ be the \dfa shown in Figure~\ref{figure_balanced} and let $C = \set{(n,n) \mid n \in\nats}$ and $\overline{C} = \set{(n,n') \mid n\neq n'} = \nats^2 \setminus C$.

Recall that a finite word~$w$ is $(a,b)$-balanced if $\occ{w}{a} = \occ{w}{b}$, i.e., the number of $a$'s and $b$'s in $w$ is equal. 
The empty word is $(a,b)$-balanced and every odd-length word over $\set{a,b}$ is $(a,b)$-unbalanced.

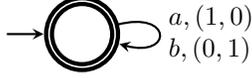
\begin{figure}
    \centering
    \begin{tikzpicture}[ultra thick]
    
    \node[state,accepting] (1) at (0,0) {};

    \path[-stealth]
    (-1,0) edge (1)
    (1) edge[loop right] node[right, align=left] {$a,(1,0)$\\$b,(0,1)$} ()
    ;
    
    \end{tikzpicture}
    \caption{The automaton for Example~\ref{example_omega}.}
    \label{figure_balanced}
\end{figure}
\begin{enumerate}

    \item When interpreting $(\aut, C)$ as a \pa, it accepts the language of finite $(a,b)$-balanced words; when interpreting $(\aut, \overline{C})$ as a \pa, it accepts the language of finite $(a,b)$-unbalanced words.
    
    \item When interpreting $(\aut, C)$ as an \arpa or \srpa, it accepts the language of infinite words that have an $(a,b)$-balanced prefix; when interpreting $(\aut, \overline{C})$ as an \arpa or \srpa, it accepts the language of infinite words that have an $(a,b)$-unbalanced prefix. Note that both languages are universal, as the empty prefix is always $(a,b)$-balanced and every odd-length prefix is $(a,b)$-unbalanced.
    
    \item When interpreting $(\aut, C)$ as an \spa, it accepts the language of infinite words that have only $(a,b)$-balanced prefixes; when interpreting $(\aut, \overline{C})$ as an \spa, it accepts the language of infinite words that have only $(a,b)$-unbalanced prefixes. Here, both languages are empty, which follows from the same arguments as for universality in the previous case.
    
    \item When interpreting $(\aut, C)$ as an \abpa or \sbpa, it accepts the language of infinite words with infinitely many $(a,b)$-balanced prefixes; when interpreting $(\aut, \overline{C})$ as an \abpa or \sbpa, it accepts the language of infinite words with infinitely many $(a,b)$-unbalanced prefixes.
    The latter language is universal, as every odd-length prefix is unbalanced.
    
    \item When interpreting $(\aut, C)$ as a \cpa, it accepts the language of infinite words such that almost all prefixes are $(a,b)$-balanced; when interpreting $(\aut, \overline{C})$ as a \cpa, it accepts the language of infinite words such that almost all prefixes are $(a,b)$-unbalanced. 
    Again, the former language is empty.
\end{enumerate}
\end{example}

Let $(\aut, C)$ be a Parikh automaton. We say that a run prefix is an $F$-prefix if it ends in an accepting state of $\aut$, a $C$-prefix if its extended Parikh image is in $C$, and an $FC$-prefix if it is both an $F$-prefix and a $C$-prefix.
Note that both asynchronous acceptance conditions are defined in terms of the existence of $F$-prefixes and $C$-prefixes and all other acceptance conditions in terms of the existence of $FC$-prefixes.

\begin{remark}
\srpa and \arpa (\spa, \sbpa and \abpa, \cpa) are strictly more expressive than $\omega$-regular reachability  (safety, Büchi, co-Büchi) automata. 
Inclusion follows by definition while strictness is witnessed by the  languages presented in Example~\ref{example_omega}.
\end{remark}

%%%%%%%%%%%%%%%%%%%%%%%%%%%%%%%%%%%%%%%%%%%%%%%%%%%%%%%%%%%%%%%%
%%%%%%%%%%%%%%%%%%%%%%%%%%%%%%%%%%%%%%%%%%%%%%%%%%%%%%%%%%%%%%%%
%%%%%%%%%%%%%%%%%%%%%%%%%%%%%%%%%%%%%%%%%%%%%%%%%%%%%%%%%%%%%%%%
%%%%%%%%%%%%%%%%%%%%%%%%%%%%%%%%%%%%%%%%%%%%%%%%%%%%%%%%%%%%%%%%
\section{Expressiveness}
\label{sec:expressiveness}

In this section, we study the expressiveness of the various types of Parikh automata on infinite words introduced above, by comparing synchronous and asynchronous variants, deterministic and nondeterministic variants, and the different acceptance conditions.

\begin{remark}
\label{remark:completeness}
In this, and only this, section, we consider only reachability Parikh automata that are complete in the following sense: For every state~$q$ and every letter~$a$ there is a vector~$\vec{v}$ and a state~$q'$ such that $(q,(a,\vec{v}),q')$ is a transition of $\aut$, i.e., every letter can be processed from every state. 
Without this requirement, one can express safety conditions by incompleteness, while we want to study the expressiveness of \myquot{pure} reachability automata.

Safety, Büchi, and co-Büchi automata can be assumed, without loss of generality, to be complete, as one can always add a nonaccepting sink to complete such an automaton without modifying the accepted language.
\end{remark}

We begin our study by comparing the synchronous and asynchronous variants of reachability and Büchi automata.
All transformations proving the following inclusions are effective and lead to a linear increase in the number of states and a constant increase in the dimension of the semilinear sets.

\begin{theorem}\hfill
\label{theorem_synchvsasynch}
\begin{enumerate}
    \item\label{theorem_synchvsasynch_reach} \arpa and \srpa are equally expressive.
    \item\label{theorem_synchvsasynch_detreach} Deterministic \arpa are strictly more expressive than  deterministic \srpa.
    \item\label{theorem_synchvsasynch_buchi} \abpa and \sbpa are equally expressive.
    \item\label{theorem_synchvsasynch_detbuchi} Deterministic \abpa are strictly more expressive than deterministic \sbpa.
\end{enumerate}
\end{theorem}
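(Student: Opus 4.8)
The plan is to handle the four parts separately, grouping the reachability and Büchi cases since they are structurally analogous.

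For parts~\ref{theorem_synchvsasynch_reach} and~\ref{theorem_synchvsasynch_buchi} (nondeterministic equivalence), one inclusion is trivial: a synchronous accepting run is also asynchronous accepting (reaching an $FC$-prefix gives separately an $F$-prefix and a $C$-prefix, once or infinitely often). For the converse, given an asynchronous automaton $(\aut,C)$ with $C \subseteq \nats^d$, I would build a synchronous automaton that uses nondeterminism to \emph{guess} a point at which an $F$-prefix has been seen and a (possibly different) point at which a $C$-prefix has been seen, and then certify both simultaneously. Concretely, add two fresh counter coordinates (dimensions $d+1$ and $d+2$): the automaton nondeterministically decides, at some visit to an accepting state of $\aut$, to ``commit'' and from then on pumps a $1$ into coordinate $d+1$ on a designated transition; similarly, once the run has nondeterministically ``confirmed'' that the prefix read so far has extended Parikh image in $C$ — which we can track by moving to a parallel copy of $\aut$ — it starts pumping coordinate $d+2$. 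The new semilinear set is $\nats^{d} \cdot \{(m,m') \mid m \geq 1, m' \geq 1\}$ intersected appropriately; here we use Proposition~\ref{propsemilinearclosure} for closure of semilinear sets under the needed operations, and the fact that $C$ itself need not be re-checked in the synchronous run because the switch to the parallel copy already witnessed membership. The subtlety is that in the synchronous Büchi case we need $F$-prefixes and $C$-prefixes infinitely often \emph{simultaneously} in the new automaton, which the fresh pumped coordinates arrange by being incremented cofinitely often after the commit. This yields a linear blow-up in states and a constant ($+2$) increase in dimension, as claimed. For the reachability case the same construction works with a single ``reach a state where both have been committed'' target.

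For parts~\ref{theorem_synchvsasynch_detreach} and~\ref{theorem_synchvsasynch_detbuchi} (deterministic strictness), the inclusion direction again follows because the synchronous-to-asynchronous simulation above is deterministic when the input automaton is (no guessing is needed for that direction). For strictness, I would exhibit a concrete language separating the classes. A natural candidate, in the spirit of Example~\ref{example_omega}, is the language $L$ over $\{a,b\}$ of infinite words having \emph{some} prefix that is $(a,b)$-balanced \emph{and} containing at least one $b$ overall — or, more robustly, a language forcing the single deterministic run to identify a balanced prefix while the acceptance (the $F$-visit) must be ``delayed'' past that point in a way the deterministic synchronous condition cannot express. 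The deterministic \arpa\ can accept such a language by checking the Parikh condition at the balanced prefix and the state condition later (or earlier), but a deterministic \srpa\ has a single run and must hit an $FC$-prefix; using the shifting argument of Remark~\ref{remark:shifting} on a cycle that changes the balance, one shows no choice of accepting states makes the synchronous condition capture $L$. For the Büchi version the separating language would instead require infinitely many balanced prefixes together with a state-visit pattern that cannot be synchronized deterministically.

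The main obstacle I anticipate is the strictness proofs for parts~\ref{theorem_synchvsasynch_detreach} and~\ref{theorem_synchvsasynch_detbuchi}: picking a witness language that is simultaneously (i) accepted by a deterministic asynchronous automaton — which constrains it to be quite simple — and (ii) provably \emph{not} accepted by any deterministic synchronous automaton, for which one needs a clean pumping/shifting argument that rules out \emph{all} choices of accepting-state sets and semilinear sets, not just the obvious one. I expect the argument to hinge on the fact that in a deterministic automaton the run on a given input is fixed, so a cycle that shifts the Parikh vector off $C$ (via Remark~\ref{remark:shifting}) can be inserted to destroy any putative $FC$-prefix while preserving membership of the input in the target language; the nondeterministic and asynchronous models evade this because they have the freedom to witness $F$ and $C$ at different times. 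The remaining parts are routine once the two-extra-coordinate gadget is set up carefully.
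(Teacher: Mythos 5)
There is a genuine gap in the direction you call trivial. Observing that a synchronous accepting run is also asynchronous accepting only proves $\Lsexists(\aut,C)\subseteq\Laexists(\aut,C)$ for \emph{one and the same} automaton; it does not produce an \arpa whose language \emph{equals} $\Lsexists(\aut,C)$. In general the asynchronous reading of the same automaton accepts strictly more words: a run with an $F$-prefix at one time and a $C$-prefix at a different time is asynchronously accepting but synchronously rejecting. The paper therefore gives a real construction for \srpa $\rightarrow$ \arpa (and \sbpa $\rightarrow$ \abpa): it appends one coordinate to the transition vectors whose parity records whether the current prefix ends in an accepting state, makes every state accepting, and folds the $F$-condition into the semilinear set (after first arranging that acceptance is always witnessed by a nonempty prefix). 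This construction is also what yields the inclusions in parts~\ref{theorem_synchvsasynch_detreach} and~\ref{theorem_synchvsasynch_detbuchi}, because it preserves determinism; your justification there (``the simulation above is deterministic'') refers to a simulation you never actually defined.

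For \arpa $\rightarrow$ \srpa your commit-and-guess idea matches the paper's in spirit, but two details would fail as written. First, you claim $C$ ``need not be re-checked because the switch to the parallel copy already witnessed membership''. A Parikh automaton cannot inspect its counters during a run, so the switch is an unverifiable guess; the paper instead \emph{freezes} the counters at the guessed moment (all later vectors are $\vec{0}$) and keeps $C$ in the final test, so a wrong guess is rejected. With a set of the form $\nats^{d}\cdot\set{(m,m')\mid m,m'\geqslant 1}$ the original constraint is never checked at all. Second, for Büchi a single commit at one visit to $F$ certifies only one $F$-visit, whereas asynchronous Büchi acceptance requires infinitely many; the paper instead uses a flag that is raised at \emph{every} $F$-visit and nondeterministically lowered at guessed $C$-positions, with exactly the ``just lowered'' states accepting. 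Finally, on the separations: cycle-shifting via Remark~\ref{remark:shifting} is indeed the right tool, but your first candidate language does not separate (over $\set{a,b}$, having a nonempty $(a,b)$-balanced prefix already forces an occurrence of $b$, and that language \emph{is} accepted by a deterministic \srpa); the paper's witness decouples the two events with a third marker letter ($(a,b)$-balanced prefix \emph{and} an occurrence of $c$). Moreover, the deterministic Büchi separation is much harder than a single shift: the paper builds an infinite inductive sequence of run prefixes related by a covering relation to control \emph{all} $FC$-prefixes simultaneously.
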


Due to the equivalence of synchronous and asynchronous (nondeterministic) reachability Parikh automata, we drop the qualifiers whenever possible and just speak of reachability Parikh automata (\rpa). 
We do the same for (nondeterministic) Büchi Parikh automata~(\bpa).

Next, we compare the deterministic and nondeterministic variants for each acceptance condition. 
Note that all separations are as strong as possible, i.e., for reachability and Büchi we consider deterministic asynchronous automata, which are more expressive than their synchronous counterparts (see Theorem~\ref{theorem_synchvsasynch}).

\begin{theorem}\hfill
\label{theorem_detvsnondet}
\begin{enumerate}
    \item \label{theorem_detvsnondet_reach} Nondeterministic \rpa are strictly more expressive than deterministic \arpa.
    
    \item \label{theorem_detvsnondet_safety} Nondeterministic \spa are strictly more expressive than deterministic \spa.
    
    \item \label{theorem_detvsnondet_buchi} Nondeterministic \bpa are strictly more expressive than deterministic \abpa.
    
    \item \label{theorem_detvsnondet_cobuchi} Nondeterministic \cpa are strictly more expressive than deterministic \cpa.
\end{enumerate}
\end{theorem}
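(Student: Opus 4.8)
The plan is to prove all four strictness claims by exhibiting, for each acceptance condition, a single language that is accepted by a nondeterministic automaton of that type but by no deterministic one. Since the inclusions (deterministic $\subseteq$ nondeterministic) are immediate, all the work is in the lower bounds, and I would aim for a uniform witness language wherever possible. A natural candidate is built from the $(a,b)$-balanced/unbalanced predicate of Example~\ref{example_omega}, since that is exactly where the shifting argument of Remark~\ref{remark:shifting} bites. For reachability (item~\ref{theorem_detvsnondet_reach}) I would take a language like $L = \{w \in \{a,b,c\}^\omega \mid w \text{ has a prefix } u\,c \text{ with } u \in \{a,b\}^* \text{ and } u \text{ is } (a,b)\text{-balanced}\}$ — or, to avoid the completeness subtlety of Remark~\ref{remark:completeness}, a variant where the guess is "there exist two positions $i<j$ with the infix between them balanced and a $c$ right after $j$". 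A nondeterministic \rpa guesses the relevant prefix boundary and checks balancedness with two counters; I then argue no deterministic \arpa can do this, because a deterministic automaton must commit its counter increments letter-by-letter, and a shifting/pumping argument (Remark~\ref{remark:shifting}) applied to the unique run on a carefully chosen word forces it to either accept a word not in $L$ or reject one in $L$.

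The safety case (item~\ref{theorem_detvsnondet_safety}) is the one I expect to be most delicate, because safety acceptance is very restrictive: every prefix must simultaneously hit $F$ and land in $C$. Here I would look for a language where nondeterminism lets the automaton "choose which of two threats to monitor" — e.g. $L = \{w \mid \text{every prefix is } (a,b)\text{-balanced}\} \cup \{w \mid \text{every prefix is } (a,c)\text{-balanced}\}$, or rather a nontrivial version of this using a first letter to still leave a genuine choice down the line. A nondeterministic \spa picks which disjunct to enforce at the first step and then runs a single safety check; a deterministic \spa would have to enforce a safety condition that is the "intersection-like" behaviour of both, and one shows via the shifting argument that any deterministic safety run that is correct on all members of the first disjunct must also (wrongly) stay alive on some word outside $L$. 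If a pure union is too easy to also recognize deterministically, I would instead encode the choice inside the semilinear set and use an explicit adversary word construction.

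For Büchi (item~\ref{theorem_detvsnondet_buchi}) and co-Büchi (item~\ref{theorem_detvsnondet_cobuchi}) I would reuse the "infinitely many balanced prefixes" resp. "almost all prefixes balanced" predicates, combined with a nondeterministically guessed mode (say, governed by the first letter or by a sentinel letter appearing once). For Büchi: $L = \{w \mid w \text{ has infinitely many } (a,b)\text{-balanced prefixes}\} \cup \{w \mid w \text{ has infinitely many } (a,c)\text{-balanced prefixes}\}$ — a nondeterministic \bpa guesses which, a deterministic \abpa cannot, again by pumping on a word that alternates long blocks favouring the $(a,b)$-count then the $(a,c)$-count so that no single deterministic counter strategy can witness infinitely many hits in $C$ on all members of $L$ without also doing so on some non-member. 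The co-Büchi case is dual and handled by the same template with "almost all" replacing "infinitely many". In each case the argument has two ingredients: (i) a short construction of the nondeterministic automaton, and (ii) a lower bound in which one assumes a deterministic automaton, considers its unique run on a family of inputs, locates two cycles on the same state using that the state set is finite, and invokes Remark~\ref{remark:shifting} to permute cycle segments and derive a contradiction with correctness. The main obstacle, as noted, is the safety case: unlike the others it has no "infinitely often" slack to exploit, so the witness language and the adversary word must be chosen so that a single shifting step already breaks any purported deterministic safety automaton; I would be prepared to fall back on a more hands-on incompressibility argument there if the clean union language turns out to be deterministically recognizable.
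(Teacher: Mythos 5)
Your general toolkit (cycle-finding plus the shifting argument of Remark~\ref{remark:shifting}, and union languages where nondeterminism picks the disjunct) is the right one, but several of your concrete witness languages fail, two of them fatally. For item~\ref{theorem_detvsnondet_reach}, the language \myquot{$w$ has a prefix $uc$ with $u\in\set{a,b}^*$ balanced} is accepted by a \emph{deterministic \srpa}: count $a$'s and $b$'s until the first $c$, add a third coordinate that becomes $1$ exactly on the transition reading the first $c$ and freeze the first two coordinates from then on, make all states accepting, and take $C=\set{(n,n,1)\mid n\in\nats}$. A single balance event synchronized with a letter is not enough to defeat determinism; the paper's witness~$\doublebal$ instead demands \emph{two independent} balance events ($\occ{w_1}{a}=\occ{w_1}{b}>0$ and $\occ{w_2}{c}=\occ{w_2}{d}>0$ for possibly different prefixes $w_1,w_2$), which a nondeterministic automaton handles by guessing when to freeze one counter pair but a deterministic \arpa, having only one existentially quantified $C$-prefix, cannot. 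For item~\ref{theorem_detvsnondet_buchi}, your union $\set{w\mid\text{inf.\ many $(a,b)$-balanced prefixes}}\cup\set{w\mid\text{inf.\ many $(a,c)$-balanced prefixes}}$ is definitively broken: each disjunct is accepted by a deterministic \sbpa, and deterministic \sbpa \emph{are} closed under union (Theorem~\ref{thm_omegaclosureprops}, via the product automaton with $C=C_1\cdot\nats^d\cup\nats^d\cdot C_2$), so the union is again deterministic. No union of deterministic synchronous Büchi conditions can witness this separation.

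Your items~\ref{theorem_detvsnondet_safety} and~\ref{theorem_detvsnondet_cobuchi} are also degenerate as literally stated, since \myquot{every (resp.\ almost every) prefix is $(a,b)$-balanced} is the empty language (every odd-length prefix is unbalanced, cf.\ Example~\ref{example_omega}); switching to \myquot{unbalanced} rescues the template for safety and co-Büchi, where it coincides with the paper's counterexamples to union-closure of deterministic \spa and deterministic \cpa, but you neither commit to that version nor supply the shifting argument showing the union escapes determinism. The paper's actual route for items~\ref{theorem_detvsnondet_buchi} and~\ref{theorem_detvsnondet_cobuchi} is quite different and worth noting: it takes the single \emph{reachability}-type language~$\bal$ (some nonempty $(a,b)$-balanced prefix exists), which is nondeterministic-\bpa and nondeterministic-\cpa by freezing after the witness prefix, and observes that deterministic Büchi/co-Büchi (non)acceptance is invariant under shifting one cycle in the run on $a(a^nb^n)^\omega$, while such a shift creates a balanced prefix --- a one-line contradiction that sidesteps unions entirely. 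For safety the paper likewise avoids unions, using the block language $L'\cup\set{a,\$}^\omega$ where nondeterminism guesses which $a$-block exceeds the $b$-block.
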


After having separated deterministic and nondeterministic automata for all acceptance conditions, we now consider inclusions and separations between the different acceptance conditions.
Here, the picture is notably different than in the classical $\omega$-regular setting, as almost all classes can be separated.

\begin{theorem}
\label{thm_omegaseparations}
Every \rpa can be turned into an equivalent \bpa and into an equivalent \cpa. All other automata types are pairwise incomparable.
\end{theorem}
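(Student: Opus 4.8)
The plan is to split the statement into its positive part (the two inclusions $\rpa \subseteq \bpa$ and $\rpa \subseteq \cpa$) and its negative part (strictness of these two inclusions together with pairwise incomparability of the remaining pairs: \rpa vs.\ \spa, \spa vs.\ \bpa, \spa vs.\ \cpa, and \bpa vs.\ \cpa). The positive part I would do by one uniform construction, and the negative part by producing, for each of the eight relevant ordered pairs of classes, a separating language, with all inexpressibility claims routed either through an elementary topological observation or through the shifting lemma (Remark~\ref{remark:shifting}).

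\emph{The inclusions.} Given a complete \rpa $(\aut,C)$ over $\Sigma \times D$ with $D \subseteq \nats^d$, I would build a single automaton $(\aut',C')$ with $d{+}1$ counters and read it both as a \bpa and as a \cpa. It consists of a \emph{search part}, which is a copy of $\aut$ in which every transition leaves the new $(d{+}1)$-st component unchanged and a state is accepting iff it was accepting in $\aut$, glued to a \emph{locked part}, a single accepting state carrying a self-loop on every letter, each adding $(\vec{0},1)$; from every accepting state of the search part one adds, for every letter, a transition into the locked state adding $(\vec{0},0)$. Put $C' = C \cdot \nats$, which is semilinear by Proposition~\ref{propsemilinearclosure}. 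The point is that a run which moves into the locked part right after a prefix $\pi$ has, from that moment on, extended Parikh images $(\parikhimage(\pi),1),(\parikhimage(\pi),2),\dots$, all of which lie in $C'$ iff $\parikhimage(\pi) \in C$; and the move is enabled only from accepting states, so after the move every prefix is an $FC'$-prefix iff $\pi$ was an $FC$-prefix of $\aut$. Checking that $(\aut',C')$ read as a \bpa satisfies $\LsBuchi(\aut',C') = \Lsexists(\aut,C)$ and read as a \cpa satisfies $\LcoBuchi(\aut',C') = \Lsexists(\aut,C)$ is then a short case distinction: if an $\aut$-run has an $FC$-prefix the matching $\aut'$-run moves over immediately afterwards and becomes both Büchi- and co-Büchi-accepting, while an $\aut'$-run that stays in the search part forever produces no $FC'$-prefixes beyond those already present in $\aut$, and one that moves over after $\pi$ can be accepting only if $\parikhimage(\pi)\in C$, forcing $\pi$ to have been an $FC$-prefix (using completeness to extend the projected run over the rest of the input).

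\emph{The witnesses.} Writing $\balance{u} = \occ{u}{a} - \occ{u}{b}$ for $u \in \set{a,b}^*$, I would take $L_{\mathrm r} = \set{w \in \set{a,b}^\omega \mid \balance{u} > 0 \text{ for some prefix } u \text{ of } w}$, $L_{\mathrm s} = \set{w \mid \balance{u} \geq 0 \text{ for every prefix } u}$, $L_{\mathrm b} = \set{w \mid \balance{u} = 0 \text{ for infinitely many prefixes } u}$, and $L_{\mathrm c} = \set{w \mid \balance{u} \geq 0 \text{ for all but finitely many prefixes } u}$; each lies in its named class via the one-state automaton over $\set{a,b}$ with the two obvious counters and the appropriate semilinear set ($L_{\mathrm b}$ is exactly the language from Example~\ref{example_omega}). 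Because $L_{\mathrm r}$ is an \rpa-language, hence by the inclusions above also a \bpa- and a \cpa-language, showing that $L_{\mathrm r}$ is not an \spa-language simultaneously separates \rpa, \bpa and \cpa from \spa; showing that $L_{\mathrm s}$ is neither an \rpa-, a \bpa-, nor a \cpa-language gives the three reverse separations; showing $L_{\mathrm b}$ to be neither an \rpa- nor a \cpa-language gives strictness of the first inclusion and one half of the \bpa/\cpa incomparability; and the analogous facts for $L_{\mathrm c}$ give strictness of the second inclusion and the other half. The easy half of this bookkeeping is topological: a nondeterministic \spa-language is topologically closed (a König's-lemma argument over the finitely many length-$n$ safety-accepting run prefixes of a convergent sequence of accepted words), and an \rpa-language is open (a witnessing $FC$-prefix can, using completeness, be followed by any continuation); since $L_{\mathrm r}$ is not closed (e.g.\ $(ba)^n a^\omega \to (ba)^\omega$, and $(ba)^\omega \notin L_{\mathrm r}$) and $L_{\mathrm s},L_{\mathrm b},L_{\mathrm c}$ are not open ($L_{\mathrm b}$ contains $(ab)^\omega$ but not $(ab)^n a^\omega$, while $L_{\mathrm s},L_{\mathrm c}$ contain $a^\omega$ but not $a^n b^\omega$), this disposes of $L_{\mathrm r}\notin\spa$, $L_{\mathrm s}\notin\rpa$, $L_{\mathrm b}\notin\rpa$, $L_{\mathrm c}\notin\rpa$.

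\emph{The hard separations and the obstacle.} What remains is $L_{\mathrm s}\notin\bpa$, $L_{\mathrm s}\notin\cpa$, $L_{\mathrm b}\notin\cpa$ and $L_{\mathrm c}\notin\bpa$, which cannot be settled topologically since the relevant classes overlap topologically; here I would argue by contradiction from a hypothetical automaton using the shifting lemma. Fixing an accepted word with a strongly oscillating balance profile, such as $(a^nb^n)^\omega$ for $n$ larger than the state count, the accepting run contains, arbitrarily far out, cycles reading blocks of $a$'s and cycles reading blocks of $b$'s; since reordering two cycles that start in a common state preserves both the final state and the extended Parikh image of every run prefix containing both of them (Remark~\ref{remark:shifting}), swapping a suitable $a$-cycle past a suitable later $b$-cycle yields a run that is still Büchi- (resp.\ co-Büchi-) accepting --- the infinitely many (resp.\ cofinally many) $FC$-prefixes beyond both cycles survive --- but whose processed word now has the $b$'s pulled in front of the $a$'s at that position, so the balance dips below $0$ or is forced permanently away from $0$, contradicting membership. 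The crux, and the step I expect to be the main obstacle, is guaranteeing the existence of two swappable cycles \emph{rooted at the same state}, one $a$-heavy and one $b$-heavy, at a position where the swap provably breaks the acceptance condition; making the pigeonhole robust may well require slightly richer witness languages (for instance over a three-letter alphabet, or with a marker pinning down where the cycles occur) so that the run is forced to exhibit an $a$-heavy and a $b$-heavy cycle at a common state.
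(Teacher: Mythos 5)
Your overall architecture is sound: the freezing-style construction for the two inclusions is essentially the paper's construction, the bookkeeping that routes \bpa${}\not\subseteq{}$\spa and \cpa${}\not\subseteq{}$\spa through the inclusions plus a single non-closed \rpa-language is a legitimate (and slightly slicker) reorganization of what the paper does with K\H{o}nig's Lemma directly, and the open/closed observations for the four easy non-inclusions are correct. The problem is the block you yourself flag: the four remaining separations, and in particular the analogues of \bpa${}\not\subseteq{}$\cpa and \cpa${}\not\subseteq{}$\bpa (your $L_{\mathrm b}\notin{}$\cpa and $L_{\mathrm c}\notin{}$\bpa). For these, the single-swap shifting argument you sketch cannot work even in principle: swapping one pair of cycles via Remark~\ref{remark:shifting} perturbs the extended Parikh images of only \emph{finitely many} run prefixes, so it can never turn a word with infinitely many balanced prefixes into one with finitely many, nor force the balance to stay away from $0$ \myquot{permanently}; both target properties are tail properties of the processed word, and a contradiction requires infinitely many modifications or a different argument altogether. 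This is not a matter of choosing a richer alphabet or a marker, as you suggest; it is where the real work of the theorem lies. The paper handles \cpa${}\not\subseteq{}$\bpa by an inductive construction that performs a shift \emph{after each of infinitely many} $FC$-prefixes of a hypothetical Büchi-accepting run on $a(a^nb^n)^\omega$, maintaining invariants so that the limit run is still Büchi accepting (this works because each surviving $FC$-prefix is frozen from some stage on) while the limit word acquires infinitely many balanced prefixes. For \bpa${}\not\subseteq{}$\cpa it does not use shifting at all: it builds a word $b^{n_1}ab^{n_2}ab^{n_3}a\cdots$ by choosing each $n_{j+1}$ via a K\H{o}nig's-Lemma/finiteness argument on the tree of \myquot{non-bad} run prefixes, so that \emph{every} run of the hypothetical \cpa on the limit word has infinitely many non-$FC$-prefixes; a naive limit-of-shifts approach fails here because co-Büchi acceptance is not preserved under limits of runs.

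Two smaller points. First, your remaining pair $L_{\mathrm s}\notin{}$\bpa, $L_{\mathrm s}\notin{}$\cpa \emph{is} a genuine single-shift argument, but your witness and base word need care: with $L_{\mathrm s}=\{w\mid$ every prefix has at least as many $a$'s as $b$'s$\}$ and base word $a(a^nb^n)^\omega$, shifting a $b$-cycle of length $1$ only drags the balance down to $0$, which does not leave $L_{\mathrm s}$; you must either drop the leading $a$ (so block boundaries sit at balance $0$ and any shift goes negative) or use the paper's witness, the set of words all of whose nonempty prefixes are $(a,b)$-\emph{unbalanced}, where reaching balance $\leqslant 0$ forces a balanced prefix by discreteness. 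Second, since Section~\ref{sec:expressiveness} assumes reachability automata are complete, your \myquot{\rpa-languages are open} step is fine, but it should be stated as relying on that standing assumption (as the paper does via Remark~\ref{remark:completeness}). As it stands, the proposal proves the inclusions and six of the ten non-inclusions; the two infinitary separations are missing, and the sketched method cannot supply them.
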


Our separations between the different acceptance conditions are as strong as possible, e.g., when we show that not every \rpa has an equivalent \spa, we exhibit a \emph{deterministic} \srpa (the weakest class of \rpa) whose language is not accepted by any nondeterministic \spa (the strongest class of \spa).
The same is true for all other separations.

%%%%%%%%%%%%%%%%%%%%%%%%%%%%%%%%%%%%%%%%%%%%%%%%%%%%%%%%%%%%%%%%
%%%%%%%%%%%%%%%%%%%%%%%%%%%%%%%%%%%%%%%%%%%%%%%%%%%%%%%%%%%%%%%%
%%%%%%%%%%%%%%%%%%%%%%%%%%%%%%%%%%%%%%%%%%%%%%%%%%%%%%%%%%%%%%%%
%%%%%%%%%%%%%%%%%%%%%%%%%%%%%%%%%%%%%%%%%%%%%%%%%%%%%%%%%%%%%%%%
\section{Closure Properties}
\label{sec:closure}

In this section, we study the closure properties of Parikh automata on infinite words. 
We begin by showing that, for deterministic synchronous automata, reachability and safety acceptance as well as Büchi and co-Büchi acceptance are dual, although they are not syntactically dual due to all acceptance conditions being defined by a conjunction.
On the other hand, deterministic asynchronous automata can still be complemented, however only into nondeterministic automata.

\begin{theorem}
\label{theorem_detcomplementation}
\hfill
\begin{enumerate}
    
    \item \label{theorem_detcomplementation_sreach2safety} Let $(\aut, C)$ be a deterministic~\srpa. The complement of $\Lsexists(\aut, C)$ is accepted by a deterministic~\spa.

    \item \label{theorem_detcomplementation_areach2safety} Let $(\aut, C)$ be a deterministic~\arpa. The complement of $\Laexists(\aut, C)$ is accepted by an \spa, but not necessarily by a deterministic \spa.
    
    \item \label{theorem_detcomplementation_safety2reach} Let $(\aut, C)$ be a deterministic~\spa. The complement of $\Lall(\aut, C)$ is accepted by a deterministic~\srpa.
    
    \item \label{theorem_detcomplementation_sbuchi2cobuchi} Let $(\aut, C)$ be a deterministic~\sbpa. The complement of $\LsBuchi(\aut, C)$ is accepted by a deterministic~\cpa.

    \item \label{theorem_detcomplementation_abuchi2cobuchi} Let $(\aut, C)$ be a deterministic~\abpa. The complement of $\LaBuchi(\aut, C)$ is accepted by a \cpa, but not necessarily by a deterministic \cpa.

    \item \label{theorem_detcomplementation_cobuchi2buchi} Let $(\aut, C)$ be a deterministic~\cpa. The complement of $\LcoBuchi(\aut, C)$ is accepted by a deterministic~\sbpa.

\end{enumerate}
\end{theorem}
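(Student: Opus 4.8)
The plan is to exploit determinism. After completing the input automaton with a fresh non-accepting sink — which preserves determinism with respect to $\Sigma$ — every infinite word has a \emph{unique} run, so complementing the language amounts to negating the acceptance condition of that single run. Writing $q_0 q_1 q_2 \cdots$ for the run and $\vec p_n = \parikhimage(w_0 \cdots w_{n-1})$ for its prefix images, the negations are the expected ones: ``$\exists n$ with $q_n \in F \wedge \vec p_n \in C$'' becomes ``$\forall n$: $q_n \notin F \vee \vec p_n \notin C$'', and symmetrically for $\forall$/$\exists$ and for $\exists^{\infty}$/$\forall^{\infty}$. Hence parts~\ref{theorem_detcomplementation_sreach2safety},~\ref{theorem_detcomplementation_safety2reach},~\ref{theorem_detcomplementation_sbuchi2cobuchi}, and~\ref{theorem_detcomplementation_cobuchi2buchi} all reduce to one construction: turn the per-prefix predicate ``$q_n \notin F$ or $\vec p_n \notin C$'' into one of the same conjunctive shape ``$q'_n \in F' \wedge \vec p'_n \in C'$'', and then read the resulting automaton under the dual acceptance mode (reachability $\leftrightarrow$ safety, Büchi $\leftrightarrow$ co-Büchi).

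The heart of the proof is this construction, and the point to get right is that the semilinear set, which is evaluated uniformly at every prefix, cannot by itself depend on the current control state — yet it must somehow test ``$q_n \in F$''. I would keep the original transition structure (so determinism is preserved), declare \emph{all} states accepting ($F' := Q$), and encode everything into the counters: retain the original $d$ counters, so $\overline{C} := \nats^d \setminus C$ (semilinear by Proposition~\ref{propsemilinearclosure}) tests ``$\vec p_n \notin C$''; and add, for each state $v$, two counters $\mathrm{in}_v$ and $\mathrm{out}_v$ incremented on transitions entering, resp.\ leaving, $v$. Along any run of length $n$ one has $\mathrm{in}_v - \mathrm{out}_v = [q_n = v] - [q_0 = v]$, so $q_n$ is the unique $v$ with $\mathrm{in}_v - \mathrm{out}_v = 1$, or $q_0$ if all these differences vanish — a linear, hence semilinear, condition on the new Parikh image $\vec p'_n \in \nats^{d + 2|Q|}$. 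Consequently ``$q_n \in F$'' becomes semilinear, and we let $C'$ be the (semilinear) union of $\overline C \cdot \nats^{2|Q|}$ with $\{\vec p'_n : q_n \notin F\}$. Reading this automaton with safety acceptance gives part~\ref{theorem_detcomplementation_sreach2safety}, with reachability part~\ref{theorem_detcomplementation_safety2reach}, with co-Büchi part~\ref{theorem_detcomplementation_sbuchi2cobuchi}, and with Büchi part~\ref{theorem_detcomplementation_cobuchi2buchi}. One minor wrinkle: a deterministic \srpa need not be complete, so a word on which the run dies also lies in the complement; I would route such words into a fresh accepting sink $s_{\mathrm{win}}$ and add one flag coordinate forcing $s_{\mathrm{win}}$-prefixes into $C'$.

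For the asynchronous cases~\ref{theorem_detcomplementation_areach2safety} and~\ref{theorem_detcomplementation_abuchi2cobuchi}, the negated condition is a \emph{disjunction} of two independent conditions on the unique run: $\overline{\Laexists(\aut, C)}$ equals $L_F \cup L_C$ with $L_F = \{w : \forall n\ q_n \notin F\}$ and $L_C = \{w : \forall n\ \vec p_n \notin C\}$ (for part~\ref{theorem_detcomplementation_abuchi2cobuchi}, the same union with ``for almost all $n$''). Now $L_F$ is accepted by a deterministic \spa (take $F' := Q \setminus F$, $C' := \nats^d$) and $L_C$ by a deterministic \spa (take $F' := Q$, $C' := \overline C$), so a \emph{nondeterministic} \spa accepts $L_F \cup L_C$ by guessing, in its first transition, which disjunct to verify and then simulating the corresponding deterministic automaton (stuck runs are sent to an accepting sink in both branches); the co-Büchi case is identical. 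The remaining assertion — that these complements are \emph{not} always deterministic — I expect to be the main obstacle: I would fix a concrete $(\aut, C)$ for which no deterministic \spa (resp.\ \cpa) recognizes $L_F \cup L_C$, choosing $L_F$ and $L_C$ so that a deterministic device would have to commit irrevocably to one of the two safety (co-Büchi) requirements before it can tell which one $w$ satisfies. This is a fooling-sequence argument of the kind enabled by Remark~\ref{remark:shifting}, and it also shows, more generally, that deterministic safety and co-Büchi Parikh automata are not closed under union — in contrast with the deterministic synchronous complementations above.
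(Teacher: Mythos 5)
Your core constructions are sound and, in essence, the paper's own: the paper also complements the unique run of a deterministic automaton by reflecting membership of the current state in $F$ into the extended Parikh image (it uses a single extra component whose parity records \myquot{currently in $F$}, rather than your $2|Q|$ in/out components, but that is only an implementation difference), makes all states accepting, and reads the resulting conjunctive condition under the dual acceptance mode for items~\ref{theorem_detcomplementation_sreach2safety}, \ref{theorem_detcomplementation_safety2reach}, \ref{theorem_detcomplementation_sbuchi2cobuchi}, \ref{theorem_detcomplementation_cobuchi2buchi}; for items~\ref{theorem_detcomplementation_areach2safety} and \ref{theorem_detcomplementation_abuchi2cobuchi} it likewise writes the complement as the union of \myquot{no $F$-prefix} and \myquot{no $C$-prefix} (resp.\ \myquot{only finitely many}) and appeals to closure of \spa (\cpa) under nondeterministic union, which is your \myquot{guess at the first transition} construction.

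There is, however, a genuine gap: items~\ref{theorem_detcomplementation_areach2safety} and \ref{theorem_detcomplementation_abuchi2cobuchi} also assert that the complement is \emph{not necessarily} accepted by a \emph{deterministic} \spa (resp.\ \cpa), and you leave exactly this half to an unexecuted plan (\myquot{I would fix a concrete $(\aut,C)$ \ldots fooling-sequence argument}); no witness language and no lower-bound proof are given, so part of the statement remains unproven. No new combinatorics is needed here: the paper obtains it by duality from the other items. If the complement of every deterministic \arpa language were accepted by a deterministic \spa, then complementing that \spa via item~\ref{theorem_detcomplementation_safety2reach} would yield a deterministic \srpa for the original language, contradicting the strictness in Theorem~\ref{theorem_synchvsasynch}.\ref{theorem_synchvsasynch_detreach}; the Büchi case is analogous via item~\ref{theorem_detcomplementation_cobuchi2buchi} and Theorem~\ref{theorem_synchvsasynch}.\ref{theorem_synchvsasynch_detbuchi}.

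A smaller issue: your patch for incomplete deterministic \srpa does not work. If the unique run sees an $FC$-prefix and only afterwards dies, the word belongs to the complement, yet your safety automaton has already produced a prefix outside $C'$ before reaching $s_{\mathrm{win}}$ and therefore rejects. In fact no \spa at all can accept such complements: a one-state \srpa over $\set{a,b}$ with only an $a$-self-loop, an accepting initial state and $C=\set{0}$ has language $\set{a^\omega}$, and its complement (\myquot{contains a $b$}) is not \spa-recognizable by the K\H{o}nig-type argument from the proof of Theorem~\ref{thm_omegaseparations}. So the reachability items must be read with complete automata, which is what the paper implicitly does; your completion step is unproblematic only for the safety, Büchi, and co-Büchi inputs, where adding a nonaccepting sink preserves the language.
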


The positive results above are for deterministic automata. 
For nondeterministic automata, the analogous statements fail.

\begin{theorem}
\label{theorem_nondetcomplementation}
\hfill
\begin{enumerate}
    
    \item \label{theorem_nondetcomplementation_reach2safety} 
    There exists an \srpa~$(\aut, C)$ such that no \spa accepts
    the complement of ${\Lsexists(\aut, C)}$.
    
    \item \label{theorem_nondetcomplementation_safety2reach} 
    There exists an \spa~$(\aut, C)$ such that no \rpa accepts the complement of ${\Lall(\aut, C)}$.
    
    \item \label{theorem_nondetcomplementation_buchi2cobuchi} 
    There exists an \sbpa~$(\aut, C)$ such that no \cpa accepts the complement of ${\LsBuchi(\aut, C)}$.
    
    \item \label{theorem_nondetcomplementation_cobuchi2buchi} 
    There exists a \cpa~$(\aut, C)$ such that no \bpa accepts the complement of ${\LcoBuchi(\aut, C)}$.
    
    \end{enumerate}
\end{theorem}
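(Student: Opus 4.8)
The plan is to exhibit, for each of the four items, a single witness language whose automaton is easy to build but whose complement provably cannot be captured by the target class. The natural source of such witnesses is the balance function: writing $\balance{w} = \occ{w}{a} - \occ{w}{b}$ for words over $\set{a,b}$, the key combinatorial fact is that a Parikh automaton's semilinear test, combined with the shifting argument of Remark~\ref{remark:shifting}, cannot distinguish a run from one in which two same-state cycles are permuted. So I would look for a language $L$ such that (i) $L$ itself has a cheap automaton of the required kind, but (ii) the complement $\overline{L}$ forces the target automaton, on a cleverly chosen family of inputs, into two same-state cycles that can be shifted to produce a word that ought to change membership but cannot.

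For item~\ref{theorem_nondetcomplementation_reach2safety}: take $L \subseteq \set{a,b}^\omega$ to be something like ``some prefix is $(a,b)$-balanced and lies in a nonregular pattern'' — concretely, $L = \set{w : \text{some prefix } u \text{ of } w \text{ has } \occ{u}{a} = \occ{u}{b} \text{ and the next block is } \ldots}$, engineered so that $L$ is \srpa-recognizable (guess the prefix nondeterministically, verify balance synchronously) but $\overline{L}$ would require an \spa to maintain a nonregular invariant \emph{forever}. I expect the cleanest choice is actually to reuse the separating languages already constructed in Section~\ref{sec:expressiveness} for Theorem~\ref{thm_omegaseparations}: there it is shown that some deterministic \srpa has no equivalent \spa, and similarly across the other acceptance conditions. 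If in those separations the language is of the form $\overline{L}$ for an $L$ recognized by the dual kind of automaton, items~\ref{theorem_nondetcomplementation_reach2safety}--\ref{theorem_nondetcomplementation_cobuchi2buchi} follow immediately by taking complements. So my first move would be to check whether the four witnesses from Theorem~\ref{thm_omegaseparations} are already closed under this duality, invoking Theorem~\ref{theorem_detcomplementation} to relate $L$ and $\overline{L}$ at the deterministic level; the nondeterministic lower bound is then exactly the content of the non-inclusion in Theorem~\ref{thm_omegaseparations}.

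If that shortcut does not directly apply (e.g.\ the witness for one item is the ``wrong'' language), the fallback for each item is a direct argument: build the easy automaton for $L$ explicitly, then assume toward a contradiction that some automaton $\autb$ of the target class accepts $\overline{L}$, pick an input word $w^{(n)} = u_n v_n^\omega$ parametrized so that for large $n$ any accepting run of $\autb$ on $w^{(n)}$ must traverse two disjoint cycles in $u_n$ with a controllable difference in their contributions to the extended Parikh image, apply Remark~\ref{remark:shifting} to swap or pump the cycles, and observe that the resulting word $w'$ lies on the other side of the $L/\overline{L}$ boundary while $\autb$ still (non)accepts it. For the Büchi/co-Büchi items~\ref{theorem_nondetcomplementation_buchi2cobuchi} and~\ref{theorem_nondetcomplementation_cobuchi2buchi} the cycle to be pumped lives in the infinite periodic part, and one additionally uses that a \cpa must \emph{eventually} have all its prefixes satisfy the $FC$-condition, which is incompatible with the oscillating balance of a pumped tail.

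The main obstacle I anticipate is item~\ref{theorem_nondetcomplementation_safety2reach} and, symmetrically, the direction going \emph{into} reachability in item~\ref{theorem_nondetcomplementation_cobuchi2buchi}: here the target class (\rpa, resp.\ \bpa) is the \emph{more} robust one — reachability Parikh automata are closed under the operations where safety ones are not — so the impossibility must come from a genuinely global obstruction rather than a local pumping trick. The right tool is the completeness normalization of Remark~\ref{remark:completeness} together with the observation that an \rpa, once it has seen an $FC$-prefix, accepts regardless of the infinite future; hence $\overline{\Lall(\aut,C)}$, which typically depends on the entire infinite word (``some prefix fails the safety invariant'' versus ``all prefixes satisfy it''), cannot be of this ``finite-commitment'' shape when $C$ is chosen so that the safety invariant is a nonregular property that the complement cannot detect from any finite prefix without the semilinear set effectively deciding a non-semilinear set after shifting. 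Making that last sentence precise — pinning down the exact $L$ and the exact shifting configuration that breaks every candidate \rpa — is where the real work lies, and I would spend most of the effort there.
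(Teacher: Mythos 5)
Your proposal is on the right track for three of the four items, but it has a genuine gap at item~\ref{theorem_nondetcomplementation_reach2safety}, and that gap is not repairable by either of the two methods you put forward. For items~\ref{theorem_nondetcomplementation_buchi2cobuchi} and~\ref{theorem_nondetcomplementation_cobuchi2buchi} your shortcut is exactly what the paper does: the witness is $\bal$ (some nonempty $(a,b)$-balanced prefix), which is \rpa- and hence \bpa- and \cpa-recognizable, while its complement $\unbal$ is shown in Theorem~\ref{thm_omegaseparations} to be accepted by no \bpa and no \cpa. For item~\ref{theorem_nondetcomplementation_safety2reach} your fallback is also essentially the paper's argument (and less delicate than you fear): the witness is the \spa-language of Theorem~\ref{theorem_detvsnondet}.\ref{theorem_detvsnondet_safety}; a hypothetical \rpa for the complement accepts $(a^n\$)^{n+1}b^n\$^\omega$ via an $FC$-prefix that must be long (else, by completeness, it extends to an accepting run on a word of $L$), after which shifting an $a$-cycle lands in $L$ while acceptance is preserved.

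The problem is item~\ref{theorem_nondetcomplementation_reach2safety}. Your shortcut fails because the relevant separation witness from Theorem~\ref{thm_omegaseparations} does not have the right shape: the natural \srpa-language $\bal$ has complement $\unbal$, which \emph{is} accepted by a deterministic \spa, so complementing the known separations produces nothing. Your fallback fails too, for a structural reason: cycle-shifting does not preserve safety acceptance of a \emph{nondeterministic} \spa, because permuting two same-state cycles changes the extended Parikh images of all prefixes strictly between them, so the shifted run need not remain an accepting run, and (absent determinism) you cannot replace it by ``the'' run of the shifted word. The paper instead uses a counting (fooling-set) argument with a different witness: $L = \set{w \in \set{a,b}^\omega \mid \text{some infix } ba^nb \text{ occurs twice in } w}$, accepted by an \srpa that guesses the two occurrences and checks equality of the two block lengths via the semilinear set. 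For subsets $S \subseteq \set{1,\ldots,x}$ and the words $w_S = ba^{s_1}b\cdots a^{s_{|S|}}$ of length $O(x^2)$, any \spa for $\overline{L}$ must, after reading $w_S$, reach pairwise distinct (state, extended Parikh image) pairs for distinct $S$ — otherwise splicing runs accepts $w_S w_{\overline{S'}}b^\omega \notin \overline{L}$ — but only polynomially many such pairs (in the run length) are reachable, contradicting the $2^x$ lower bound. This information-theoretic obstruction, rather than commutativity of cycles, is the missing idea; without it (or something equivalent) item~\ref{theorem_nondetcomplementation_reach2safety} does not go through.
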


% \begin{theorem}
% \label{theorem_nondetcomplementation}
% \hfill
% \begin{enumerate}
    
%     \item \label{theorem_nondetcomplementation_reach2safety} 
%     There is an \srpa~$(\aut, C)$ such that the complement of ${\Lsexists(\aut, C)}$ is not accepted by any \spa.
    
%     \item \label{theorem_nondetcomplementation_safety2reach} 
%     There is an \spa~$(\aut, C)$ such that the complement of ${\Lall(\aut, C)}$ is not accepted by any \rpa.
    
%     \item \label{theorem_nondetcomplementation_buchi2cobuchi} 
%     There is an \sbpa~$(\aut, C)$ such that the complement of ${\LsBuchi(\aut, C)}$ is not accepted by any~\cpa.
    
%     \item \label{theorem_nondetcomplementation_cobuchi2buchi} 
%     There is a \cpa~$(\aut, C)$ such that the complement of ${\LcoBuchi(\aut, C)}$ is not accepted by any \bpa.
    
%     \end{enumerate}
% \end{theorem}

Next, we consider closure under union, intersection, and complementation of the various classes of Parikh automata on infinite words.
Notably, all nondeterministic (and some deterministic) classes are closed under union, the picture for intersection is more scattered, and we prove failure of complement closure for all classes. 
Again, this is in sharp contrast to the setting of classical Büchi automata, which are closed under all three Boolean operations.

\begin{table}
 
    \centering
    \begin{tabular}{lcccccccc}
\toprule
& \multicolumn{3}{c}{Closure} && \multicolumn{4}{c}{Decision Problems}\\
& $\cup$ & $\cap$ & $\overline{\phantom{x}}$ &\qquad{} & Nonemptiness & Universality & Model Check.\  & Games\\
\midrule
\rpa          & \yess & \yess & \noo && \NP-compl.\ & undec.\ & undec.\  &undec.\ \\
det.\ \arpa   & \noo & \noo & \noo && \NP-compl.\ & undec.\ & undec.\ &undec.\ \\
det.\ \srpa   & \yess & \noo & \noo && \NP-compl.\ & undec.\ & undec.\ &undec.\ \\
\midrule
\spa          & \yess & \yess & \noo && undec.\ & undec.\ & undec.\ & undec.\ \\
det.\ \spa    & \noo & \yess & \noo && undec.\ & \coNP-compl.\ & \coNP-compl.\ & undec.\ \\
\midrule
\bpa          & \yess & \noo & \noo && \NP-compl.\ & undec.\ & undec.\ &undec.\ \\
det.\ \abpa   & \quest & \noo & \noo && \NP-compl.\ & undec.\ & undec.\ &undec.\ \\
det.\ \sbpa   & \yess & \noo & \noo && \NP-compl.\ & undec.\ & undec.\ &undec.\ \\
\midrule
\cpa          & \yess & \yess & \noo && undec.\ & undec.\ & undec.\ & undec.\ \\
det.\ \cpa    & \noo & \yess & \noo && undec.\ & \coNP-compl.\ & \coNP-compl.\ & undec.\ \\
\toprule
\end{tabular}
    \caption{Closure properties and decidability of decision problems for Parikh automata on infinite words.}
    \label{table_results}
\end{table}

\begin{theorem}
\label{thm_omegaclosureprops}
The closure properties depicted in Table~\ref{table_results} hold.
\end{theorem}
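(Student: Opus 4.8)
The plan is to prove Table~\ref{table_results} one column at a time, reusing the complementation results (Theorems~\ref{theorem_detcomplementation} and~\ref{theorem_nondetcomplementation}) and the expressiveness separations (Theorems~\ref{theorem_synchvsasynch}--\ref{thm_omegaseparations}) as black boxes wherever possible.

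\textbf{Closure under complementation.} Every entry in this column is \noo. The nondeterministic cases follow directly from Theorem~\ref{theorem_nondetcomplementation}: for instance, the \srpa whose complement is accepted by no \spa certainly has a complement accepted by no \emph{deterministic} \spa either, and dually. One has to check that the four witnesses there cover all ten rows, chaining with Theorem~\ref{thm_omegaseparations} (e.g.\ an \rpa-language whose complement is no \spa is also, via \rpa $\subseteq$ \bpa and \rpa $\subseteq$ \cpa, a counterexample on the \bpa and \cpa rows). For the deterministic asynchronous rows one uses Theorem~\ref{theorem_detcomplementation}(\ref{theorem_detcomplementation_areach2safety}),(\ref{theorem_detcomplementation_abuchi2cobuchi}): the complement of a deterministic \arpa need not be a deterministic \spa, but to get non-closure of deterministic \arpa \emph{itself} under complement we instead complement the deterministic \srpa-language of Theorem~\ref{theorem_nondetcomplementation}(\ref{theorem_nondetcomplementation_reach2safety})-type constructions, landing outside even nondeterministic \arpa. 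I expect a little bookkeeping here but no real difficulty.

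\textbf{Closure under union.} The nondeterministic classes \rpa, \spa, \bpa, \cpa and the deterministic synchronous reachability class are \yess; the claim is a standard product/disjoint-union construction. For \rpa, \bpa take the disjoint union of the two automata with a fresh nondeterministic initial choice, and place the two semilinear sets side by side in disjoint coordinate blocks, using Proposition~\ref{propsemilinearclosure} to pad each $C_i$ with $\nats^{d_{1-i}}$ and take the union; asynchronous acceptance makes the $F$- and $C$-events independent, so this works cleanly. For \spa and \cpa one cannot use nondeterministic initial choice for union in the naive way—safety/co-Büchi are "universal" flavored—so instead I would run both automata in a synchronous product, track both Parikh images in disjoint blocks, and take $C = (C_1 \cdot \nats^{d_2}) \cup (\nats^{d_1}\cdot C_2)$ with $F = (F_1 \times Q_2) \cup (Q_1 \times F_2)$; for \cpa this gives the disjunction "eventually always in branch~1's good set OR eventually always in branch~2's", which is exactly union. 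For the deterministic \srpa case the synchronous product with the union trick above stays deterministic. The \noo entries—deterministic \arpa, deterministic \spa, deterministic \cpa—follow from the complementation dualities of Theorem~\ref{theorem_detcomplementation} together with the \noo intersection entries: e.g.\ deterministic \spa is closed under $\cap$ and complement-dual to deterministic \srpa, so if it were closed under $\cup$ it would be closed under $\cap$ via de Morgan applied to deterministic \srpa, contradicting the failure of $\cap$-closure for deterministic \srpa; alternatively one gives a direct two-language witness.

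\textbf{Closure under intersection.} Here \rpa, \spa, \cpa are \yess and \bpa, deterministic \arpa/\srpa/\abpa/\sbpa are \noo, while deterministic \spa and deterministic \cpa are \yess. For the \yess synchronous cases (\spa, \cpa, deterministic \spa, deterministic \cpa) the synchronous product with $C = C_1\cdot\nats^{d_2} \cap \nats^{d_1}\cdot C_2$ (semilinear by Proposition~\ref{propsemilinearclosure}) and $F = F_1\times F_2$ does it, and preserves determinism. For \rpa (asynchronous reachability), intersection is "branch~1 reaches an $FC$-prefix AND branch~2 reaches one"—run a product, and accept when \emph{both} components have, at some possibly different times, been in their accepting set with their Parikh image in $C_i$; this is expressible by adding one extra coordinate per component that flips to record "$C_i$ has been hit from an $F_i$-state" and letting the new semilinear set demand both flags, which is fine because reachability only cares about reaching the condition once. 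The main obstacle is the \noo cases, especially the \textbf{failure of $\cap$-closure for (nondeterministic) \bpa}: I would exhibit two \bpa-recognizable languages $L_1, L_2$ whose intersection cannot be accepted by any \bpa, using the shifting argument of Remark~\ref{remark:shifting} together with the fact—provable from the \NP{} nonemptiness algorithm for \bpa and pumping—that \bpa-languages satisfy a Parikh-flavored pumping/periodicity constraint on their infinitely-often-satisfied prefixes that two independent linear constraints over different letter blocks violate; a natural candidate is $L_1 = \{w : \text{infinitely many prefixes are }(a,b)\text{-balanced}\}$ intersected with an analogous $(c,d)$-condition, forcing a single run to synchronize two unboundedly-growing counter balances infinitely often, which a shifting argument rules out. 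The deterministic asynchronous reachability/Büchi \noo entries then follow by taking deterministic sublanguages of these witnesses, or via the complementation dualities; I expect this intersection column, and the \bpa row in particular, to be the real work of the proof.
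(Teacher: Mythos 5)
The central flaw is in your union construction for \spa and \cpa. A synchronous product with $F=(F_1\times Q_2)\cup(Q_1\times F_2)$ and $C=(C_1\cdot\nats^{d_2})\cup(\nats^{d_1}\cdot C_2)$ accepts exactly the words all of whose prefixes satisfy \myquot{($F_1$-prefix or $F_2$-prefix) and ($C_1$-prefix or $C_2$-prefix)}; this is strictly weaker than \myquot{(all prefixes are $F_1C_1$-prefixes) or (all prefixes are $F_2C_2$-prefixes)}, since different prefixes may be rescued by different disjuncts, and a single prefix may be an $F_1$-prefix and a $C_2$-prefix without being accepted by either component. The same objection kills the co-Büchi version. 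Worse, your construction adds no nondeterminism, so if it were sound it would prove deterministic \spa and deterministic \cpa closed under union --- contradicting the \noo{} entries you yourself assert for those rows. The paper instead gets union closure for all nondeterministic classes by a disjoint union with a fresh initial state (an initial guess of which automaton to run, with a tag coordinate selecting which $C_i$ to apply); that initial guess is precisely why nondeterminism separates \yess{} from \noo{} in this column. For the deterministic \emph{synchronous} classes (\srpa and \sbpa, the latter missing from your \yess{} list) the paper first reflects $F_i$-membership into the extended Parikh image, so that the whole disjunction $(F_1\wedge C_1)\vee(F_2\wedge C_2)$ becomes a single semilinear condition on the product; without that reflection your deterministic product has the same cross-term bug.

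A second genuine gap is the \rpa intersection construction: a coordinate that \myquot{flips to record that $C_i$ has been hit from an $F_i$-state} is not realizable, because a Parikh automaton cannot test mid-run whether its accumulated image lies in $C_i$, and once further vectors are added the end-of-run test cannot recover what the image was at the moment of the flip. What is needed --- and what the paper does --- is to nondeterministically \emph{freeze} component $i$'s coordinates when leaving an $F_i$-state, so that the final check against $C_1\cdot C_2$ certifies the guess. Beyond these two points your skeleton tracks the paper's: you correctly identify $L_{a,b}\cap L_{c,d}$ and the shifting technique as the witness for the hardest case (non-closure of \bpa under intersection), the complement column does follow from Theorem~\ref{theorem_nondetcomplementation}-style witnesses plus De Morgan for deterministic \cpa, and your De Morgan reductions for some deterministic \noo{} entries are valid in principle; but the paper handles deterministic \arpa union and the deterministic reachability/Büchi intersection rows with bespoke shifting counterexamples that your proposal only gestures at.
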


Note that there is one question mark in the closure properties columns in Table~\ref{table_results}, which we leave for further research.

%%%%%%%%%%%%%%%%%%%%%%%%%%%%%%%%%%%%%%%%%%%%%%%%%%%%%%%%%%%%%%%%
%%%%%%%%%%%%%%%%%%%%%%%%%%%%%%%%%%%%%%%%%%%%%%%%%%%%%%%%%%%%%%%%
%%%%%%%%%%%%%%%%%%%%%%%%%%%%%%%%%%%%%%%%%%%%%%%%%%%%%%%%%%%%%%%%
%%%%%%%%%%%%%%%%%%%%%%%%%%%%%%%%%%%%%%%%%%%%%%%%%%%%%%%%%%%%%%%%
\section{Decision Problems}
\label{sec:decision}
In this section, we study the complexity of the nonemptiness and the universality problem, model checking, and solving games for Parikh automata on infinite words.
Before we can do so, we need to specify how a Parikh automaton~$(\aut, C)$ is represented as input for algorithms: The vectors labeling the transitions of $\aut$ are represented in binary and a linear set
\[
\left\{ \vec{v}_0 + \sum\nolimits_{i=1}^k c_i\vec{v}_i \:\middle|\: c_i \in \nats \text{ for } i=1,\ldots, k \right\}
\]
is represented by the list~$(\vec{v}_0, \ldots, \vec{v}_k)$ of vectors, again encoded in binary.
A semilinear set is then represented by a set of such lists.

%%%%%%%%%%%%%%%%%%%%%%%%%%%%%%%%%%%%%%%%%%%%%%%%%%%%%%%%%%%%%%%%
%%%%%%%%%%%%%%%%%%%%%%%%%%%%%%%%%%%%%%%%%%%%%%%%%%%%%%%%%%%%%%%%
%%%%%%%%%%%%%%%%%%%%%%%%%%%%%%%%%%%%%%%%%%%%%%%%%%%%%%%%%%%%%%%%
%%%%%%%%%%%%%%%%%%%%%%%%%%%%%%%%%%%%%%%%%%%%%%%%%%%%%%%%%%%%%%%%
\subsection{Nonemptiness}
\label{subsec:decision1}

% Emptiness and universality are the most basic decision problems for automata. The results we obtain will be later used to study verification problems.

We begin by settling the complexity of the nonemptiness problem.
The positive results are obtained by reductions to the nonemptiness of Parikh automata on finite words
while the undecidability results are reductions from
the termination problem for two-counter machines.
% and their proofs are relegated to the appendix
% where all the required technical details are presented.

% \penalty-100

\begin{theorem}\label{thm_reachsafetyemptiness} 
The following problems are \NP-complete: 
\begin{enumerate}
    \item\label{rpaemptiness} Given an \rpa, is its language nonempty?
\item\label{bpaemptiness} Given a \bpa, is its language nonempty?
\end{enumerate}The following problems are undecidable:
\begin{enumerate}
\setcounter{enumi}{2}
        \item\label{spaemptiness} Given a deterministic \spa, is  its language nonempty?
    \item\label{cpaemptiness} Given a deterministic \cpa, is its language nonempty?
\end{enumerate}
\end{theorem}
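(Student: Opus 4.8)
The plan is to handle the four items in two groups. For the positive results (items~\ref{rpaemptiness} and~\ref{bpaemptiness}), the strategy is a reduction to nonemptiness of Parikh automata on \emph{finite} words, which is \NP-complete by~\cite{FL}; \NP-hardness of our problems follows immediately since an ordinary \pa on finite words can be padded into an \rpa (or \bpa) without changing nonemptiness. For the \rpa case, recall that synchronous and asynchronous variants are equivalent (Theorem~\ref{theorem_synchvsasynch}), so it suffices to treat \srpa. A word is accepted by an \srpa~$(\aut,C)$ iff some run has an $FC$-prefix and can then be continued to an infinite run. So, first I would compute (in polynomial time, by reachability in the transition graph of $\aut$) the set~$Q_\infty$ of states from which an infinite run exists — equivalently, states that can reach a cycle. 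Then $\Lsexists(\aut,C)\neq\emptyset$ iff the finite-word \pa obtained from $\aut$ by making \emph{only} $F\cap Q_\infty$ accepting, paired with~$C$, is nonempty: an accepting run of this finite \pa is exactly an $FC$-prefix ending in a state from which the run can be prolonged forever. This is a polynomial-time reduction, giving membership in \NP.

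For the \bpa case (item~\ref{bpaemptiness}) I would again use the synchronous–asynchronous equivalence and work with \sbpa. Here I would guess, in the reduction, a state~$q$ that is visited infinitely often and a decomposition of the witnessing lasso into a ``stem'' and a ``loop'' through~$q$. The subtlety is that the semilinear constraint must be met at infinitely many positions: if the loop~$\ell$ from~$q$ back to~$q$ contributes a vector~$\vec{u}=\parikhimage(\ell)$, then the Parikh images of the prefixes ending after the $k$-th traversal of the loop form an arithmetic progression $\vec{p}+k\vec{u}$ (where $\vec{p}$ is the image of the stem plus possibly a partial loop). I need this progression to hit~$C$ infinitely often while passing through~$F$ each time. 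The cleanest way is: either (a) $\vec{u}=\vec 0$, and then it suffices that one $FC$-prefix ending in such a~$q$ exists and $q$ lies on a zero-weight loop — reducible to finite-word \pa nonemptiness as above with an extra zero-loop check; or (b) $\vec u\neq\vec 0$, and then I use that a linear set's intersection with a line $\{\vec p + k\vec u\mid k\in\nats\}$, if infinite, is itself eventually periodic, so membership of $\vec p + k\vec u$ in $C$ for infinitely many $k$ is a semilinear condition on $(\vec p,\vec u)$ that can be folded into a fresh Parikh automaton on finite words that reads the stem, then one copy of the loop, recording $\vec p$ and $\vec u$ on disjoint coordinate blocks, and checks the folded condition together with $q\in F$ and $q$ being on the guessed loop. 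Either way nonemptiness reduces in polynomial time to finite-word \pa nonemptiness, so the problem is in \NP. I expect the bookkeeping in case~(b) — precisely characterizing when the arithmetic progression meets $C$ infinitely often by a semilinear set that is constructible in polynomial time, and arranging the Parikh automaton to verify that $F$ is hit on \emph{each} loop traversal (which, since the loop is fixed, just means the endpoint $q$ is in $F$) — to be the main technical obstacle; the shifting argument of Remark~\ref{remark:shifting} should help normalize the lasso shape.

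For the undecidability results (items~\ref{spaemptiness} and~\ref{cpaemptiness}), the plan is a reduction from the halting/termination problem for two-counter (Minsky) machines. Given a two-counter machine~$\mach$, I would encode a halting computation of~$\mach$ as an infinite word: a sequence of configurations $c_0 \# c_1 \# c_2 \# \cdots$, where each $c_i$ is a finite block recording the current instruction (a letter) and is followed, after the halting configuration, by an infinite tail of a padding symbol. The finite-automaton part of the \spa checks the ``local'' consistency of the encoding — that consecutive configurations are related by the correct instruction, that the run starts at the initial instruction with zero counters, and that it reaches \stopp{} — all of which is regular \emph{except} for the faithful tracking of counter values, which is where the Parikh counters enter: increments/decrements of $\mach$'s counter~$X_j$ are mirrored by the automaton's counters, and the semilinear set~$C$ forces the two counters to stay nonnegative and consistent across the $\#$-separator. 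The key point is that safety acceptance, requiring \emph{every} prefix to satisfy the constraint and end in an accepting state, lets me enforce the invariant at every step, so that the only safety-accepted words are (encodings of) genuine halting runs padded into an infinite word; hence the \spa has nonempty language iff $\mach$ halts. Determinism is easy to arrange since the encoding of a run of a deterministic counter machine is itself deterministic. For the \cpa case the same encoding works with only cosmetic changes: co-Büchi acceptance requires the constraint and an accepting state from \emph{some point on}, which is fine because after the halting configuration we sit forever in an accepting ``halted'' state with a fixed counter configuration in~$C$; any word that cheats before that point can only do so finitely often, but a single cheat desynchronizes the counters permanently, so it is still rejected — I would make the ``cheat detector'' a trap that, once entered, keeps the run out of $F$ (or out of $C$) forever. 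The main obstacle here is purely in the careful design of the alphabet, the local-consistency checks, and the semilinear set so that the counters are tracked exactly and so that both determinism and the ``a single error is fatal forever'' property hold simultaneously; conceptually this is the standard counter-machine simulation, and the conjunctive, prefix-universal (resp. eventually-always) nature of safety (resp. co-Büchi) acceptance is exactly what makes the simulation faithful.
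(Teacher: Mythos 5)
Your item~\ref{rpaemptiness} matches the paper's argument (restrict $F$ to accepting states from which a cycle is reachable, reduce to finite-word \pa nonemptiness, and pad a \pa for hardness), but the other three items have genuine gaps.

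For item~\ref{bpaemptiness}, the completeness of your lasso characterization is exactly the step you skip. Given a nonempty \sbpa, you must \emph{produce} a stem--loop decomposition whose progression $\vec p + k\vec u$ hits $C$ infinitely often, and an arbitrary pair of $FC$-visits to the same state $q$ does not work: with $C=\set{(n,n)}\cup\set{(n,2n)}$, $\vec p=(1,1)$ and $\vec u=(1,3)$ give $\vec p\in C$ and $\vec p+\vec u\in C$ but $\vec p+k\vec u\notin C$ for all $k\geqslant 2$. The paper fixes a linear component $C=\vec v_0+P$ and applies Dickson's Lemma to the coefficient vectors of the infinitely many $FC$-prefixes to find two visits whose difference lies in $P$; since $P$ is closed under addition, \emph{every} iterate then stays in $C$. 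This also removes your other unresolved obstacle: you never establish that ``$\vec p+k\vec u\in C$ for infinitely many $k$'' is a semilinear condition on $(\vec p,\vec u)$ with a polynomial-size generator representation, which is needed for the \NP bound, whereas the paper only needs the trivially computable set $P$ and two finite-word nonemptiness tests ($L(\aut_q,C)$ and $L(\aut_{q,q},P)$).

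For items~\ref{spaemptiness} and~\ref{cpaemptiness} your reductions go in the wrong direction. A safety condition cannot enforce ``eventually reaches \stopp'': if $\mach$ does not halt, the faithful encoding of its infinite run is locally consistent at every prefix, so every prefix is an $FC$-prefix and the word is safety-accepted; your \spa is therefore nonempty for every machine. The paper instead reduces from \emph{non}termination (via the safe-word problem), which aligns with the topologically closed nature of safety languages. For co-Büchi the problem is dual: finitely many violations are forgiven, and a wrong branch at a zero-test changes neither the state nor the extended Parikh image, so it cannot be sent to a trap by the finite-state part --- it is only visible as a single non-$C$-prefix, which co-Büchi ignores. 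A word can thus make one forgiven error, jump to (essentially) the \stopp{} configuration, and pad forever, so your \cpa is again nonempty independently of whether $\mach$ halts. The paper's reduction is from \emph{universal} termination precisely because a co-Büchi-accepted word only certifies an error-free (i.e., faithful, nonterminating) computation from \emph{some} configuration reached after the finitely many forgiven errors.
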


\begin{proof}
\ref{rpaemptiness}.)
Due to Theorem~\ref{theorem_synchvsasynch}.\ref{theorem_synchvsasynch_reach}, we only consider the case of \srpa for the \NP upper bound. 
Given such an automaton~$(\aut, C)$ with $\aut = (Q, \Sigma\times D, q_\init, \Delta, F)$ let $F' \subseteq F$ be the set of accepting states from which a cycle is reachable.
Now, define $\aut'= (Q, \Sigma\times D, q_\init, \Delta, F')$.
Then, we have $\Lsexists(\aut, C) \neq \emptyset$ if and only if $L(\aut', C) \neq \emptyset$ (i.e., we treat $(\aut', C)$ as a \pa), with the latter problem being in \NP~\cite{FL}.

The matching \NP lower bound is, again due to Theorem~\ref{theorem_synchvsasynch}.\ref{theorem_synchvsasynch_reach}, only shown for \srpa. 
We proceed by a reduction from the \NP-complete~\cite{FL} nonemptiness problem for Parikh automata.
Given a Parikh automaton~$(\aut, C)$, let $\aut'$ be obtained from $\aut$ by adding a fresh state~$q$ with a self-loop labeled by $(\#, \vec{0})$ as well as transitions labeled by $(\#, \vec{0})$ leading from the accepting states of $\aut$ to $q$.
Here, $\#$ is a fresh letter and $\vec{0}$ is the zero vector of the correct dimension.
By declaring $q$ to be the only accepting state in $\aut'$, we have that $\Lsexists(\aut', C)$ is nonempty if and only if $L(\aut, C)$ is nonempty.

Note that hardness holds already for deterministic automata, as one can always rename letters to make a nondeterministic \pa deterministic without changing the answer to the nonemptiness problem. 

\ref{bpaemptiness}.)
Due to Theorem~\ref{theorem_synchvsasynch}.\ref{theorem_synchvsasynch_buchi}, it is enough to consider synchronous Büchi acceptance for the upper bound. 
So, fix some \sbpa~$(\aut, C)$ with $\aut = (Q, \Sigma\times D, q_\init, \Delta, F)$.
Let $C = \bigcup_{i}L_i$ where the $L_i$ are linear sets. 
The language $\LsBuchi(\aut, C)$ is nonempty if and only if $\LsBuchi(\aut, L_i)$ is nonempty for some $i$. 
Hence, we show how to solve nonemptiness for automata with linear~$C$, say $C = \left\{ \vec{v}_0 + \sum\nolimits_{i=1}^k c_i\vec{v}_i \:\middle|\: c_i \in \nats \text{ for } i=1,\ldots, k \right\}.
$
We define 
$
P = \left\{ \sum\nolimits_{i=1}^k c_i\vec{v}_i \:\middle|\: c_i \in \nats \text{ for } i=1,\ldots, k \right\}
$
and, for a given state~$q \in Q$, the \nfa
\begin{itemize}
    \item $\aut_q$ obtained from $\aut$ by replacing the set of accepting states by $\set{q}$, and
    \item $\aut_{q,q}$ obtained from $\aut$ by replacing the initial state by $q$, by replacing the set of accepting states by $\set{q}$, and by modifying the resulting \nfa such that it does not accept the empty word (but leaving its language unchanged otherwise).
\end{itemize}
We claim that $\LsBuchi(\aut, C)$ is nonempty if and only if there is a $q \in F$ such that both $L(\aut_q, C)$ and $L(\aut_{q,q},P)$ are nonempty.
As nonemptiness of Parikh automata is in \NP, this yields the desired upper bound.

So, assume there is such a $q$.
Then, there is a finite run~$\rho_1$ of $\aut$ that starts in $q_\init$, ends in $q$, and processes some~$w_1 \in \Sigma^*$ with extended Parikh image in $C$.
Also, there is a finite run~$\rho_2$ of $\aut$ that starts and ends in $q$ and processes some nonempty $w_2 \in \Sigma^*$ with extended Parikh image in $P$.
For every $n\geqslant 1$, $\rho_1(\rho_2)^n$ is a finite run of $(\aut,C)$ ending in the accepting state~$q$ that processes $w_1(w_2)^n$ and whose extended Parikh image is in $C$. 
So, $\rho_1(\rho_2)^\omega$ is a synchronous Büchi accepting run of $(\aut, C)$. 

For the converse direction, assume that there is some synchronous Büchi accepting run~$(q_0, w_0, q_1)(q_1, w_1, q_2)(q_2, w_2, q_3) \cdots$ of $(\aut, C)$. 
Then, there is also an accepting state~$q \in F$ and an infinite set of positions~$S \subseteq \nats$ such that $q_s = q$ and $\parikhimage(w_0 \cdots w_{s-1}) \in C$ for all $s \in S$.
Hence, for every $s \in S$ there is a vector~$(c_1^s, \ldots, c_k^s) \in \nats^k$ such that
$\parikhimage(w_0 \cdots w_{s-1}) = \vec{v}_0 + \sum\nolimits_{i=1}^k c_i^s\vec{v}_i$.
By Dickson's Lemma~\cite{Dickson13}, there are $s_1 < s_2$ such that $c_j^{s_1} \le c_j^{s_2}$ for every $1 \le j \le k$.
Then, $
\parikhimage(w_{s_1} \cdots w_{s_2-1})  = \sum\nolimits_{i=1}^k (c_i^{s_2} - c_i^{s_1})\vec{v}_i$,
which implies $\parikhimage(w_{s_1} \cdots w_{s_2-1}) \in P$.
Thus, the prefix of $\rho$ of length~$s_1$ is an accepting run of the \pa~$(\aut_q, C)$ and the next~$(s_2 - s_1)$ transitions of $\rho$ form a nonempty accepting run of the \pa~$(\aut_{q,q}, P)$.

The \NP lower bound follows from the proof of Theorem~\ref{thm_reachsafetyemptiness}.\ref{rpaemptiness} by noticing that we also have that $\LsBuchi(\aut',C)$ is nonempty if and only if $L(\aut, C)$ is nonempty.

\ref{spaemptiness}.) and  
\ref{cpaemptiness}.)
The two undecidability proofs,
based on reductions from
undecidable problems for two-counter machines,
are relegated to the appendix, which introduces all the required technical details on such machines.
\end{proof}

We conclude Subsection~\ref{subsec:decision1}
by displaying an interesting consequence of 
the decomposition used in
the proof of Theorem~\ref{thm_reachsafetyemptiness}.\ref{bpaemptiness}:
we show that the language of every \bpa (on infinite words)
can be expressed by combining the languages
of well-chosen \pa (on finite words).
This is similar to what happens in other settings:
For instance, $\omega$-regular languages are exactly the languages of the form~$\bigcup_{j=1}^n L_j \cdot (L_j')^\omega$, where each $L_j, L_j'$ is regular.
Analogously, $\omega$-context-free languages can be characterized by context-free languages~\cite{DBLP:journals/jcss/CohenG77}.
For Büchi Parikh automata, one direction of the above characterization holds:
% , which follows from the decomposition underlying the emptiness algorithm for Büchi Parikh automata.

\begin{lemma}
\label{lemma_buchilike}
If a language~$L$ is accepted by a \bpa then $L = \bigcup_{j=1}^n L_j \cdot (L_j')^\omega$, where each $L_j, L_j'$ is accepted by some \pa.
\end{lemma}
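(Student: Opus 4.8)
The plan is to mine the decomposition already established in the proof of Theorem~\ref{thm_reachsafetyemptiness}.\ref{bpaemptiness}. By Theorem~\ref{theorem_synchvsasynch}.\ref{theorem_synchvsasynch_buchi} it suffices to treat a synchronous \sbpa~$(\aut, C)$, and since $\LsBuchi(\aut,C) = \bigcup_i \LsBuchi(\aut, L_i)$ for the finitely many linear pieces $L_i$ of the semilinear set $C$, and a finite union of languages of the desired form is again of the desired form, I may assume $C$ itself is linear, say $C = \{\vec{v}_0 + \sum_{i=1}^k c_i \vec{v}_i \mid c_i \in \nats\}$, with associated ``period cone'' $P = \{\sum_{i=1}^k c_i \vec{v}_i \mid c_i \in \nats\}$. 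Note $P$ is itself linear (take $\vec{v}_0 = \vec{0}$), hence semilinear.

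First I would recall the two \nfa constructed there: for each state $q$ of $\aut$, the automaton $\aut_q$ (which is $\aut$ with accepting states replaced by $\{q\}$) and $\aut_{q,q}$ ($\aut$ restarted at $q$, with $\{q\}$ accepting, and modified to reject $\epsilon$). The core of the emptiness proof shows that a synchronous Büchi accepting run exists exactly when, for some $q \in F$, both $L(\aut_q, C) \neq \emptyset$ and $L(\aut_{q,q}, P) \neq \emptyset$. I would now upgrade this from a nonemptiness statement to an exact language identity. Set $L_q := L(\aut_q, C)$ and $L_q' := L(\aut_{q,q}, P)$, both languages of \pa on finite words. The claim is
\[
\LsBuchi(\aut, C) = \bigcup_{q \in F} L_q \cdot (L_q')^\omega.
\]
For the inclusion $\supseteq$: given $w_1 \in L_q$ and $w_2^{(1)}, w_2^{(2)}, \ldots \in L_q'$, pick a run $\rho_1$ of $\aut$ from $q_\init$ to $q$ processing $w_1$ with $\parikhimage(\rho_1) \in C$, and runs $\sigma_n$ of $\aut$ from $q$ to $q$ processing $w_2^{(n)}$ (nonempty) with $\parikhimage(\sigma_n) \in P$. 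Then $\rho_1 \sigma_1 \sigma_2 \cdots$ is an infinite run of $\aut$ visiting $q$ after each block, and because adding any sum of elements of $P$ to an element of $C$ stays in $C$, the extended Parikh image of each such prefix lies in $C$; as $q \in F$, this is synchronous Büchi accepting, so $w_1 w_2^{(1)} w_2^{(2)} \cdots \in \LsBuchi(\aut, C)$. For $\subseteq$: given a synchronous Büchi accepting run $\rho$, exactly as in the emptiness proof extract $q \in F$ and an infinite set $S$ of positions with $q_s = q$ and $\parikhimage(w_0 \cdots w_{s-1}) \in C$ for $s \in S$; by Dickson's Lemma thin $S$ to an infinite subset $s_0 < s_1 < s_2 < \cdots$ along which the coefficient vectors $(c_1^{s},\ldots,c_k^{s})$ are pointwise nondecreasing. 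Then the prefix up to $s_0$ witnesses membership in $L_q$, and each intermediate block from $s_j$ to $s_{j+1}$ is nonempty with Parikh image in $P$ (the difference of two coefficient vectors, all nonnegative), so it witnesses membership in $L_q'$. Concatenating recovers the input word as an element of $L_q \cdot (L_q')^\omega$.

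The only subtlety — and what I'd expect to be the main (mild) obstacle — is the bookkeeping to ensure the blocks between consecutive chosen positions are genuinely nonempty and that the $\aut_{q,q}$ runs, which forbid the empty word, correctly capture them; here the standard trick of thinning $S$ so that $s_{j+1} > s_j$ strictly, combined with $\aut$ having no $\epsilon$-transitions (runs process one letter per transition), guarantees $w_{s_j} \cdots w_{s_{j+1}-1} \neq \epsilon$. One should also double-check the edge case where $\rho_1$ is empty (i.e. $q = q_\init$ and $\vec{0} \in C$), which is handled by $\aut_q$ accepting $\epsilon$ precisely when its initial state $q_\init$ equals $q$ and $\vec{0} \in C$, consistent with the run semantics for \pa stated in the preliminaries. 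Everything else is a direct transcription of the emptiness argument, so the lemma follows with the explicit witnesses $L_j, L_j'$ ranging over the finitely many pairs $(L_q, L_q')$ obtained from the pairs $(q, L_i)$ with $q \in F$ and $L_i$ a linear component of $C$.
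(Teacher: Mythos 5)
Your proposal is correct and follows essentially the same route as the paper's proof: reduce to a synchronous automaton with linear $C$, use the automata $\aut_q$ and $\aut_{q,q}$ from the emptiness argument, and extract the block decomposition of an accepting run via Dickson's Lemma. If anything, you are slightly more complete than the paper's write-up, since you spell out the converse inclusion (closure of the linear set under adding periods) and correctly restrict the union to accepting states $q \in F$, whereas the paper only details the forward inclusion.
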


\begin{proof}
Let $L$ be accepted by an \sbpa~$(\aut, C)$ with $C = \bigcup_{j \in J} C_j$ for some finite set~$J$, where each $C_j$ is a linear set.
In the proof of Theorem~\ref{thm_reachsafetyemptiness}.\ref{bpaemptiness}, we have defined the \nfa~$\aut_q$ and $\aut_{q,q}$ for every state~$q$ of $\aut$.

Now, consider some $w \in L$ and an accepting run~$\rho = (q_0, w_0, q_1)(q_1, w_1, q_2)(q_2, w_2, q_3)\cdots $ of $(\aut, C)$ processing $w$.
Then, there is an accepting state~$q$ of $\aut$, some $C_j$, and an infinite set~$S \subseteq \nats$ of positions such that $q_s = q$ and $\parikhimage(w_0\cdots w_{s-1}) \in C_j$ for all $s \in S$.
Let $C_j = \left\{ \vec{v}_0 + \sum\nolimits_{i=1}^k c_i\vec{v}_i \:\middle|\: c_i \in \nats \text{ for } i=1,\ldots, k \right\}$ and let $P_j = \left\{\sum\nolimits_{i=1}^k c_i\vec{v}_i \:\middle|\: c_i \in \nats \text{ for } i=1,\ldots, k \right\}$.
As before, for every $s \in S$, there is a vector~$(c_1^s,\ldots, c_k^s) \in \nats^k$ such that $\parikhimage(w_0\cdots w_{s-1}) = \vec{v}_0 + \sum\nolimits_{i=1}^k c_i^s\vec{v}_i$.

Now, we apply an equivalent formulation of Dickson's Lemma~\cite{Dickson13}, which yields an infinite subset~$S' \subseteq S$ with $ c_j^s \le c_j^{s'} $ for all $1 \le j\le k$ and all $s, s' \in S'$ with $s < s'$, i.e., we have an increasing chain in $S$.
Let $s_0 < s_1 < s_2 < \cdots$ be an enumeration of $S'$.

As above, $\parikhimage(w_{s_n} \cdots w_{s_{n+1}-1}) \in P_j$ for all $n$.
So, $(q_0, w_0, q_1) \cdots (q_{s_0-1}, w_{s_0-1}, q_{s_0})$ is an accepting run of the \pa~$(\aut_q,C)$ and each $(q_{s_n}, w_{s_n}, q_{s_n+1}) \cdots (q_{s_{n+1}-1}, w_{s_{n+1}-1}, q_{s_{n+1}})$ is an accepting run of the \pa~$(\aut_{q,q},P)$.
So, $w \in \bigcup_{j\in J} \bigcup_{q \in Q}L(\aut_q, C_j) \cdot (L(\aut_{q,q},P_j))^\omega$.
\end{proof}

Recall that a word is ultimately periodic if it is of the form~$xy^\omega$.
Every nonempty $\omega$-regular and every nonempty $\omega$-context-free language contains an ultimately periodic word, which is a simple consequence of them being of the form~$\bigcup_{j=1}^n L_j \cdot (L_j')^\omega$.

\begin{corollary}
Every nonempty language accepted by a \bpa contains an ultimately periodic word.
\end{corollary}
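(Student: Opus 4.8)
The plan is to deduce the corollary directly from Lemma~\ref{lemma_buchilike}. Suppose $L$ is nonempty and accepted by a \bpa. By Lemma~\ref{lemma_buchilike}, we may write $L = \bigcup_{j=1}^n L_j \cdot (L_j')^\omega$ with each $L_j$ and $L_j'$ accepted by a \pa (on finite words). Since $L \neq \emptyset$, there is some index~$j$ for which $L_j \cdot (L_j')^\omega \neq \emptyset$; in particular $L_j \neq \emptyset$ and $L_j' \neq \emptyset$, so we may pick $x \in L_j$ and $y \in L_j'$. I claim that already $xy^\omega \in L_j \cdot (L_j')^\omega \subseteq L$, which is an ultimately periodic word, as desired.

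The only thing to check is the set equation $L = \bigcup_{j=1}^n L_j \cdot (L_j')^\omega$ itself, which is exactly what Lemma~\ref{lemma_buchilike} provides; thus no further work is needed beyond invoking it. The second ingredient — that $xy^\omega$ lies in $L_j \cdot (L_j')^\omega$ whenever $x \in L_j$ and $y \in L_j'$ — is immediate from the definition of the $\omega$-iteration of a language: $y^\omega = y \cdot y \cdot y \cdots$ is an infinite concatenation of words from $L_j'$, hence $y^\omega \in (L_j')^\omega$, and prepending $x \in L_j$ gives $xy^\omega \in L_j \cdot (L_j')^\omega$. One subtlety: if $y = \epsilon$, then $y^\omega$ is not infinite, so one must ensure $y$ can be chosen nonempty — but this is harmless here, since if $L_j'$ only contained $\epsilon$ then $(L_j')^\omega$ would be empty (there is no infinite word built from copies of $\epsilon$), contradicting $L_j \cdot (L_j')^\omega \neq \emptyset$; so in fact $L_j'$ must contain some nonempty word, and we pick that one as $y$.

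There is essentially no obstacle: this is a two-line consequence of Lemma~\ref{lemma_buchilike}, mirroring the classical argument that every nonempty $\omega$-regular language, being of the form $\bigcup_{j=1}^n L_j \cdot (L_j')^\omega$, contains an ultimately periodic word. The whole proof therefore reads:

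\begin{proof}
By Lemma~\ref{lemma_buchilike}, write $L = \bigcup_{j=1}^n L_j \cdot (L_j')^\omega$ with each $L_j, L_j'$ accepted by a \pa. Since $L \neq \emptyset$, some $L_j \cdot (L_j')^\omega$ is nonempty, so $L_j \neq \emptyset$ and $(L_j')^\omega \neq \emptyset$; the latter forces $L_j'$ to contain a nonempty word~$y$. Picking any $x \in L_j$, the word~$xy^\omega$ is ultimately periodic and lies in $L_j \cdot (L_j')^\omega \subseteq L$.
\end{proof}
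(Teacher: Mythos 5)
Your proof is correct and is exactly the argument the paper intends: the corollary is stated as a direct consequence of Lemma~\ref{lemma_buchilike}, mirroring the classical fact for languages of the form $\bigcup_{j=1}^n L_j \cdot (L_j')^\omega$, and you spell out that standard argument (including the minor but worthwhile observation that $y$ can be chosen nonempty because $(L_j')^\omega$ would otherwise be empty).
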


Let us briefly comment on the other direction of the implication stated in Lemma~\ref{lemma_buchilike}, i.e., is every language of the form~$ \bigcup_{j=1}^n L_j \cdot (L_j')^\omega$, where each $L_j, L_j'$ is accepted by some \pa, also accepted by some \bpa?
The answer is no:
Consider $L = \set{a^nb^n \mid n > 1}$, which is accepted by a deterministic \pa.
However, using the shifting technique (see Remark~\ref{remark:shifting}), one can show that $L^\omega$ is not accepted by any \bpa: 
Every accepting run of an $n$-state \bpa processing~$(a^nb^n)^\omega$ can be turned into an accepting run on a word of the form~$(a^nb^n)^* a^{n+k} b^n (a^nb^n)^* a^{n-k} b^n (a^n b^n)^\omega $ for some $k > 0$ by shifting some cycle to the front while preserving Büchi acceptance.

For reachability acceptance, a similar characterization holds, as every \rpa can be turned into an equivalent \bpa.
But for safety and co-Büchi acceptance the characterization question is nontrivial, as for these acceptance conditions all (almost all) run prefixes have to be $FC$-prefixes.
We leave this problem for future work.

\subsection{Universality}
\label{subsec:decision2}

Now, we consider the universality problem.
Here, the positive results follow from the duality of deterministic \spa and \rpa (\cpa and \bpa) and the decidability of nonemptiness for the dual automata classes.
Similarly, the undecidability proofs for deterministic \srpa and \sbpa follow from duality and undecidability of nonemptiness for the dual automata classes. 
Finally, the remaining undecidability results follow from reductions from undecidable problems for two-counter machines and Parikh automata over finite words.

\begin{theorem}
\label{theorem_universality} The following problems are \coNP-complete:
\begin{enumerate}
    \item \label{theorem_universality_detsafety}  Given a deterministic \spa, is its language universal?
    
    \item \label{theorem_universality_detcobuchi} Given a deterministic \cpa, is its language universal?
\end{enumerate}
The following problems are undecidable:
\begin{enumerate}\setcounter{enumi}{2}
    \item \label{theorem_universality_reach}  Given a deterministic \srpa, is its language universal?
    
    \item \label{theorem_universality_nondetsafety}Given an \spa, is its language universal?
    
    \item \label{theorem_universality_buchi}Given a deterministic \sbpa, is its language universal?
    
    \item \label{theorem_universality_nondetcobuchi} Given a  \cpa, is its language universal?
\end{enumerate}

\end{theorem}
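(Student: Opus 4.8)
The six items split into two families. The four concerning deterministic automata --- items~\ref{theorem_universality_detsafety}, \ref{theorem_universality_detcobuchi}, \ref{theorem_universality_reach} and \ref{theorem_universality_buchi} --- all follow from the complement-dualities of Theorem~\ref{theorem_detcomplementation} together with the (non)emptiness results of Theorem~\ref{thm_reachsafetyemptiness}. By Theorem~\ref{theorem_detcomplementation}.\ref{theorem_detcomplementation_safety2reach} and \ref{theorem_detcomplementation}.\ref{theorem_detcomplementation_sreach2safety}, the language of a deterministic \spa and that of a deterministic \srpa are mutual complements, both translations being effective (and, one checks, polynomial); by Theorem~\ref{theorem_detcomplementation}.\ref{theorem_detcomplementation_sbuchi2cobuchi} and \ref{theorem_detcomplementation}.\ref{theorem_detcomplementation_cobuchi2buchi} the same holds for deterministic \sbpa and deterministic \cpa. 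Consequently a deterministic \spa (resp.\ \sbpa) is universal iff its dual deterministic \srpa (resp.\ \cpa) is empty, and a deterministic \srpa (resp.\ \cpa) is universal iff its dual deterministic \spa (resp.\ \sbpa) is empty. For items~\ref{theorem_universality_detsafety} and \ref{theorem_universality_detcobuchi}: nonemptiness of \rpa and of \bpa is in \NP (Theorem~\ref{thm_reachsafetyemptiness}.\ref{rpaemptiness}, \ref{thm_reachsafetyemptiness}.\ref{bpaemptiness}), so emptiness of deterministic \srpa and of deterministic \cpa is in \coNP; and the proofs of Theorem~\ref{thm_reachsafetyemptiness}.\ref{rpaemptiness} and \ref{thm_reachsafetyemptiness}.\ref{bpaemptiness} already yield \NP-hardness of nonemptiness for deterministic \srpa and deterministic \sbpa, hence \coNP-hardness of their emptiness; transporting both bounds across the polynomial dualities gives \coNP-completeness. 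For items~\ref{theorem_universality_reach} and \ref{theorem_universality_buchi}: nonemptiness of deterministic \spa and of deterministic \cpa is undecidable (Theorem~\ref{thm_reachsafetyemptiness}.\ref{spaemptiness}, \ref{thm_reachsafetyemptiness}.\ref{cpaemptiness}), so (decidability being closed under complementation) so is their emptiness, whence the two universality problems are undecidable.

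For the remaining items~\ref{theorem_universality_nondetsafety} and \ref{theorem_universality_nondetcobuchi} I reduce from universality of Parikh automata on finite words, which is undecidable --- e.g.\ the set of words that are \emph{not} accepting computations of a fixed deterministic two-counter machine is accepted by a \pa (the counters only comparing two adjacent configuration blocks), and this \pa is universal iff the machine does not halt. Given a \pa $(\autb, C)$ over $\Sigma$ with $C \subseteq \nats^d$, assumed complete, build a nondeterministic \spa $(\aut, C')$ over $\Sigma \uplus \{\$\}$ ($\$$ fresh) as follows: while reading letters of $\Sigma$, $\aut$ nondeterministically simulates a run of $\autb$, carrying $\autb$'s counter vector together with one extra phase coordinate kept at~$0$; \emph{every} state of $\aut$ is accepting; a $\$$-transition is available only from states in which the simulated run of $\autb$ is in an accepting state, it raises the phase coordinate to~$1$ and moves to a sink that loops on all of $\Sigma \uplus \{\$\}$ adding the zero vector; and $C' = \{(\vec u, 0) \mid \vec u \in \nats^d\} \cup \{(\vec v, k) \mid \vec v \in C,\ k \geqslant 1\}$, which is semilinear by Proposition~\ref{propsemilinearclosure}. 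Then every word in $\Sigma^\omega$ is safety-accepted, and a word $u\$w'$ with $u \in \Sigma^*$ is safety-accepted iff some run of $\autb$ on $u$ ends accepting with extended Parikh image in $C$, i.e.\ iff $u \in L(\autb, C)$; hence $\Lall(\aut, C')$ is universal iff $L(\autb, C) = \Sigma^*$. The same $(\aut, C')$, read as a \cpa, satisfies $\LcoBuchi(\aut, C') = \Lall(\aut, C')$: on a word $u\$w'$ with $u \notin L(\autb, C)$, every run of $\aut$ is either finite or keeps its extended Parikh image frozen outside $C'$ from the $\$$-position onwards, so it is not co-Büchi-accepting --- settling item~\ref{theorem_universality_nondetcobuchi}.

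The dualities carry the routine work, so the real difficulty is twofold. First, for items~\ref{theorem_universality_detsafety} and \ref{theorem_universality_detcobuchi} one must check not only that the constructions behind Theorem~\ref{theorem_detcomplementation} are polynomial but, more delicately, that the whole decision procedure stays within \coNP; the sensitive point is the complement of the semilinear constraint, which should be handled via a short-certificate argument --- a non-universal instance has a polynomial-size witness whose extended Parikh image can be certified to lie outside the relevant semilinear set within the \coNP budget --- rather than by explicitly complementing the set. Second, for items~\ref{theorem_universality_nondetsafety} and \ref{theorem_universality_nondetcobuchi} the correctness of the reduction rests on showing that on each word that must be rejected there is \emph{no} accepting run at all, not merely no intended one, which is exactly why the phase coordinate and the sink are arranged so that any premature $\$$-transition permanently traps the Parikh image outside $C'$.
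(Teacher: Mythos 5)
Your proof is correct, and for four of the six items it coincides with the paper's: the deterministic cases (items~\ref{theorem_universality_detsafety}, \ref{theorem_universality_detcobuchi}, \ref{theorem_universality_reach}, \ref{theorem_universality_buchi}) are handled exactly as in the paper via the dualities of Theorem~\ref{theorem_detcomplementation} and the (un)decidability of nonemptiness from Theorem~\ref{thm_reachsafetyemptiness} (your remark that one must watch the cost of complementing the semilinear constraint is a fair caveat that the paper leaves implicit), and your construction for item~\ref{theorem_universality_nondetcobuchi} is essentially the paper's: it also reduces from universality of \pa on finite words~\cite{KR} by building an automaton for $L(\autb,C)\cdot\$\cdot(\Sigma\cup\set{\$})^\omega \cup \Sigma^\omega$ using counter freezing. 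The genuine divergence is item~\ref{theorem_universality_nondetsafety}: the paper proves undecidability of \spa universality by a direct reduction from termination of decrement-guarded two-counter machines, using the error-detecting \pa of Lemma~\ref{lemma_safeword} to define the language~$L_\mach = L^0_\mach \cup L^1_\mach$ ("if the stop line occurs, an error occurs before it"), whereas you reuse the same $\$$-freeze construction as in item~\ref{theorem_universality_nondetcobuchi} and reduce from finite-word \pa universality, observing that the very same automaton works under both safety and co-Büchi semantics (every prefix before the $\$$ is trivially good, and a premature or Parikh-violating $\$$-step traps the image outside $C'$ forever). Your route is more uniform — one construction settles both nondeterministic items — at the price of importing the undecidability of finite-word universality as a black box (or re-deriving it, as you sketch), while the paper's two-counter reduction is self-contained given the machinery it has already set up in the appendix for its other undecidability results; both arguments are sound.
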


\begin{proof}
The proofs of the results for deterministic automata follow immediately from the fact that a language is universal if and only if its complement is empty, Theorem~\ref{theorem_detcomplementation}, and Theorem~\ref{thm_reachsafetyemptiness}.
The proof of undecidability for \spa,
based on a reduction from
the termination problem for two-counter machines,
is relegated to the appendix
where all the necessary technical details are presented.
To conclude, let us consider universality of \cpa.

\ref{theorem_universality_nondetcobuchi}.)
Universality of Parikh automata over finite words is undecidable~\cite{KR}.
Now, given a \pa~$(\aut, C)$ over $\Sigma$, one can construct a \cpa~$(\aut', C')$ for the language~$L(\aut,C)\cdot\#\cdot(\Sigma\cup\set{\#})^\omega \cup \Sigma^\omega$, where $\#\notin \Sigma$ is a fresh letter.
This construction relies on freezing the counters (i.e., moving to a copy of the automaton with the same transition structure, but where the counters are no longer updated) and closure of \cpa under union.
Now, $L(\aut, C)$ is universal if and only if $\LcoBuchi(\aut', C')$ is universal.
\end{proof}

%%%%%%%%%%%%%%%%%%%%%%%%%%%%%%%%%%%%%%%%%%%%%%%%%%%%%%%%%%%%%%%%
%%%%%%%%%%%%%%%%%%%%%%%%%%%%%%%%%%%%%%%%%%%%%%%%%%%%%%%%%%%%%%%%
\subsection{Model Checking}

Model checking is arguably the most successful application of automata theory to automated verification.
The problem asks whether a given system satisfies a specification, often given by an automaton. 

More formally, and for the sake of notational convenience, we say that a transition system~$\trans$ is a (possibly incomplete) \spa $(\aut, C)$ so that every state of $\aut$ is accepting and $C = \nats^d$, i.e., every run is accepting.
Now, the model-checking problem for a class $\Lcal$ of languages of infinite words asks, given a transition system~$\trans$ and a language~$L \in \Lcal$, whether $\Lall(\trans) \subseteq L$, i.e., whether every word in the transition system satisfies the specification~$L$.
Note that our definition here is equivalent to the standard definition of model checking of finite-state transition systems. 

Here, we study the model-checking problem for different types of Parikh automata.

\begin{theorem}
\label{theorem_modelchecking}
The following problems are \coNP-complete:
\begin{enumerate}

\item \label{theorem_modelchecking_detsafety}
     Given a transition system~$\trans$ and a deterministic \spa~$(\aut, C)$, is $\Lall(\trans) \subseteq \Lall(\aut, C)$?
    
\item \label{theorem_modelchecking_detcobuchi}
    Given a transition system~$\trans$ and a deterministic \cpa~$(\aut, C)$, is $\Lall(\trans) \subseteq \LcoBuchi(\aut, C)$?
    
\end{enumerate}
The following problems are undecidable:
\begin{enumerate}
\setcounter{enumi}{2}
    \item \label{theorem_modelchecking_reach}
     Given a transition system~$\trans$ and a deterministic \srpa~$(\aut, C)$, is $\Lall(\trans) \subseteq \Lsexists(\aut, C)$?

\item \label{theorem_modelchecking_safety}
    Given a transition system~$\trans$ and an \spa~$(\aut, C)$, is $\Lall(\trans) \subseteq \Lall(\aut, C)$?
    
\item \label{theorem_modelchecking_buchi}
    Given a transition system~$\trans$ and a deterministic \sbpa~$(\aut, C)$, is $\Lall(\trans) \subseteq \LsBuchi(\aut, C)$?
    
\item \label{theorem_modelchecking_nondetcobuchi}
    Given a transition system~$\trans$ and a \cpa~$(\aut, C)$, is $\Lall(\trans) \subseteq \LcoBuchi(\aut, C)$?
\end{enumerate}
\end{theorem}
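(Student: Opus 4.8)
The plan is to reduce every item to results already established for nonemptiness and universality, using the simple observation that \emph{universality is model checking against the ``universal'' transition system}. Let $\trans_0$ be the complete single-state transition system (one accepting state, a self-loop $(a,\vec 0)$ for each $a\in\Sigma$, and $C=\nats$), so that $\Lall(\trans_0)=\Sigma^\omega$. Then for any Parikh automaton $(\aut,C)$ and any of our acceptance conditions, $\Lall(\trans_0)\subseteq L(\aut,C)$ holds iff $L(\aut,C)$ is universal. Hence each of the four undecidable model-checking problems specializes to the corresponding universality problem, and undecidability is inherited directly: item~\ref{theorem_modelchecking_reach} from Theorem~\ref{theorem_universality}.\ref{theorem_universality_reach}, item~\ref{theorem_modelchecking_safety} from Theorem~\ref{theorem_universality}.\ref{theorem_universality_nondetsafety}, item~\ref{theorem_modelchecking_buchi} from Theorem~\ref{theorem_universality}.\ref{theorem_universality_buchi}, and item~\ref{theorem_modelchecking_nondetcobuchi} from Theorem~\ref{theorem_universality}.\ref{theorem_universality_nondetcobuchi}. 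For the same reason, the \coNP lower bounds in items~\ref{theorem_modelchecking_detsafety} and~\ref{theorem_modelchecking_detcobuchi} follow from \coNP-hardness of universality for deterministic \spa and deterministic \cpa (Theorem~\ref{theorem_universality}.\ref{theorem_universality_detsafety} and~\ref{theorem_universality_detcobuchi}).

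For the \coNP upper bounds, I would argue as follows for the \spa case (the \cpa case being the exact analogue). Note that $\Lall(\trans)\subseteq\Lall(\aut,C)$ fails iff $\Lall(\trans)\cap\overline{\Lall(\aut,C)}\neq\emptyset$. By Theorem~\ref{theorem_detcomplementation}.\ref{theorem_detcomplementation_safety2reach}, $\overline{\Lall(\aut,C)}=\Lsexists(\aut',C')$ for a deterministic \srpa $(\aut',C')$, which may be assumed complete (completing a reachability automaton with a nonaccepting sink changes neither its language nor its determinism). Then I would form the synchronous product of $\trans$ with $(\aut',C')$: states are pairs, a transition is taken when both components can take it on the same letter, the Parikh constraint is that of $\aut'$ (the coordinates coming from $\trans$ are left unconstrained), and a product state is accepting iff its $\aut'$-component is. Since $\trans$ and the completed $\aut'$ both have a run on \emph{every} infinite word, an infinite run of the product on $w$ exists iff $w\in\Lall(\trans)$, and it reaches an accepting state with Parikh image in $C'$ on some prefix iff the $\aut'$-run does; so the product, read as an \srpa, accepts exactly $\Lall(\trans)\cap\Lsexists(\aut',C')$. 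Nonemptiness of this \rpa is in \NP by Theorem~\ref{thm_reachsafetyemptiness}.\ref{rpaemptiness}, hence the complement of the model-checking problem is in \NP and the problem itself in \coNP. For item~\ref{theorem_modelchecking_detcobuchi} one uses Theorem~\ref{theorem_detcomplementation}.\ref{theorem_detcomplementation_cobuchi2buchi} to get a deterministic \sbpa for the complement, takes the (now Büchi-accepting) product, and invokes Theorem~\ref{thm_reachsafetyemptiness}.\ref{bpaemptiness}.

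The main obstacle is purely quantitative rather than conceptual: for the \coNP membership one has to verify that the complementation constructions underlying Theorem~\ref{theorem_detcomplementation} (in particular the complementation of the semilinear set $C$ via Proposition~\ref{propsemilinearclosure}) together with the subsequent product are computable in polynomial time, or at least produce an instance small enough that the \NP nonemptiness procedure of~\cite{FL} still runs in nondeterministic polynomial time. Once this is checked, everything else is routine bookkeeping, and the entire conceptual content of the theorem is carried by the duality theorem and the nonemptiness results already proved.
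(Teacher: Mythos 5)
Your proposal is correct and follows essentially the same route as the paper: lower bounds by specializing to universality via a transition system with language $\Sigma^\omega$, and upper bounds by complementing the deterministic \spa (resp.\ \cpa) into a deterministic \srpa (resp.\ \sbpa) via Theorem~\ref{theorem_detcomplementation}, taking a product with $\trans$ (using that every run of $\trans$ is accepting), and invoking the \NP nonemptiness results of Theorem~\ref{thm_reachsafetyemptiness}. The size issue you flag for complementing the semilinear set is a legitimate point that the paper also leaves implicit rather than resolving differently.
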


\begin{proof}
Let $\trans$ be a transition system with $\Lall(\trans) = \Sigma^\omega$, e.g., a one-state transition system with a self-loop labeled with all letters in $\Sigma$. 
Then, $L \subseteq\Sigma^\omega$ is universal if and only if $L$ $\Lall(\trans) \subseteq L$. 
Thus, all six lower bounds (\coNP-hardness and undecidability) immediately follow from the analogous lower bounds for universality (see Theorem~\ref{theorem_universality}).
So, it remains to consider the two \coNP upper bounds.

So, fix a deterministic~\spa~$(\aut, C)$ and a transition system~$\trans$. 
We apply the usual approach to automata-theoretic model checking: We have $\Lall(\trans)\subseteq \Lall(\aut, C)$ if and only if $\Lall(\trans) \cap \overline{\Lall(\aut, C)} = \emptyset$.
Due to Theorem~\ref{theorem_detcomplementation}.\ref{theorem_detcomplementation_safety2reach} there is a deterministic~\srpa~$(\aut', C')$ accepting $\overline{\Lall(\aut, C)}$. 
Furthermore, using a product construction, one can construct an \rpa~$(\aut'', C'')$ accepting $\Lall(\trans) \cap \overline{\Lall(\aut, C)}$, which can then be tested for emptiness, which is in \coNP (see Theorem~\ref{thm_reachsafetyemptiness}).
Note that the product construction depends on the fact that every run of $\trans$ is accepting, i.e., the acceptance condition of the product automaton only has to check one acceptance condition.

The proof for deterministic co-Büchi Parikh automata is analogous, but using Büchi automata~$(\aut', C')$ and $(\aut'', C'')$.
\end{proof}
%%%%%%%%%%%%%%%%%%%%%%%%%%%%%%%%%%%%%%%%%%%%%%%%%%%%%%%%%%%%%%%%
%%%%%%%%%%%%%%%%%%%%%%%%%%%%%%%%%%%%%%%%%%%%%%%%%%%%%%%%%%%%%%%%
\subsection{Infinite Games}
\label{subsec_games}

In this section, we study infinite games with winning conditions specified by Parikh automata. 
Such games are the technical core of the synthesis problem, the problem of determining whether there is a reactive system satisfying a given specification on its input-output behavior. 
Our main result is that solving infinite games is undecidable for all acceptance conditions we consider here.

Here, we consider Gale-Stewart games~\cite{GaleStewart53}, abstract games induced by a language~$L$ of infinite words, in which two players alternately pick letters, thereby constructing an infinite word~$w$. 
One player aims to ensure that $w$ is in $L$ while the other aims to ensure that it is not in $L$.
Formally, given a language~$L\subseteq (\Sigma_1\times \Sigma_2)^\omega$, the game $G(L)$ is played between Player~1 and Player~2 in rounds $i=0,1,2,\ldots$ as follows: At each round~$i$, first Player~1 plays a letter $a_i\in\Sigma_1$ and then Player~2 answers with a letter $b_i\in \Sigma_2$. 
A play of $G(L)$ is an infinite outcome~$w = \binom{a_0}{b_0}\binom{a_1}{b_1}\cdots$ and Player~2 wins it if and only if $w \in L$.

A strategy for Player~$2$ in $G(L)$ is a mapping from $\Sigma_1^+$ to $\Sigma_2$ that gives for each prefix played by Player~1 the next letter to play. 
An outcome~$\binom{a_0}{b_0}\binom{a_1}{b_1}\cdots$ agrees with a strategy $\sigma$ if for each $i$, we have that $b_i=\sigma(a_0a_1\dots a_i)$. 
Player~2 wins $G(L)$ if she has a strategy that only agrees with outcomes that are winning for Player~2.

The next result follows immediately from the fact that for all classes of deterministic Parikh automata, either nonemptiness or universality is undecidable, and that these two problems can be reduced to solving Gale-Stewart games.

\begin{theorem}
\label{thm_dpagamesundec}
The problem \myquot{Given an  automaton~$(\aut, C)$, does Player~2 win $G(L(\aut, C))$?} is undecidable for the following classes of automata: (deterministic) \srpa, (deterministic) \spa, (deterministic) \sbpa, and (deterministic) \cpa.
\end{theorem}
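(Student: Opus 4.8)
The plan is to observe that this is an immediate corollary of the undecidability results for nonemptiness (Theorem~\ref{thm_reachsafetyemptiness}) and universality (Theorem~\ref{theorem_universality}), via the standard fact that each of these two problems is a special case of solving a Gale-Stewart game in which one player is deprived of all choices. It suffices to treat the \emph{deterministic} variants: every deterministic automaton is in particular nondeterministic, so an algorithm for the game problem over a nondeterministic class would also decide the game problem over the corresponding deterministic subclass; hence undecidability transfers upward from deterministic to nondeterministic automata.

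First I would set up the two reductions. Given a language~$L \subseteq \Sigma^\omega$, for \textbf{nonemptiness} take $\Sigma_1 = \set{\star}$ a singleton, $\Sigma_2 = \Sigma$, and let $L' \subseteq (\Sigma_1 \times \Sigma_2)^\omega$ be the image of $L$ under the bijective renaming $b \mapsto \binom{\star}{b}$ (using that $(\Sigma_1 \times \Sigma_2)^\omega$ is in bijection with $\Sigma_2^\omega$ when $\Sigma_1$ is a singleton). In $G(L')$ Player~1 is forced to play $\star$ at every round, so the set of outcomes agreeing with any Player~2 strategy ranges over exactly the renamings of the words Player~2 can build, that is, over all of $\Sigma^\omega$; consequently Player~2 wins $G(L')$ if and only if some word lies in $L'$, i.e.\ if and only if $L \neq \emptyset$. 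Symmetrically, for \textbf{universality} take $\Sigma_2 = \set{\star}$, $\Sigma_1 = \Sigma$, and let $L'$ be the image of $L$ under $a \mapsto \binom{a}{\star}$. Now Player~2 has only the trivial strategy, the outcomes range over all of $\Sigma^\omega$, and Player~2 wins $G(L')$ if and only if every word is in $L'$, i.e.\ if and only if $L = \Sigma^\omega$.

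Next I would check that these renamings keep us inside the relevant class. Renaming the input alphabet of a Parikh automaton~$(\aut, C)$ by a bijection merely relabels the first component of each transition label, leaving the state set, the transition targets, the vectors, and the semilinear set~$C$ unchanged; in particular it preserves the acceptance condition and, since determinism is defined with respect to the input alphabet only, it preserves determinism. Therefore: for deterministic \spa and deterministic \cpa I use the nonemptiness reduction together with Theorem~\ref{thm_reachsafetyemptiness}.\ref{spaemptiness} and Theorem~\ref{thm_reachsafetyemptiness}.\ref{cpaemptiness}; for deterministic \srpa and deterministic \sbpa I use the universality reduction together with Theorem~\ref{theorem_universality}.\ref{theorem_universality_reach} and Theorem~\ref{theorem_universality}.\ref{theorem_universality_buchi}. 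The nondeterministic versions of all four classes follow a fortiori by the subsumption argument above.

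There is no deep obstacle here; the theorem is genuinely a corollary. The only points requiring care are purely bookkeeping: matching each automaton class to whichever of nonemptiness or universality is its undecidable problem, and verifying that the bijective alphabet renaming used to embed that problem into a Gale-Stewart game stays within the class and preserves determinism. Both are routine, which is exactly why the result is stated as an immediate consequence of the earlier sections.
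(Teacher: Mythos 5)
Your proposal is correct and follows essentially the same route as the paper: reduce nonemptiness (resp.\ universality) to the game problem by making one player's alphabet a singleton, check that the alphabet renaming preserves the automaton class and determinism, and then match each class to whichever of the two problems is undecidable for it (nonemptiness for deterministic \spa and \cpa, universality for deterministic \srpa and \sbpa). The bookkeeping and the a fortiori step from deterministic to nondeterministic classes are exactly as in the paper's argument.
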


\begin{proof}
The results follow immediately from the following two facts and the undecidability of nonemptiness or universality for the corresponding automata types. Fix a language~$L$.

\begin{itemize}
    \item Player~2 wins $G(\binom{\#}{L})$ with
$\binom{\#}{L} = \set{ \binom{\#}{w_0}  \binom{\#}{w_{1}}\binom{\#}{w_{2}} \cdots  \mid   w_0  w_{1}w_2 \cdots \in L }$ if and only if $L$ is nonempty.
    
    \item Player~2 wins $G(\binom{L}{\#})$ with
$\binom{L}{\#} = \set{ \binom{w_0}{\#}  \binom{w_{1}}{\#}\binom{w_{2}}{\#} \cdots  \mid   w_0  w_{1}w_2 \cdots \in L}$ if and only if $L$ is universal.
\end{itemize}
To conclude, note that a Parikh automaton for $L$ can be turned into an equivalent one for $\binom{\#}{L}$ and $\binom{L}{\#}$ while preserving determinism and the acceptance type, by just replacing each transition label~$a$ by $\binom{\#}{a}$ and $\binom{a}{\#}$, respectively.
\end{proof}

%%%%%%%%%%%%%%%%%%%%%%%%%%%%%%%%%%%%%%%%%%%%%%%%%%%%%%%%%%%%%%%%
%%%%%%%%%%%%%%%%%%%%%%%%%%%%%%%%%%%%%%%%%%%%%%%%%%%%%%%%%%%%%%%%
%%%%%%%%%%%%%%%%%%%%%%%%%%%%%%%%%%%%%%%%%%%%%%%%%%%%%%%%%%%%%%%%
%%%%%%%%%%%%%%%%%%%%%%%%%%%%%%%%%%%%%%%%%%%%%%%%%%%%%%%%%%%%%%%%
\section{Conclusion}
\label{sec_conc}

In this work, we have extended Parikh automata to infinite words and studied expressiveness, closure properties, and decision problems. 
Unlike their $\omega$-regular counterparts, Parikh automata on infinite words do not form a nice hierarchy induced by their acceptance conditions. 
This is ultimately due to the fact that transitions cannot be disabled by the counters running passively along a run. 
Therefore, a safety condition on the counters cannot be turned into a, say, Büchi condition on the counters, something that is trivial for state conditions.
Furthermore, we have shown that nonemptiness, universality, and model checking are decidable for some of the models we introduced, but undecidable for others. 
Most importantly, we prove \coNP-completeness of model checking with specifications given by deterministic Parikh automata with safety and co-Büchi acceptance.
This allows for the automated verification of quantitative safety and persistence properties.
Finally, solving infinite games is undecidable for all models.

Note that we have \myquot{only} introduced reachability, safety, Büchi, and co-Büchi Parikh automata. 
There are  many more acceptance conditions in the $\omega$-regular setting, e.g., parity, Rabin, Streett, and Muller. 
We have refrained from generalizing these, as any natural definition of these acceptance conditions will subsume co-Büchi acceptance, and therefore have an undecidable nonemptiness problem.

In future work, we aim to close the open closure property in Table~\ref{table_results}.
Also, we leave open the complexity of the decision problems in case the semilinear sets are not given by their generators, but by a Presburger formula.

One of the appeals of Parikh automata over finite words is their robustness: they can equivalently be defined via a quantitative variant of WMSO, via weighted automata, and other models (see the introduction for a more complete picture).
In future work, we aim to provide similar alternative definitions for Parikh automata on infinite words, in particular, comparing our automata to blind multi-counter automata~\cite{DBLP:journals/fuin/FernauS08} and reversal-bounded counter machines~\cite{Ibarra78}.
However, let us mention that the lack of closure properties severely limits the chances for a natural fragment of MSO being equivalent to Parikh automata on infinite~words.

Let us conclude with the following problem for further research: If a Parikh automaton with, say safety acceptance, accepts an $\omega$-regular language, is there then an equivalent $\omega$-regular safety automaton?
Stated differently, does Parikhness allow to accept more $\omega$-regular languages? The same question can obviously be asked for other acceptance conditions as~well.

\paragraph*{Acknowledgements}
We want to thank an anonymous reviewer for proposing Lemma~\ref{lemma_buchilike}.

\bibliography{biblio}

\begin{thebibliography}{10}

\bibitem{BlondelBKPT01}
Vincent~D. Blondel, Olivier Bournez, Pascal Koiran, Christos~H. Papadimitriou,
  and John~N. Tsitsiklis.
\newblock Deciding stability and mortality of piecewise affine dynamical
  systems.
\newblock {\em Theor. Comput. Sci.}, 255(1-2):687--696, 2001.

\bibitem{BCKN20}
Alin Bostan, Arnaud Carayol, Florent Koechlin, and Cyril Nicaud.
\newblock Weakly-unambiguous {P}arikh automata and their link to holonomic
  series.
\newblock In Artur Czumaj, Anuj Dawar, and Emanuela Merelli, editors, {\em
  ICALP 2020}, volume 168 of {\em LIPIcs}, pages 114:1--114:16. Schloss
  Dagstuhl - Leibniz-Zentrum für Informatik, 2020.

\bibitem{CFM11}
Micha{\"{e}}l Cadilhac, Alain Finkel, and Pierre McKenzie.
\newblock On the expressiveness of {P}arikh automata and related models.
\newblock In Rudolf Freund, Markus Holzer, Carlo Mereghetti, Friedrich Otto,
  and Beatrice Palano, editors, {\em {NCMA} 2011}, volume 282 of {\em
  books@ocg.at}, pages 103--119. Austrian Computer Society, 2011.

\bibitem{CFM13}
Micha{\"{e}}l Cadilhac, Alain Finkel, and Pierre McKenzie.
\newblock Unambiguous constrained automata.
\newblock {\em Int. J. Found. Comput. Sci.}, 24(7):1099--1116, 2013.

\bibitem{CM17}
Giusi Castiglione and Paolo Massazza.
\newblock On a class of languages with holonomic generating functions.
\newblock {\em Theor. Comput. Sci.}, 658:74--84, 2017.

\bibitem{CCLP17}
Lorenzo Clemente, Wojciech Czerwinski, Slawomir Lasota, and Charles Paperman.
\newblock {Regular Separability of {P}arikh Automata}.
\newblock In Ioannis Chatzigiannakis, Piotr Indyk, Fabian Kuhn, and Anca
  Muscholl, editors, {\em ICALP 2017}, volume~80 of {\em LIPIcs}, pages
  117:1--117:13. Schloss Dagstuhl--Leibniz-Zentrum für Informatik, 2017.

\bibitem{DBLP:journals/jcss/CohenG77}
Rina~S. Cohen and Arie~Y. Gold.
\newblock Theory of omega-languages. {I}. {C}haracterizations of
  omega-context-free languages.
\newblock {\em J. Comput. Syst. Sci.}, 15(2):169--184, 1977.

\bibitem{DFT19}
Luc Dartois, Emmanuel Filiot, and Jean{-}Marc Talbot.
\newblock Two-way {P}arikh automata with a visibly pushdown stack.
\newblock In Mikolaj Bojanczyk and Alex Simpson, editors, {\em FOSSACS 2019},
  volume 11425 of {\em LNCS}, pages 189--206. Springer, 2019.

\bibitem{DM00}
J{\"{u}}rgen Dassow and Victor Mitrana.
\newblock Finite automata over free groups.
\newblock {\em Int. J. Algebra Comput.}, 10(6):725--738, 2000.

\bibitem{Dickson13}
Leonard~E. Dickson.
\newblock Finiteness of the odd perfect and primitive abundant numbers with n
  distinct prime factors.
\newblock {\em Amer. Journal Math.}, 35(4):413--422, 1913.

\bibitem{DBLP:journals/fuin/FernauS08}
Henning Fernau and Ralf Stiebe.
\newblock Blind counter automata on omega-words.
\newblock {\em Fundam. Informaticae}, 83(1-2):51--64, 2008.

\bibitem{FL}
Diego Figueira and Leonid Libkin.
\newblock Path logics for querying graphs: Combining expressiveness and
  efficiency.
\newblock In {\em {LICS} 2015}, pages 329--340. {IEEE} Computer Society, 2015.

\bibitem{FGM19}
Emmanuel Filiot, Shibashis Guha, and Nicolas Mazzocchi.
\newblock Two-way {Parikh} automata.
\newblock In Arkadev Chattopadhyay and Paul Gastin, editors, {\em {FSTTCS}
  2019}, volume 150 of {\em LIPIcs}, pages 40:1--40:14. Schloss Dagstuhl -
  Leibniz-Zentrum für Informatik, 2019.

\bibitem{FMR20}
Emmanuel Filiot, Nicolas Mazzocchi, and Jean{-}Fran{\c{c}}ois Raskin.
\newblock A pattern logic for automata with outputs.
\newblock {\em Int. J. Found. Comput. Sci.}, 31(6):711--748, 2020.

\bibitem{GaleStewart53}
David Gale and F.~M. Stewart.
\newblock Infinite games with perfect information.
\newblock In Harold~William Kuhn and Albert~William Tucker, editors, {\em
  Contributions to the Theory of Games (AM-28), Volume II}, chapter~13, pages
  245--266. Princeton University Press, 1953.

\bibitem{GS}
Seymour Ginsburg and Edwin~H. Spanier.
\newblock Semigroups, {P}resburger formulas, and languages.
\newblock {\em Pacific Journal of Mathematics}, 16(2):285 -- 296, 1966.

\bibitem{Ibarra78}
Oscar~H. Ibarra.
\newblock Reversal-bounded multicounter machines and their decision problems.
\newblock {\em J. {ACM}}, 25(1):116--133, 1978.

\bibitem{KR}
Felix Klaedtke and Harald Rue{\ss}.
\newblock Monadic second-order logics with cardinalities.
\newblock In Jos C.~M. Baeten, Jan~Karel Lenstra, Joachim Parrow, and
  Gerhard~J. Woeginger, editors, {\em {ICALP} 2003}, volume 2719 of {\em LNCS},
  pages 681--696. Springer, 2003.

\bibitem{Minsky67}
Marvin~L. Minsky.
\newblock {\em Computation: Finite and Infinite Machines}.
\newblock Prentice-Hall, 1967.

\bibitem{MS01}
Victor Mitrana and Ralf Stiebe.
\newblock Extended finite automata over groups.
\newblock {\em Discret. Appl. Math.}, 108(3):287--300, 2001.

\bibitem{Buchi}
J.~{Richard Büchi}.
\newblock Symposium on decision problems: On a decision method in restricted
  second order arithmetic.
\newblock In Ernest Nagel, Patrick Suppes, and Alfred Tarski, editors, {\em
  Logic, Methodology and Philosophy of Science}, volume~44 of {\em Studies in
  Logic and the Foundations of Mathematics}, pages 1--11. Elsevier, 1966.

\bibitem{Kar04}
Karianto Wong.
\newblock {P}arikh automata with pushdown stack, 2004.
\newblock Diploma thesis, RWTH Aachen University.

\end{thebibliography}

\clearpage
\appendix
\section*{Appendix}

This appendix contains all proofs omitted due to space restrictions.

%%%%%%%%%%%%%%%%%%%%%%%%%%%%%%%%%%%%%%%%%%%%%%%%%%%%%%%%%%%%%%%%
%%%%%%%%%%%%%%%%%%%%%%%%%%%%%%%%%%%%%%%%%%%%%%%%%%%%%%%%%%%%%%%%
%%%%%%%%%%%%%%%%%%%%%%%%%%%%%%%%%%%%%%%%%%%%%%%%%%%%%%%%%%%%%%%%
%%%%%%%%%%%%%%%%%%%%%%%%%%%%%%%%%%%%%%%%%%%%%%%%%%%%%%%%%%%%%%%%
\section{Proofs omitted in Section~\ref{sec:expressiveness}}

%%%%%%%%%%%%%%%%%%%%%%%%%%%%%%%%%%%%%%%%%%%%%%%%%%%%%%%%%%%%%%%%
%%%%%%%%%%%%%%%%%%%%%%%%%%%%%%%%%%%%%%%%%%%%%%%%%%%%%%%%%%%%%%%%
\subsection{Proof of Theorem~\ref{theorem_synchvsasynch}}

\begin{proof}
\ref{theorem_synchvsasynch_reach}.)
First, let us show that every \arpa can be turned into an equivalent \srpa, i.e., we need to synchronize an $F$-prefix and a $C$-prefix.
To this end, we add two Boolean flags~$f_\text{acc}$ and $f_\text{frz}$ to the state space of $\aut$ to obtain $\aut'$.
The flag~$f_\text{acc}$ is raised once an accepting state has been visited while the flag~$f_\text{frz}$ can be nondeterministically raised at any time during a run, with the effect that the extended Parikh image is frozen, i.e., all subsequent transitions are labeled with a zero vector. 
Thus, if $(\aut, C)$ has an asynchronous reachability accepting run, then $(\aut', C)$ will have a (synchronous) reachability-accepting run which is obtained by freezing the extended Parikh image, if the extended Parikh image is in $C$ before an accepting state is visited for the first time.
On the other hand, if an accepting state is visited before the extended Parikh image is in $C$, then we do not have to freeze, as a state in $\aut'$ is accepting as long as the flag~$f_\text{acc}$ is equal to one.

Formally, let $\aut = (Q, \Sigma\times D, q_\init,\Delta, F)$ and let $\ind{F} \colon Q \rightarrow \set{0,1}$ be the indicator function for $F$. We define $\aut' = (Q \times \set{0,1} \times \set{0,1}, \Sigma\times (D \cup \set{\vec{0}}), (q_\init, \ind{F}(q_\init) , 0), \Delta', Q \times \set{1}\times\set{0,1})$ with     \begin{align*}
    \Delta' = {}&{}\set{ ((q,f_{\text{acc}},0),(a,\vec{v}),(q',\max\set{\ind{F}(q'),f_\text{acc}},0)  \mid (q,(a,\vec{v}),q') \in\Delta } \cup\\
    {}&{}\set{ ((q,f_{\text{acc}},f_{\text{frz}}),(a,\vec{0}),(q',\max\set{\ind{F}(q'),f_\text{acc}},1)  \mid (q,(a,\vec{v}),q') \in\Delta \text{ and } f_{\text{frz}} \in\set{0,1}},
\end{align*}
where $\vec{0}$ is the zero vector of appropriate dimension.
Then, we have $\Lsexists(\aut', C) = \Laexists(\aut, C)$. 

Now, consider the other inclusion, i.e., we want to turn an \srpa into an equivalent \arpa.
Here, we reflect whether the last state of a run prefix is accepting or not in the extended Parikh image of the run prefix. 
As we consider synchronous reachability acceptance, the visit of an accepting state and the Parikh image being in the semilinear set happen at the same time, i.e., they can be captured just by a semilinear set using the additional information in the extended Parikh image. 
As now both requirements are captured by the semilinear set, we can just make every state accepting to obtain an equivalent \arpa.

Formally, let $(\aut, C)$ be an \srpa with $\aut = (Q, \Sigma\times D, q_\init,\Delta, F)$.
If $q_\init$ is in $F$ and the zero vector is in $C$, then $\Lsexists(\aut, C) = \Sigma^\omega$ due to completeness and we have $\Lsexists(\aut, C) = \Laexists(\aut, C)$, i.e., the transformation is trivial. 
So, in the following we assume that $q_\init \notin F$ or that the zero vector is not in $C$.
This implies that if a run is accepting, then this is witnessed by a \emph{nonempty} $FC$-prefix.
Hence, if $q_\init$ is in $F$, then we can add a new nonaccepting initial state to $\aut$ without any incoming transitions, with the same outgoing transitions as the original initial state. 
The resulting automaton is equivalent to $(\aut, C)$ and has a nonaccepting initial state.

Now, define $\aut' = (Q, \Sigma \times D', q_\init, \Delta', Q)$ where 
\[D' = \set{(v_0, \ldots, v_{d-1}, b) \mid (v_0, \ldots, v_{d-1}) \in D \text{ and } b\in\set{0,1} }\]
and $\Delta'$ contains the transition
\[
(q, (a, (v_0, \ldots, v_{d-1},b)), q')
\]
for every transition~$(q, (a,(v_0,\ldots, v_{d-1})),q') \in \Delta$. Here, $b = 0$ if both $q$ and $q'$ are accepting or both are nonaccepting, otherwise $b=1$.

Note that there is a bijection between run prefixes in $\aut$ and run prefixes in $\aut'$, as we only added a new component to the vector labelling each transition.
Now, an induction shows that a nonempty run prefix in $\aut'$ has an extended Parikh image whose last (new) component is odd if and only if the run prefix ends in a state in $F$.
Thus, consider the semilinear set
\[C' = \set{(v_0, \ldots, v_d) \mid (v_0, \ldots, v_{d-1}) \in C \wedge v_d \text{ is odd}}.\]
Then, we have $\Laexists(\aut', C) = \Lsexists(\aut, C)$, as every state of $\aut'$ is accepting.

\ref{theorem_synchvsasynch_detreach}.)
The transformation of an \srpa into an equivalent \arpa presented in Item~\ref{theorem_synchvsasynch_reach} preserves determinism, so every deterministic~\srpa can be turned into a deterministic~\arpa.

Now, we show the strictness of the inclusion.
Consider the language~$L$ of infinite words over~$\set{a,b}$ that have a nonempty $(a,b)$-balanced prefix and contain at least one $c$, which is accepted by the deterministic \arpa~$(\aut, C)$ with $\aut$ depicted in Figure~\ref{fig_balanced_and_c} and $C = \set{(n,n) \mid n > 0}$.

\begin{figure}
    \centering
    \begin{tikzpicture}[ultra thick]
    
    \node[state] (1) at (0,0) {};
    \node[state, accepting] (3) at (3,0) {};
    
    \path[-stealth]
    (-1,0) edge (1)
    (1) edge[loop above] node[align = left,above] {$a,(1,0)$\\$b,(0,1)$}()
    (1) edge[] node[above] {$c,(0,0)$} (2)
    (2) edge[loop right] node[align = left,right] {$a,(1,0)$\\$b,(0,1)$\\$c,(0,0)$} ()
    ;
    
    \end{tikzpicture}
    \caption{The automaton for Theorem~\ref{theorem_synchvsasynch}.\ref{theorem_synchvsasynch_detreach}.}
    \label{fig_balanced_and_c}
\end{figure}
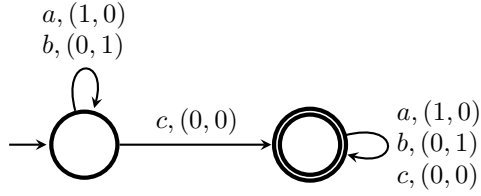

Now, assume there is a deterministic \srpa~$(\aut, C)$ accepting $L$, say with $n$ states.
Consider the word~$w = a(a^{n}b^{n})^{n+1} c^\omega$, which is not in $L$ as it has no $(a,b)$-balanced prefix.
Due to completeness\footnote{Note that $L$ is a liveness property, i.e., every finite word is a prefix of some word in $L$. Hence, any automaton accepting $L$ must be complete. Thus, our separation argument here can even be generalized to incomplete automata. In fact, most, but not all, of our separations the invocation of completeness as an assumption is just for convenience and could be replaced by arguing that the language witnessing the separation is a liveness property.}, $(\aut,C)$ has a run~$\rho$ processing $w$, which is not synchronously reachability accepting. 

The run~$\rho$ processing $w$ contains $n+1$ cycles in the infixes processing the $a^{n}$-infixes. 
Hence, two such cycles start (and thus end) in the same state. 
Let $\rho = \rho_0 \rho_1 \rho_2 \rho_3 \rho'$ be the decomposition of $\rho$ such that $\rho_1$ and $\rho_3$ are these two cycles. 
As the cycles start and end in the same state, $\rho_s = \rho_0 \rho_2 \rho_1 \rho_3 \rho'$ is also a run of $(\aut,C)$, i.e., we shift the first cycle further back.

The first cycle processes a word in $a^+$, say $a^k$ with $0<k \le n$.
Thus, the word processed by $\rho_s$ has the form
\[
w_s = a(a^{n}b^{n})^{n_0} (a^{n-k} b^{n}) (a^{n}b^{n})^{n_1} (a^{n+k}b^{n}) (a^{n}b^{n})^{n_2} c^\omega
\]
for some $n_0, n_1, n_2 \ge 0$ with $n_0 + n_1 +n_2 + 2 = n+1$. 
Hence, $w_s$ contains a nonempty $(a,b)$-balanced prefix~$w'$ (ending in the $(n_0+1)$-th block of $b$'s).
Hence, $w_s \in L$.

We now show that the run~$\rho_s$ is not synchronously reachability accepting. 
As $\aut$ is deterministic, this implies that $w_s$ is not accepted by $(\aut,C)$, which yields the desired contradiction to $\Lsexists(\aut, C) = L$.

So, consider the run~$\rho_s$. 
Its prefix of length~$(n+1)2n+1$ processing
\[a(a^{n}b^{n})^{n_0} (a^{n-k} b^{n}) (a^{n}b^{n})^{n_1} (a^{n+k}b^{n}) (a^{n}b^{n})^{n_2}\] (which does not contain a $c$) cannot contain an $FC$-prefix.
Every such prefix could, due to completeness, be completed to an accepting run processing a word without a $c$, resulting in a contradiction.

So, consider a prefix of $\rho_s$ of length greater than $(n+1)2n+1$ and the prefix of $\rho$ of the same length. 
Both end in the same state (as the shifting is confined to the prefix of length $(n+1)2n+1$) and have the same extended Parikh image, as it only depends on the number of occurrences of transitions, not their order. 
As $\rho$ is synchronously reachability rejecting, we conclude that the prefix of $\rho_s$ is not an $FC$-prefix. 

So, $\rho_s$ is not synchronously reachability accepting, as it has no $FC$-prefix.
This yields the desired contradiction.

\ref{theorem_synchvsasynch_buchi}.)
We describe how to turn a given~\abpa~$(\aut, C)$ into an \sbpa~$(\aut', C')$ by synchronizing $F$-prefixes and $C$-prefixes (not necessarily all, infinitely many suffice). 
We do so by adding a flag~$f_{\text{acc}}$ to the state space that is deterministically raised when an accepting state is visited.
If this flag is high, it can nondeterministically be lowered. 
In the \sbpa we construct, only states reached by lowering the flag are accepting.
So, the flag should be lowered when the extended Parikh image of the current run prefix is in $C$, thereby synchronizing both events. 
Altogether, the flag we use can assume three distinct values: 
\myquot{low} (value $0$), \myquot{high} (value $1$), and \myquot{low, but high in the last step} (value $2$), i.e., lowered during the last transition.

Formally, consider an \abpa~$(\aut, C)$ with $\aut = (Q, \Sigma \times D, q_\init, \Delta, F)$.
We define~$\aut' = (Q \times \set{0,1,2}, \Sigma \times D, (q_\init, 0), \Delta', Q \times \set{2})$ where 
$\Delta'$ contains the following transitions for every $(q, (a, \vec{v}), q') \in \Delta$:
\begin{itemize}
    \item $((q,0),(a,\vec{v}),(q',\ind{F}(q')))$ where $\ind{F}\colon Q \rightarrow \set{0,1}$ is the indicator function for $F$: 
    If the flag is low, it is raised if and only if an accepting state is reached by the transition.
    
    \item $((q,1),(a,\vec{v}),(q',1))$: The flag stays high.
    
    \item $((q,1),(a,\vec{v}),(q',2))$: The flag is (nondeterministically) lowered.
    Note that we move to state~$(q',2)$ with the $2$ signifying that the flag was lowered during the last transition. 
    
    \item $((q,2),(a,\vec{v}),(q',\ind{F}(q')))$:  The flag has just been lowered and is raised again, if an accepting state reached by the transition. 
    
\end{itemize}
Then, we have $\LaBuchi(\aut, C) = \LsBuchi(\aut', C)$.

For the other direction, we show how to turn an \sbpa~$(\aut, C)$ into an equivalent \abpa. 
Note that we can assume without loss of generality that the initial state of $\aut$ is not accepting.
If it is, we just duplicate it as described in Item~\ref{theorem_synchvsasynch_reach} above to obtain an \nfa equivalent to $\aut$.

With this assumption, the construction turning an \srpa into an equivalent \arpa also works for the Büchi acceptance condition: we have $\LaBuchi(\aut', C') = \LsBuchi(\aut, C)$ where $(\aut', C')$ is obtained from $(\aut, C)$ by reflecting whether a run ends in an accepting state in the prefix's extended Parikh image. 

\ref{theorem_synchvsasynch_detbuchi}.)
Again, the transformation turning an \sbpa into an equivalent \abpa presented in Item~\ref{theorem_synchvsasynch_buchi} preserves determinism, which yields the inclusion. 

Now, we show the strictness of the inclusion.
Consider the language $L$ of infinite words over $\set{a,b,c}$ that contain 
infinitely many $(a,b)$-balanced prefixes and infinitely many $c$'s.
It is recognised by the deterministic \abpa~$(\aut, C)$ where $\aut$ is depicted in Figure~\ref{fig_infbalanced_and_infc} and $C = \set{(n,n) \mid n\in\nats}$.

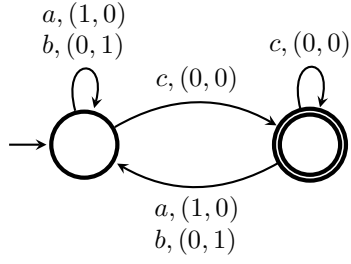
\begin{figure}
    \centering
    \begin{tikzpicture}[ultra thick]
    
    \node[state] (1) at (0,0) {};
    \node[state, accepting] (3) at (3,0) {};
    
    \path[-stealth]
    (-1,0) edge (1)
    (1) edge[loop above] node[align = left,above] {$a,(1,0)$\\$b,(0,1)$}()
    (1) edge[bend left] node[above] {$c,(0,0)$} (2)
    (2) edge[loop above] node[above] {$c,(0,0)$} ()
    (2) edge[bend left] node[below, align=left]{$a,(1,0)$\\$b,(0,1)$}(1)
    ;
    
    \end{tikzpicture}
    \caption{The automaton for Theorem~\ref{theorem_synchvsasynch}.\ref{theorem_synchvsasynch_detbuchi}.}
    \label{fig_infbalanced_and_infc}
\end{figure}

Now, towards a contradiction, assume 
there is a deterministic (w.l.o.g.\ complete) \sbpa~$(\aut,C)$ 
accepting $L$, say with $n$ states.
We need to introduce some language to simplify our arguments.
Let $\rho$ and $\rho'$ be two infinite runs or two run prefixes of the same length.
We say that $\rho$ covers $\rho'$ if the following holds:
for every $FC$-prefix of $\rho'$, the prefix of $\rho$  of the same length is also an $FC$-prefix.
If $\rho'$ is an infinite accepting run and $\rho$ covers $\rho'$, then $\rho$ is accepting as well.

We construct two infinite words~$w^-$, which is \emph{not} in $L$, and $w^+$, which is in $L$,
such that the (unique) run of~$\aut$ processing $w^-$ covers the (unique) run processing $w^+$.
This contradicts the fact that~$\aut$ accepts~$L$. 

The idea behind our construction is as follows:
we pick the word $w^-$ in the language~$a \big( (a^{n}b^{n})^* c \big)^\omega$,
which guarantees  $w^- \not\in L$
since it has no $(a,b)$-balanced prefix.
Then, we create the word $w^+$
by shifting cycles in the run of $(\aut,C)$ processing $w^-$,
in a way that creates infinitely many $(a,b)$-balanced prefixes.
The challenging part of the proof is that we need to guarantee that
shifting the cycles does not create new $FC$-prefixes.
By doing so, we ensure that the run processing $w^-$ covers the one processing $w^+$.

Formally, we define two sequences of run prefixes~$(\rho^-_i)_{i \in \mathbb{N}}$ and
$(\rho^+_i)_{i \in \mathbb{N}}$, processing finite words
$(w^-_i)_{i \in \mathbb{N}}$ and
$(w^+_i)_{i \in \mathbb{N}}$ respectively,
such that for all $i \ge 0$:
\begin{enumerate}
    \item\label{item:prefix}
    $\rho^-_{i}$ is a strict prefix of $\rho^-_{i+1}$ and $\rho^+_{i}$ is a strict prefix of $\rho^+_{i+1}$.
    \item\label{item:multiset}
    $\rho^-_i$ and $\rho^+_i$ have the same length, end in the same state, and have the same extended Parikh image.
    \item\label{item:accept}
    $\rho^-_i$ covers $\rho^+_i$.
    \item\label{item:unbalanced}
    $\occ{w^-_i}{a} = \occ{w^-_i}{b}+1$, $\occ{w^-_i}{c} = i$, and $w^-_i$ contains no $(a,b)$-balanced prefix.
    \item\label{item:balanced}
    $\occ{w^+_i}{a} = \occ{w^+_i}{b}+1$, $\occ{w^+_i}{c} = i$,
    and $w^+_i$ contains at least $i$ $(a,b)$-balanced prefixes.
\end{enumerate}
Then, we define $w^- = \lim_{i \to \infty} w^-_i$ and $w^+ = \lim_{i \to \infty} w^+_i$, which is well-defined due to Property~\ref{item:prefix}.
These words then satisfy the desired requirements:
$w^- \not\in L$ (due to Property~\ref{item:unbalanced}), $w^+ \in L$ (due to Property~\ref{item:balanced}),
and the run processing $w^-$ covers the run processing $w^+$ (due to determinism and Property~\ref{item:accept}).
Note that Property~\ref{item:multiset} is used in the construction below.

\newcommand{\greek}{\chi}

We define the sequences $(\rho^-_i)_{i \in \mathbb{N}}$
and $(\rho^+_i)_{i \in \mathbb{N}}$ inductively:
First, we set $\rho^-_0 = \rho^+_0$ to be the unique run prefix processing $a$, which satisfies all five conditions.
We now show how to build $\rho^-_{i+1}$ and $\rho^+_{i+1}$
based on $\rho^-_i$ and $\rho^+_i$
while preserving the desired properties.
Let $\greek^-$ denote the (unique) run of $(\aut,C)$
processing the infinite word~$w^-_i (a^{n}b^{n})^\omega$.
Since $\aut$ has $n$ states,
the run~$\greek^-$ visits a cycle
while processing each $a^{n}$-infix
occurring in $(a^{n}b^{n})^\omega$.
Therefore, one of these cycles, that we denote by $\pi$,
is visited infinitely often (recall that cycles are simple).

Hence, we can decompose $\greek^-$ into $\rho_i^- \greek_0   \pi  \greek_1   \pi   \greek_2   \pi   \cdots$, where we assume each $\greek_j$ to process at least one $b^n$ block (note that this can be achieved by adding some $ \pi$ to the $\greek_j$ if necessary).
Now, consider the run $\greek^+$ of $\aut$ obtained by
concatenating the run prefix $\rho_i^+$
with the infinite suffix $\greek_0   \pi  \greek_1   \pi   \greek_2   \pi   \cdots$,
and then swapping each occurrence of $\pi$
in the decomposition of the suffix
with the run infix~$\rho_j$ immediately following it:
% \[
% \begin{array}{llll}
% \rho & = & \rho_i^- \ \rho_0 \   \pi \  \rho_1 \   \pi  \ \rho_2 \   \pi \  \rho_3  \ \cdots; \\
% \rho^{s} & = & \rho_i^- \ \rho_0 \  \rho_1 \   \pi \  \rho_2 \   \pi \  \rho_3 \   \pi  \ \cdots.
% \end{array}
% \]

\begin{tikzpicture}
\providecommand\x{}
\providecommand\y{}
\renewcommand{\x}{0.7}
\renewcommand{\y}{0.8}

\node[] at (-3.*\x,0.5*\y) {$\greek^-$ \strut};
\node[] at (-3.*\x,-0.5*\y) {$\greek^+$ \strut};

\node[] at (-1.75*\x,0.5*\y) {= \strut};
\node[] at (-1.75*\x,-0.5*\y) {= \strut};

\node[] at (-0.5*\x,0.5*\y) {$\rho_i^-$ \strut};
\node[] at (-0.5*\x,-0.5*\y) {$\rho_i^+$ \strut};

\node[] at (0.5*\x,0.5*\y) {$\greek_0$ \strut};
\node[] at (0.5*\x,-0.5*\y) {$\greek_0$ \strut};

\node[] at (14*\x,0.5*\y) {$\cdots$ \strut};
\node[] at (14*\x,-0.5*\y) {$\cdots$ \strut};

% \node[] at (14.5*\x,0.5*\y) {$;$ \strut};
% \node[] at (14.5*\x,-0.5*\y) {$.$ \strut};

\foreach \i in
{1,2,3,4,5,6}{
\foreach \j in
{-1,1}{
\node[circle] at (2*\x*\i-0.5*\x*\j,-0.5*\y*\j) {$\greek_\i$ \strut};
\node[circle] at (2*\x*\i+0.5*\x*\j,-0.5*\y*\j) {$\pi$ \strut};
}}

\foreach \i in
{1,2,3,4,5,6}{
\draw[->,shorten <=2pt,rounded corners]
(2*\x*\i+0.3,0.3*\y) -- (2*\x*\i+0.2,-0.*\y) -- (2*\x*\i-0.2,-0.1*\y) -- (2*\x*\i-0.3,-0.4*\y);
\draw[->,line width=1mm,white,shorten <=2pt,rounded corners]
(2*\x*\i-0.3,0.3*\y) -- (2*\x*\i-0.2,-0.*\y) -- (2*\x*\i+0.2,-0.1*\y) -- (2*\x*\i+0.3,-0.4*\y);
\draw[->,shorten <=2pt,rounded corners]
(2*\x*\i-0.3,0.3*\y) -- (2*\x*\i-0.2,-0.*\y) -- (2*\x*\i+0.2,-0.1*\y) -- (2*\x*\i+0.3,-0.4*\y);
% \path[->,shorten <=2pt] (2*\x*\i-0.3,0.2*\y) edge (2*\x*\i+0.3,-0.3*\y);
% \path[->,shorten <=2pt] (2*\x*\i+0.3,0.2*\y) edge (2*\x*\i-0.3,-0.3*\y);
}
\end{tikzpicture}

The word processed by $\greek^+$ is not in $L$
as it contains finitely many $c$'s (only the finite prefix processed by~$\rho_i^+$ can contain $c$'s).
Therefore, it is not accepting.
As a consequence, there exists some integer $j \geq 0$
such that no $FC$-prefix ends in the infix~$\greek_j  \pi$ of $\greek^+$.

We are now ready to define $\rho^-_{i+1}$ and $\rho^+_{i+1}$.
Recall that $\rho_{j+1}$ processes a word of the form~$u b^n v$. 
Now, let $\greek_{j+1}'$ be the unique run prefix starting in the same state as $\greek_{j+1}$ processing $ub^n c$.

We define $\rho_{i+1}^-$ and $\rho_{i+1}^+$ as follows:

\medskip

\begin{tikzpicture}
\providecommand\x{}
\providecommand\y{}
\renewcommand{\x}{0.7}
\renewcommand{\y}{0.8}
\node[] at (-3.*\x,0.5*\y) {$\rho_{i+1}^-$ \strut};
\node[] at (-3.*\x,-0.5*\y) {$\rho_{i+1}^+$ \strut};

\node[] at (-1.75*\x,0.5*\y) {= \strut};
\node[] at (-1.75*\x,-0.5*\y) {= \strut};

\node[] at (-0.5*\x,0.5*\y) {$\rho_i^-$ \strut};
\node[] at (-0.5*\x,-0.5*\y) {$\rho_i^+$ \strut};

\node[] at (0.5*\x,0.5*\y) {$\greek_0$ \strut};
\node[] at (0.5*\x,-0.5*\y) {$\greek_0$ \strut};

\node[] at (6*\x,0.5*\y) {$\cdots$ \strut};
\node[] at (6*\x,-0.5*\y) {$\cdots$ \strut};

\node[] at (7.5*\x,0.5*\y) {$\pi$ \strut};
\node[] at (7.5*\x,-0.5*\y) {$\pi$ \strut};

\node[] at (8.5*\x,0.5*\y) {$\greek_{j-1}$ \strut};
\node[] at (8.5*\x,-0.5*\y) {$\greek_{j-1}$ \strut};

\node[] at (11.5*\x,0.5*\y) {$\pi$ \strut};
\node[] at (11.5*\x,-0.5*\y) {$\pi$ \strut};

\node[] at (12.5*\x,0.5*\y) {$\greek_{j+1}'$ \strut};
\node[] at (12.5*\x,-0.5*\y) {$\greek_{j+1}'$ \strut};

% \node[] at (13.2*\x,0.5*\y) {$;$ \strut};
% \node[] at (13.2*\x,-0.5*\y) {$.$ \strut};

\foreach \i in
{1,2}{
\foreach \j in
{-1,1}{
\node[circle] at (2*\x*\i+0.5*\x,-0.5*\y*\j) {$\greek_\i$ \strut};
\node[circle] at (2*\x*\i-0.5*\x,-0.5*\y*\j) {$\pi$ \strut};
}}

\foreach \i in
{5}{
\foreach \j in
{-1,1}{
\node[circle] at (2*\x*\i-0.5*\x*\j,-0.5*\y*\j) {$\greek_j$ \strut};
\node[circle] at (2*\x*\i+0.5*\x*\j,-0.5*\y*\j) {$\pi$ \strut};
}}

\foreach \i in
{5}{
\draw[->,shorten <=2pt,rounded corners]
(2*\x*\i+0.3,0.3*\y) -- (2*\x*\i+0.2,-0.*\y) -- (2*\x*\i-0.2,-0.1*\y) -- (2*\x*\i-0.3,-0.4*\y);
\draw[->,line width=1mm,white,shorten <=2pt,rounded corners]
(2*\x*\i-0.3,0.3*\y) -- (2*\x*\i-0.2,-0.*\y) -- (2*\x*\i+0.2,-0.1*\y) -- (2*\x*\i+0.3,-0.4*\y);
\draw[->,shorten <=2pt,rounded corners]
(2*\x*\i-0.3,0.3*\y) -- (2*\x*\i-0.2,-0.*\y) -- (2*\x*\i+0.2,-0.1*\y) -- (2*\x*\i+0.3,-0.4*\y);
% \path[->,shorten <=2pt] (2*\x*\i-0.3,0.2*\y) edge (2*\x*\i+0.3,-0.3*\y);
% \path[->,shorten <=2pt] (2*\x*\i+0.3,0.2*\y) edge (2*\x*\i-0.3,-0.3*\y);
}
\end{tikzpicture}

Note that the suffix of $\rho_{i+1}^+$ after $\rho_i^+$ differs from the suffix of $\rho_{i+1}^-$ after $\rho_i^-$ 
by the swap of $\rho_j$ with the cycle~$ \pi$ preceding it.
Also, note that this is one of the swaps used to obtain the suffix of
$\greek^+$ from the suffix of $\greek^-$,
and it does not create a new $FC$-prefix in $\greek_j\pi$, by the choice of $j$.

Note that $\greek_0  \pi \greek_1 \cdots  \greek_j   \pi  \greek_{j+1}'$
processes a word in $(a^{n}b^{n})^+ c$.
Thus, Properties~$\ref{item:prefix}$ and $\ref{item:unbalanced}$ 
follow immediately from the definition of $\rho_{i+1}^-$ and $\rho_{i+1}^+$ and the induction hypothesis.
Furthermore, Property~$\ref{item:multiset}$ follows from Remark~\ref{remark:shifting} and the induction hypothesis.

Now, consider Property~\ref{item:balanced}:
To see that $w^+_{i+1}$ contains (at least) one more $(a,b)$-balanced prefix than $w^+_i$,
note that, in the run $\rho_{i+1}^-$, after each $b^n$ block
the difference between the number of $a$'s and $b$'s processed
so far is equal to $1$.
Therefore, since in the run $\rho_{i+1}^+$
we swapped one of the cycles~$ \pi$ (processing a nonzero number of $a$'s)
with $\greek_j$ (processing at least one $b^n$ block),
we created a new $(a,b)$-balanced prefix.
All other requirements of Property~\ref{item:balanced} follow from the induction hypothesis and arguments similar to those for Property~\ref{item:unbalanced}.

Finally, consider Property~\ref{item:accept}, i.e., we need to show that for every $FC$-prefix of $\rho_{i+1}^+$, the prefix of $\rho_{i+1}^-$ of the same length is also an $FC$-prefix.
We proceed by case distinction.
So, consider a prefix of $\rho_{i+1}^+$.
\begin{itemize}
    \item If it is even a prefix of $\rho_{i}^+$, then we can apply the induction hypothesis.
    \item 
If it ends in the part~$\greek_0   \pi  \greek_1   \pi  \cdots   \pi   \greek_{j-1}$, then the corresponding prefix of $\rho_{i+1}^-$ ends in the same state and has the same extended Parikh image by induction hypothesis (Property~\ref{item:multiset}).
\item If it ends in the part~$\greek_j\pi$ then it is not an $FC$-prefix by the choice of $j$.
\item If it ends in the part~$\pi\greek_{j+1}'$, then the corresponding prefix of $\rho_{i+1}^-$ ends in the same state and has the same extended Parikh image due to Remark~\ref{remark:shifting} and the induction hypothesis.\qedhere
\end{itemize}
\end{proof}

%%%%%%%%%%%%%%%%%%%%%%%%%%%%%%%%%%%%%%%%%%%%%%%%%%%%%%%%%%%%%%%%
%%%%%%%%%%%%%%%%%%%%%%%%%%%%%%%%%%%%%%%%%%%%%%%%%%%%%%%%%%%%%%%%
\subsection{Proof of Theorem~\ref{theorem_detvsnondet}}

\begin{proof}
\ref{theorem_detvsnondet_reach}.)
Consider the language~$\doublebal$ of infinite words over $\set{a,b,c,d}$ that have prefixes~$w_1$ and $w_2$ such that $\occ{w_1}{a} = \occ{w_1}{b} >0$ and $ \occ{w_2}{c} =\occ{w_2}{d} > 0$. Note that $\occ{w_1}{a} = \occ{w_1}{b} >0$ is a stronger requirement than $w_1$ being nonempty and $(a,b)$-balanced: the single-letter word~$c$ is nonempty and $(a,b)$-balanced, but $\occ{c}{a} = \occ{c}{b}=0$.
It is accepted by the nondeterministic \srpa~$(\aut, C)$ with $\aut$ shown in Figure~\ref{fig_doublebalanced} and $C = \set{(n_0,n_1,n_2,n_3) \mid n_0 = n_1 >0 \text{ and } n_2 = n_3 >0}$.
Note that $\aut$ uses nondeterminism to freeze one pair of counters (thereby picking $w_1$ or $w_2$). 

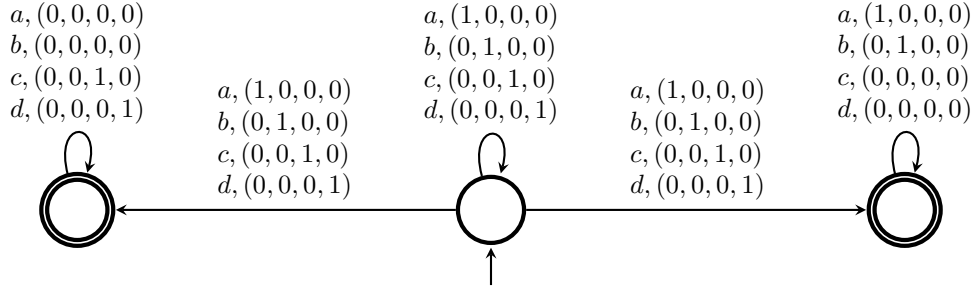
\begin{figure}
    \centering
    \begin{tikzpicture}[ultra thick]
    
    \node[state] (1) at (0,0) {};
    \node[state, accepting] (2) at (5.5,0) {};
    \node[state, accepting] (3) at (-5.5,0) {};
    
    \path[-stealth]
    (0,-1) edge (1)
    (1) edge[loop above] node[align = left,above] {$a,(1,0,0,0)$\\$b,(0,1,0,0)$\\$c,(0,0,1,0)$\\$d,(0,0,0,1)$}()
    (1) edge[] node[align=left,above] {$a,(1,0,0,0)$\\$b,(0,1,0,0)$\\$c,(0,0,1,0)$\\$d,(0,0,0,1)$}(2)
    (1) edge[] node[align=left,above] {$a,(1,0,0,0)$\\$b,(0,1,0,0)$\\$c,(0,0,1,0)$\\$d,(0,0,0,1)$}(3)
    (2) edge[loop above] node[align = left,above] {$a,(1,0,0,0)$\\$b,(0,1,0,0)$\\$c,(0,0,0,0)$\\$d,(0,0,0,0)$}()
    (3) edge[loop above] node[align = left,above] {$a,(0,0,0,0)$\\$b,(0,0,0,0)$\\$c,(0,0,1,0)$\\$d,(0,0,0,1)$}()
    ;
    
    \end{tikzpicture}
    \caption{The automaton for Lemma~\ref{theorem_detvsnondet}.\ref{theorem_detvsnondet_reach}.}
    \label{fig_doublebalanced}
\end{figure}

Now, towards a contradiction, assume that $\doublebal$ is accepted by some deterministic \arpa~$(\aut,C)$, say with $n$ states.
Now, consider $w = a (a^nb^n)^{n+1}(c^n d^n)^\omega \notin L$.
Due to completeness, $\aut$ has a run~$\rho$ processing $w$.
Every infix of $\rho$ processing $b^n$ contains a cycle. 
Hence, there are two such cycles that start in the same state.
Let $\rho_s$ be the run obtained by shifting the second cycle to the end of the first cycle, and let $w_s$ be the word processed by $\rho_s$.
Shifting the nonempty cycle creates a prefix~$w_1$ with $\occ{w_1}{a} = \occ{w_1}{b} >0$. 
Furthermore, as the shifting is restricted to the prefix containing the $a$'s and $b$'s, $w_s$ also has a prefix~$w_2$ with $\occ{w_2}{c} = \occ{w_2}{d} >0$, e.g., the one ending with processing the first~$c^n d^n$ infix.
Hence, $w_s \in \doublebal$ and $\rho_s$ is accepting due to determinism.

We show that the run~$\rho$ is rejecting, yielding the desired contradiction.
First, let us remark that exactly the same states occur in $\rho$ and $\rho_s$.
Hence, as $\rho_s$ contains an accepting state, so does $\rho$.
Also, there is a $C$-prefix~$\rho'$ of $\rho_s$.

First, we consider the case where the prefix~$\rho'$ has length at most $(n+1)2n+1$, i.e., it processes a word containing only $a$'s and $b$'s.
Each run infix of $\rho_s$ that processes $c^n$ contains a cycle. 
We pump this cycle once in the first such infix. 
This yields an accepting run, as the $C$-prefix~$\rho'$ is preserved (it appears before the pumping position) and the set of states occurring is unchanged by the pumping.
However, the word processed by the resulting run does not even have a $(c,d)$-balanced prefix, as the first $c$-block now has more than $n$ $c$'s.
So, we have derived the desired contradiction in this case.

Now, consider the case where the $C$-prefix~$\rho'$ of $\rho_s$ has length greater than $(n+1)2n+1$, i.e., it processes at least one $c$.
Then, as the shifting used to obtain $\rho_s$ from $\rho$ is confined to the prefix of length $(n+1)2n+1$, we conclude that the prefix of $\rho$ of length~$\size{\rho'}$ has the same extended Parikh image as $\rho'$.
Hence, $\rho$ is also accepting, yielding again the desired contradiction.

\ref{theorem_detvsnondet_safety}.)
Next, we consider safety acceptance.
Let $L = L' \cup \set{a,\$}^\omega$ with
\[
L' = \set{ a^{n_0} \$ a^{n_1} \$ \cdots \$ a^{n_{k}} \$ b^n \$^\omega \mid\text{ $k >0$ and $n_i > n$ for some $0 \le i \le k$ }},
\]
which is accepted by the~\spa~$(\aut, C)$ with $\aut$ in Figure~\ref{figure_safetysep} and $C = \set{(n,n') \mid n > n'} \cup\set{(0,0)}$.
Note that adding $\set{a,\$}^\omega$ ensures that $L$ can be accepted by a safety automaton. The language~$L'$ itself cannot be accepted by a safety automaton, as it requires the occurrence of a $\$$ after an unbounded number of $a$'s (stated differently, it is not a closed set in the Cantor topology).

\begin{figure}
    \centering
    \begin{tikzpicture}[ultra thick]
    
    \node[state,accepting] (1) at (0,0) {};
    \node[state,accepting] (2) at (3,0) {};
    \node[state,accepting] (3) at (6,0) {};
    \node[state,accepting] (4) at (9,0) {};
    \node[state,accepting] (5) at (12,0) {};

    \path[-stealth]
    (-1,0) edge (1)
    (1) edge[loop above] node[above, align=left] {$a,(0,0)$\\$\$,(0,0)$} ()
    (1) edge node[above,align=left] {$a,(1,0)$\\$\$,(0,0)$} (2)
    (2) edge[loop above] node[above, align=left] {$a,(1,0)$} ()
    (2) edge node[above] {$\$,(0,0)$} (3)
    (3) edge[loop above] node[above, align=left] {$a,(0,0)$\\$\$,(0,0)$} ()
    (3) edge node[above] {$\$,(0,0)$} (4)
    (4) edge[loop above] node[above, align=left] {$b,(0,1)$} ()
    (4) edge node[above] {$\$,(0,0)$} (5)
    (5) edge[loop above] node[above, align=left] {$\$,(0,0)$} ()
    (2) edge[bend right] node[below] {$\$,(0,0)$} (4)
    ;
    
    \end{tikzpicture}
    \caption{The automaton for Lemma~\ref{theorem_detvsnondet}.\ref{theorem_detvsnondet_safety}.}
    \label{figure_safetysep}
\end{figure}
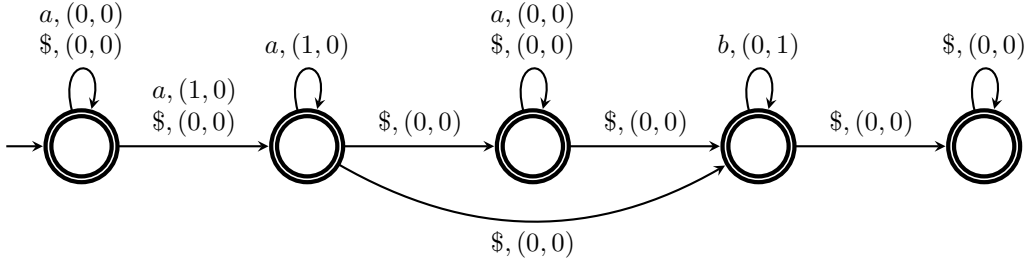

Towards a contradiction, assume that $L$ is accepted by a deterministic~\spa~$(\aut, C)$, say with $n$ states. 
Due to Remark~\ref{remark:completeness}, we can assume the automaton to be complete.
Now, consider the word~$w = (a^n\$)^{n+1}b^n\$^\omega \notin L$.
Due to completeness, the automaton has a (unique) run~$\rho$ processing~$w$, which is rejecting.
Hence, there is a prefix~$\rho'$ of $\rho$ that either ends in a nonaccepting state or whose extended Parikh image is not in $C$.

First, we show that $\rho'$ has at least length~$(n+1)^2+1$.
If not, then $\rho'$ does not process a single~$b$, which means we can extend it with a suffix processing~$a^\omega$ (recall that $\aut$ is complete). 
The resulting run is rejecting, due to the the prefix~$\rho'$, but processes a word in $(a^* \$)^*a^\omega \subseteq L$, a contradiction.

Every infix of $\rho$ processing an infix of the form~$a^n$ contains a cycle.
Hence, there are two such cycles starting in the same state.
By shifting the first of these cycles in front of the second one we obtain a new run~$\rho_s$ processing a word of the form~$(a^n\$)^{n_0} a^{n-k}\$ (a^n\$)^{n_1}a^{n+k}\$ (a^n\$)^{n_2} b^n\$^\omega$, where $k>0$ is the length of the first cycle.
This word is in $L$, but the unique run~$\rho_s$ of $(\aut, C)$ processing it is rejecting: 
its prefix of length~$\size{\rho'}$ ends in the same state as $\rho'$ and has the same extended Parikh image as $\rho'$, as the shifting is restricted to this prefix. 
This again yields a contradiction.

\ref{theorem_detvsnondet_buchi}. and \ref{theorem_detvsnondet_cobuchi}.)
To conclude, we simultaneously consider Büchi and co-Büchi acceptance. 
Let $\bal$ be the language of infinite words that have a nonempty $(a,b)$-balanced prefix.
The deterministic~\srpa~$(\aut, C)$ with $\aut$ as in Figure~\ref{figure_balanced} and $C = \set{(n,n) \mid n > 0}$ accepts $\bal$.
In Theorem~\ref{thm_omegaseparations}, we show that every \rpa can be turned into an equivalent \bpa and into an equivalent \cpa.
Thus, it remains to show that $\bal$ is not accepted by the deterministic variants.

Towards a contradiction, assume that $\bal$ is accepted by a deterministic~\bpa or by a deterministic~\cpa, say with $n$ states.
Due to Remark~\ref{remark:completeness}, we can assume the automaton to be complete.

Consider $w = a(a^nb^n)^\omega \notin L$, i.e, $(\aut, C)$ has a nonaccepting run~$\rho$ processing~$w$.
Here, we consider cycles in $\rho$ that process the $b^n$-infixes, starting in the same state. 
We shift one of them forward, and obtain a run~$\rho_s$ processing a word with a nonempty $(a,b)$-balanced prefix, i.e., the word processed by $\rho_s$ is in $\bal$.
However, its (unique due to determinism) run~$\rho_s$ is not accepting, as both asynchronous and synchronous Büchi nonacceptance as well as co-Büchi nonacceptance are preserved under shifting a single cycle. 
\end{proof}

%%%%%%%%%%%%%%%%%%%%%%%%%%%%%%%%%%%%%%%%%%%%%%%%%%%%%%%%%%%%%%%%
%%%%%%%%%%%%%%%%%%%%%%%%%%%%%%%%%%%%%%%%%%%%%%%%%%%%%%%%%%%%%%%%
\subsection{Proof of Theorem~\ref{thm_omegaseparations}}

\begin{proof} \textbf{\rpa \boldmath$\subseteq$ \bpa and \rpa \boldmath$\subseteq$ \cpa})
Let $(\aut,C)$ be an \rpa and recall that $\aut$ is complete by assumption.
Without loss of generality, we assume that $(\aut, C)$ is an \srpa (recall Theorem~\ref{theorem_synchvsasynch}).
Hence, a word~$w$ is in $\Lsexists(\aut, C)$ if and only if there is an $FC$-prefix processing a prefix of $w$. 
Due to completeness, we can always extend such a run prefix into an infinite accepting run.

By \myquot{freezing} (see the proof of Theorem~\ref{theorem_synchvsasynch}.\ref{theorem_synchvsasynch_detreach}) the extended Parikh image after such a prefix, we ensure that almost all run prefixes are $C$-prefixes. 
To implement the freezing, we add a sink state~$q_s$ equipped with a self-loop labeled with a zero vector and add transitions leading from every accepting state to the sink, again labeled with a zero vector.
Formally, given $\aut = (Q, \Sigma \times D, q_\init, \Delta, F)$, we define $\aut' = (Q \cup \set{q_s}, \Sigma \times (D \cup \set{\vec{0}}), q_\init, \Delta \cup \Delta', \set{q_s})$ where $\vec{0}$ is the zero vector of appropriate dimension and 
\[\Delta' = \set{(q, (a, \vec{0}), q_s) \mid q \in F \text{ and } a \in \Sigma} \cup \set{ (q_s,(a,\vec{0}),q_s) \mid a \in \Sigma }.\]
So, from every state in $F$, we can nondeterministically transition to the sink state (which should only be done if the extended Parikh image is in $C$).
Then, we have $\Lsexists(\aut, C) = \LaBuchi(\aut', C) = \LsBuchi(\aut', C) = \LcoBuchi(\aut', C)$.

Now, we show that all other automata types are pairwise incomparable.
Due to the sheer number of cases, and the fact that most proofs rely on the shifting property introduced in Remark~\ref{remark:shifting} and are similar to the arguments described in detail in the proofs of Theorem~\ref{theorem_synchvsasynch} and Theorem~\ref{theorem_detvsnondet}, we only sketch them here.

We begin by showing that the two inclusions proved above are as tight as possible, e.g., an \rpa can be turned into a nondeterministic \bpa and into a nondeterministic \cpa, but in general not into a deterministic \bpa and not into a deterministic \cpa. 
Then, we prove the remaining non-inclusions.

\textbf{\rpa \boldmath$\not\subseteq$ deterministic \abpa}) We have shown in the proof of Theorem~\ref{theorem_detvsnondet} that the language~$\bal$ of infinite words containing a nonempty $(a,b)$-balanced prefix is accepted by a deterministic \srpa. 
Here, we show that it is not accepted by any deterministic \abpa.

Assume $\bal$ is accepted by a deterministic \abpa~$(\aut, C)$, say with $n$ states.
Due to Remark~\ref{remark:completeness} we assume that $\aut$ is complete.
The word~$w = a(a^nb^n)^\omega$ is not in $\bal$, i.e., the unique run~$\rho$ of $(\aut,C)$ processing $w$ is not accepting. 
However, the run infixes processing the infixes~$b^n$ each contain a cycle. 
So, we can find two cycles starting in the same state and shift the second one forward while preserving asynchronous Büchi acceptance of the run since the shift does not affect Büchi acceptance.
But the resulting run processes a word~$w_s$ with a nonempty $(a,b)$-balanced prefix, yielding the desired contradiction, as the unique run of $(\aut, C)$ processing $w_s$ has to be accepting.

\textbf{\rpa \boldmath$\not\subseteq$ deterministic \cpa}) The argument just presented showing that $\bal$ is not accepted by any deterministic \bpa also applies to deterministic \cpa, as co-Büchi acceptance is also preserved by shifting one cycle.

\textbf{\spa \boldmath$\not \subseteq$ \bpa and \spa \boldmath$\not \subseteq$ \cpa}) We show that the language
\[
\unbal = \set{w \in \set{a,b}^\omega \mid \text{all nonempty prefixes of $w$ are $(a,b)$-unbalanced}}
\]
is accepted by a deterministic \spa, but not by any \bpa nor by any \cpa. 
The \spa accepting the language is~$(\aut, C)$ where $\aut$ is depicted in Figure~\ref{figure_balanced} and $C = \set{(n,n') \mid n\neq n'} \cup \set{(0,0)}$.

Now, assume the language is accepted by some \bpa or \cpa, say with $n$ states.
We consider the word~$w = a (a^n b^n)^\omega \in \unbal$, which has an accepting run~$\rho$. 
For every infix~$b^n$ there is a cycle in the corresponding transitions of $\rho$.
Hence, we can find two such cycles starting in the same state. 
Shifting the second cycle to the front preserves both Büchi and co-Büchi acceptance, but the resulting run processes a word with a nonempty $(a,b)$-balanced prefix, yielding the desired contradiction.

\textbf{\bpa \boldmath$\not \subseteq$ \spa}) We show that the language~$\Infa$ of infinite words over~$\set{a,b}$ containing infinitely many $a$'s is accepted by a deterministic \sbpa, but not by any \spa. 
Constructing a deterministic \sbpa is trivial (the usual deterministic Büchi automaton with a universal semilinear set suffices), so we focus on the second part of the claim.

Assume $\Infa$ is accepted by an \spa~$(\aut, C)$. 
As $b^na^\omega$ is in $\Infa$ for every $n$, there is an accepting run~$\rho_n$ of $(\aut, C)$ for every word of this form. 
Hence, every prefix of each $\rho_n$ is an $FC$-prefix.
We arrange the prefixes of the $\rho_n$ processing the prefixes~$b^n$ in a finitely branching infinite tree.
Thus, K\H{o}nig's Lemma yields an infinite path through the tree.
By construction, this path is an accepting run of $(\aut, C)$ processing~$b^\omega \notin \Infa$, yielding the desired contradiction.

\textbf{\cpa \boldmath$\not\subseteq$ \spa}) We show that the language~$\Fina$ of infinite words over~$\set{a,b}$ containing finitely many $a$'s is accepted by a deterministic \cpa, but not by any \spa. 
Constructing a deterministic \cpa is trivial (the usual deterministic co-Büchi automaton with a universal semilinear set suffices), so we focus on the second part of the claim.

Assume $\Fina$ is accepted by an \spa~$(\aut, C)$. 
Analogously to the previous case, for every $a^nb^\omega \in \Fina$ there is an accepting run~$\rho_n$ of $(\aut, C)$.
The run prefixes processing the prefixes~$a^n$ can be arraigned in a tree and K\H{o}nig's Lemma yields an accepting run of $(\aut, C)$ processing~$a^\omega \notin \Fina$.

\textbf{\bpa \boldmath$\not \subseteq$ \rpa}) We have shown above that the language~$\Infa$ of infinite words with infinitely many $a$'s is accepted by a deterministic \sbpa.
Here, we show that it is not accepted by any \rpa.

Towards a contradiction, assume w.l.o.g.~(see Theorem~\ref{theorem_synchvsasynch}) it is accepted by an \srpa~$(\aut, C)$.
By definition, $\aut$ is complete.
Now, consider an accepting run~$\rho$ processing $a^\omega$.
There is an $FC$-prefix of $\rho$.
Due to completeness, we can extended this prefix to an accepting run of $(\aut, C)$ processing a word of the form~$a^n b^\omega$, yielding the desired contradiction.

\textbf{\cpa \boldmath$\not \subseteq$ \rpa}) We have shown above that the language~$\Fina$ of infinite words with finitely many $a$'s is accepted by a deterministic \sbpa.
Here, we show that it is not accepted by any \rpa.

Towards a contradiction, assume w.l.o.g.~(see Theorem~\ref{theorem_synchvsasynch}) it is accepted by an \srpa~$(\aut, C)$.
Analogously to the previous case, relying on completeness, we can extend a prefix of an accepting run processing $b^\omega$ into an accepting run of $(\aut, C)$ processing a word of the form~$b^n a^\omega$, yielding the desired contradiction.

\textbf{\rpa \boldmath$\not \subseteq$ \spa}) The language~$\Exa$ of infinite words over $\set{a,b}$ having at least one $a$ is accepted by a deterministic \srpa (the usual deterministic reachability automaton with a universal semilinear set suffices). 
Now, we show that $\Exa$ is not accepted by any \spa. 

So, towards a contradiction, assume that $\Exa$ is accepted by an \spa. 
The word~$b^na^\omega$ is in $\Exa$ for every $n \ge 0$, i.e., there is an accepting run processing each such word.
As before, we arrange the run prefixes processing the prefixes~$b^n$ in a finitely-branching infinite tree.
Then, K\H{o}nig's Lemma yields an accepting run processing~$b^\omega$, yielding the desired contradiction.

\textbf{\spa \boldmath$\not \subseteq$ \rpa}) We have shown above that the language~$\unbal$ is accepted by a deterministic \spa, but not by any \bpa. 
Thus, the fact that every \rpa can be effectively be turned into an equivalent \bpa implies that $\unbal$ is not accepted by any \rpa.

\textbf{\cpa \boldmath$\not\subseteq$ \bpa})
Here, we consider the language~$\almostallunbal \subseteq \set{a,b}^\omega$ of infinite words such that almost all prefixes are $(a,b)$-unbalanced. 
It is recognized by the deterministic~\cpa~$(\aut, C)$ where $\aut$ is depicted in Figure~\ref{figure_balanced} and $C = \set{(n,n') \mid n \neq n'}$.

Now, towards a contradiction, assume that $\almostallunbal$ is accepted by a nondeterministic~\sbpa~$(\aut, C)$, say with $n$ states.
We consider the word~$a(a^nb^n)^\omega$, which is processed by some accepting run~$\rho$.
As we consider synchronous acceptance, the set of $FC$-prefixes of $\rho$ is infinite.

We inductively define a sequence of runs~$\rho_j$ maintaining the following invariant:
each $\rho_j$ is accepting and processes a word of the form~$w_j (a^nb^n)^\omega$ where $w_j$ has at least $j$ nonempty $(a,b)$-balanced prefixes and $\occ{w_j}{a} = \occ{w_j}{b}+1$.

We start with $\rho_0 = \rho$, which satisfies the invariant with $w_0 = a$.
Now, consider $\rho_j$ for some $j \geqslant 0$ which processes~$w_j (a^nb^n)^\omega $.
Due to the invariant, there is some proper extension~$w_j'$ of $w_j$ such that the prefix of $\rho_j$ processing $w_j'$ is an $FC$-prefix.
After that prefix, we can find two cycles processing only $a$'s starting in the same state. 
Let $\rho_{j+1}$ be the run obtained by shifting the first one to the second one.
It processes a word of the form~$w_jw(a^nb^n)^\omega$ for some $w$ that contains the shifted infixes.
We define $w_{j+1} = w_jw$.

The run~$\rho_{j+1}$ is still accepting, as the shift only changes the extended Parikh image of finitely many prefixes. 
Furthermore, the requirement on the word~$w_{j+1}$ processed by $\rho_{j+1}$ is also satisfied, as we have introduced another $(a,b)$-balanced prefix by the shift, but the balance of almost all suffixes is left unchanged.

To conclude, notice that for every $j < j'$, $w_j$ is a strict prefix of $w_{j'}$ and the prefix of $\rho_j$ processing $w_j$ is also a prefix of $\rho_{j'}$, and contains at least $j$ $FC$-prefixes.
Hence, taking the limit of these prefixes yields an accepting run on a word with infinitely many $(a,b)$-balanced prefixes.
Hence, we have derived the desired contradiction. 

\textbf{\bpa \boldmath$\not\subseteq$ \cpa})
Recall that $\Infa$ is the language of infinite words over $\set{a,b}$ containing infinitely many $a$'s. 
It is accepted by a deterministic \sbpa obtained from the standard Büchi automaton accepting the language. 
Towards a contradiction, assume it is accepted by a \cpa~$(\aut, C)$. 

We inductively construct a sequence $(w_j)_{j\in\nats}$ of words with $w_{j+1} = w_j b^{n_{j+1}} a$ for some $n_{j+1} \in \nats$ and then show that every run of $(\aut, C)$ processing the limit of the $w_j$ is rejecting.
This yields the desired contradiction as the limit is in $\Infa$.
We construct $n_{j+1}$ such that every run processing $w_jb^{n_{j+1}}$ must have a non-$FC$-prefix while processing the suffix~$b^{n_{j+1}}$, relying on the fact that every run of the \cpa~$(\aut, C)$ must eventually have such a prefix when processing $w_j b^\omega \notin \Infa$ and on K\H{o}nig's Lemma.
This will ensure that every run processing the limit has infinitely many non-$FC$-prefixes. 
Before we start, we need to introduce some notation. 
We say that a run prefix~$\rho$ of $(\aut, C)$ processing some finite word~$w$ is bad (for co-Büchi acceptance), if it has a non-$FC$-prefix~$\rho'$ processing some word~$w'$ such that $w = w' b^n$ for some $n\ge 0$, i.e., $\rho'$ either ends in a nonaccepting state or has an extended Parikh image that is not in $C$ and afterwards $\rho$ only processes $b$'s.
Note that every non-bad prefix (one that is not bad) is an $FC$-prefix, as we allow $n = 0$. 
Note also that a run that has infinitely many bad prefixes and processes a word with infinitely many $a$'s is rejecting.

We begin the inductive definition with setting $w_0 = \epsilon$.
Now, assume we have already defined $w_j$ for some $j \ge 0$.
Let $T_{j+1}$ be the set of non-bad run prefixes of $(\aut, C)$ processing words of the form~$w_j b^n$ for some $n\ge 0$.
Note that $T_{j+1}$ is closed in the following sense: if a run prefix~$\rho$ processing $w_jb^n$ is in $T_{j+1}$ and $n' < n$, then the prefix of $\rho$ processing $w_jb^{n'}$ is also in $T_{j+1}$.

We argue that $T_{j+1}$ is finite. 
Towards a contradiction, assume it is not.
Then, we can arrange the run prefixes in $T_{j+1}$ into an infinite finitely-branching tree.
Applying K\H{o}nig's Lemma yields an infinite path, which corresponds to an infinite run of $(\aut, C)$ processing the word~$w_jb^\omega$.
Due to the closure property, all prefixes of this run that are longer than $\size{w_j}$ are non-bad, as they are in $T_{j+1}$.
In particular, as mentioned above, all of these prefixes are $FC$-prefixes.
Hence, the run is co-Büchi accepting, but processes a word with finitely many $a$'s, yielding a contradiction.
Thus, $T_{j+1}$ is indeed finite and we can pick an $n_{j+1}$ such that each run prefix in $T_{j+1}$ processes a word of the form~$w_j b^n$ with $n < n_{j+1}$.
We define $w_{j+1} = w_jb^{n_{j+1}}a$.

Now, the limit~$w = b^{n_1}ab^{n_2}ab^{n_3}a\cdots$ of the $w_j$ contains infinitely many $a$'s and is therefore in $\Infa$.
To conclude the argument, we show that every run~$\rho$ of $(\aut, C)$ processing $w$ is rejecting, yielding the desired contradiction to $(\aut, C)$ accepting $\Infa$.

To this end, fix some $j \ge 0$.
The run prefix of $\rho$ processing $w_j b^{n_{j+1}}$ is, by the choice of $n_{j+1}$, not in $T_{j+1}$ and therefore bad.
Thus, $\rho$ has infinitely many bad prefixes and processes a word with infinitely many $a$.
Thus, it is, as argued above, not accepting.

\end{proof}

%%%%%%%%%%%%%%%%%%%%%%%%%%%%%%%%%%%%%%%%%%%%%%%%%%%%%%%%%%%%%%%%
%%%%%%%%%%%%%%%%%%%%%%%%%%%%%%%%%%%%%%%%%%%%%%%%%%%%%%%%%%%%%%%%
%%%%%%%%%%%%%%%%%%%%%%%%%%%%%%%%%%%%%%%%%%%%%%%%%%%%%%%%%%%%%%%%
%%%%%%%%%%%%%%%%%%%%%%%%%%%%%%%%%%%%%%%%%%%%%%%%%%%%%%%%%%%%%%%%
\section{Proofs omitted in Section~\ref{sec:closure}}

%%%%%%%%%%%%%%%%%%%%%%%%%%%%%%%%%%%%%%%%%%%%%%%%%%%%%%%%%%%%%%%%
%%%%%%%%%%%%%%%%%%%%%%%%%%%%%%%%%%%%%%%%%%%%%%%%%%%%%%%%%%%%%%%%
\subsection{Proof of Theorem~\ref{theorem_detcomplementation}}

\begin{proof}
\ref{theorem_detcomplementation_sreach2safety}.) Let $(\aut, C)$ be a deterministic~\srpa accepting $\Lsexists(\aut, C) \subseteq \Sigma^\omega$. 
We have described in the proof of Theorem~\ref{theorem_synchvsasynch}.\ref{theorem_synchvsasynch_reach} how to reflect in the extended Parikh image of a run prefix whether this prefix ends in an accepting state or not.
To this end, one adds a new component that is odd if and only if the prefix does end in an accepting state. 
Now, we have $\Sigma^\omega \setminus \Lsexists(\aut, C') = \Lall(\aut', C')$ where $\aut'$ is obtained from $\aut$ by making every state accepting and where
\[
C' = \set{(v_0, \ldots, v_{d-1}, f) \mid (v_0,\ldots,v_{d-1}) \notin C \text{ or } f \text{ even}},
\]
i.e., the safety automaton checks whether every prefix does not end in an accepting state (the last component of the extended Parikh image, reflecting that information, is even) or the original extended Parikh image is not in $C$.

\ref{theorem_detcomplementation_areach2safety}.) Let $(\aut, C)$ be a deterministic~\arpa accepting $\Laexists(\aut, C) \subseteq \Sigma^\omega$.
As before, we reflect whether a run ends in an accepting state in its extended Parikh image.
Now, we construct two \spa from $(\aut, C)$.
Given an input~$w$, the first one checks whether every prefix of the unique run~$\rho$ of $\aut$ processing~$w$ ends in a rejecting state and the second one checks that the extended Parikh image of every run prefix of $\rho$ is not in $C$.
Thus, the union of the languages accepted by these two \spa is equal to the complement of $L(\aut, C)$.
As \spa are closed under union (see Theorem~\ref{thm_omegaclosureprops}), we obtain the desired complement automaton.

Finally, the fact that deterministic \arpa are strictly more expressive than deterministic \srpa implies that not every deterministic \arpa can be complemented into a deterministic \spa: Otherwise, every deterministic \arpa~$(\aut, C)$ could be turned into a deterministic \spa for the complement, which can be turned into a deterministic \srpa for the original language of~$(\aut, C)$ (see the next item), which contradicts Theorem~\ref{theorem_synchvsasynch}.\ref{theorem_synchvsasynch_detreach}.

\ref{theorem_detcomplementation_safety2reach}.) Let $(\aut, C)$ be a deterministic~\spa accepting $\Lall(\aut, C) \subseteq \Sigma^\omega$. 
Again, we reflect whether a run prefix ends in an accepting state in its extended Parikh image and then turn $(\aut, C)$ into a deterministic \srpa that accepts if and only if there is a run prefix that does not end in an accepting state \emph{or} whose original extended Parikh image is not in $C$.

\ref{theorem_detcomplementation_sbuchi2cobuchi}.) 
The construction presented in Item~\ref{theorem_detcomplementation_sreach2safety} for deterministic \srpa also turns a deterministic~\sbpa into a deterministic \cpa accepting its complement.

\ref{theorem_detcomplementation_abuchi2cobuchi}.) 
The construction presented in Item~\ref{theorem_detcomplementation_areach2safety} for deterministic \arpa also turns a deterministic~\abpa into a \cpa accepting its complement.
The proof that nondeterminism might be required is also analogous. 

\ref{theorem_detcomplementation_cobuchi2buchi}.) 
The construction presented in Item~\ref{theorem_detcomplementation_safety2reach} for deterministic \spa also turns a deterministic~\cpa into a deterministic \sbpa accepting its complement.
\end{proof}

%%%%%%%%%%%%%%%%%%%%%%%%%%%%%%%%%%%%%%%%%%%%%%%%%%%%%%%%%%%%%%%%
%%%%%%%%%%%%%%%%%%%%%%%%%%%%%%%%%%%%%%%%%%%%%%%%%%%%%%%%%%%%%%%%
\subsection{Proof of Theorem~\ref{theorem_nondetcomplementation}}

\begin{proof}
\ref{theorem_nondetcomplementation_reach2safety}.)
Consider the language
\[L= \set{ w \in \set{a,b}^\omega \mid \text{there exists $n \ge 1$ such that $w$ contains the infix~$ba^nb$ twice} }.\]
It is accepted by the nondeterministic \srpa~$(\aut, C)$ where $\aut$ is depicted in Figure~\ref{fig_twice} and $C =\set{(n,n) \mid n\in\nats}$.

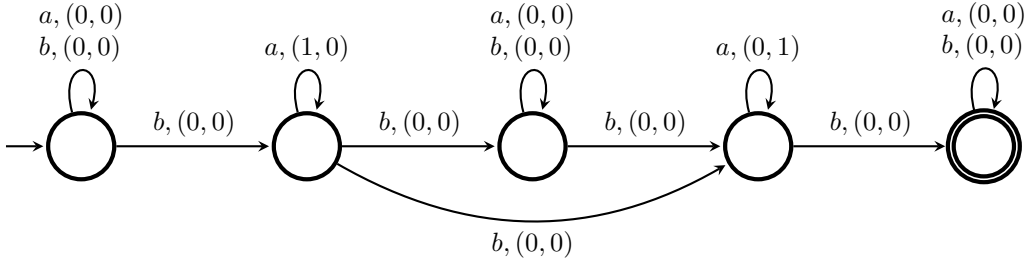
\begin{figure}
    \centering
    \begin{tikzpicture}[ultra thick]
    
    \node[state] (1) at (0,0) {};
    \node[state] (2) at (3,0) {};
    \node[state] (3) at (6,0) {};
    \node[state] (4) at (9,0) {};
    \node[state, accepting] (5) at (12,0) {};

    \path[-stealth]
    (-1,0) edge (1)
    (1) edge[loop above] node[align = left,above] {$a,(0,0)$\\$b,(0,0)$}()
    (2) edge[loop above] node[align = left,above] {$a,(1,0)$}()
    (3) edge[loop above] node[align = left,above] {$a,(0,0)$\\$b,(0,0)$}()
    (4) edge[loop above] node[align = left,above] {$a,(0,1)$}()
    (5) edge[loop above] node[align = left,above] {$a,(0,0)$\\$b,(0,0)$}()
    
    (1) edge[bend left=0] node[above] {$b,(0,0)$} (2)
    (2) edge[bend left=0] node[above] {$b,(0,0)$} (3)
    (3) edge[bend left=0] node[above] {$b,(0,0)$} (4)
    (4) edge[bend left=0] node[above] {$b,(0,0)$} (5)
    (2) edge[bend right] node[below] {$b,(0,0)$} (4)

    % (1) edge[bend left] node[above] {$c,(0,0)$} (2)
    % (2) edge[loop above] node[above] {$c,(0,0)$} ()
    % (2) edge[bend left] node[below, align=left]{$a,(1,0)$\\$b,(0,1)$}(1)
    % 
    ;
    
    \end{tikzpicture}
    \caption{The automaton for Theorem~\ref{theorem_nondetcomplementation}.\ref{theorem_nondetcomplementation_reach2safety}.}
    \label{fig_twice}
\end{figure}

Now, towards a contradiction, assume that the complement of $L$ is accepted by a nondeterministic \spa~$(\aut, C)$, say with $n$ states and $d$ counters.
Furthermore, let $m$ be the maximal entry of a vector labeling the transitions of $\aut$.

Then, every entry of the extended Parikh image of a run prefix of length~$\ell$ is bounded by $m\cdot \ell$.
Hence, the number of extended Parikh images reachable by run prefixes of length at most~$\ell$ is bounded by $(m\cdot \ell+1)^d$.

Given $x \ge 1$, define $\SET{x} = \set{1,\ldots, x}$.
Given a subset~$S$ of $\SET{x}$, we define the word~$w_S = ba^{s_1}ba^{s_2}b\cdots a^{s_{\size{S}}}$, where $s_1 < s_2 < \cdots < s_{\size{S}}$ is an enumeration of the elements of $S$.
Note that $\size{w_S}\le (x+1)^2$ for $S\subseteq \SET{x}$.
Also, $w_S w_{S'} b^\omega$ is in the complement of $L$ if and only if $S \cap S' =\emptyset$.

For $x \ge 1$ and $S\subseteq \SET{x}$, there is an accepting run~$\rho_S$ of $(\aut, C)$ processing the word~$w_S w_{\overline{S}}b^\omega$, where  $\overline{S} = \SET{x} \setminus S$.
We decompose $\rho_S$ into $\rho_S^1 \rho_S^2 \rho_S^3$ such that  $\rho_S^1$ processes $w_S$,
$\rho_S^2$ processes $w_{\overline{S}}$, and $\rho_S^3$ processes $b^\omega$.

Let $S \neq S'$ be two subsets of $\SET{x}$, i.e., there is (w.l.o.g.) some $s \in S$ that is not in $S'$.
We claim that $\rho_S^1$ and  $\rho_{S'}^1$ either end in different states or have distinct extended Parikh images.
If this is not the case, 
then $\rho_S^1 \rho_{S'}^2 \rho_{S'}^3$ is an accepting run of $(\aut, C)$ processing $w_S w_{\overline{S'}}b^\omega$ which is not in the complement of $L$, as $S \cap \overline{S'}$ contains $s$.

So, to accept the complement of $L$, the automaton~$(\aut, C)$ has to reach at least $2^x$ combinations of extended Parikh image and state for inputs of length at most $(x+1)^2$.
For large enough $x$, the quantity~$2^x$ exceeds the number of such combinations reachable by runs of length~$(x+1)^2$, which is at most $n\cdot (m\cdot(x+1)^2 +1)^d$.

Note that this proof also applies to Büchi and co-Büchi acceptance conditions, i.e., the complement of $L$ is accepted by none of the automata types we have introduced (the case of reachability being even simpler).

\ref{theorem_nondetcomplementation_safety2reach}.)
In the proof of Theorem~\ref{theorem_detvsnondet}.\ref{theorem_detvsnondet_safety}, we have shown that $L = L' \cup (a^* \$)^\omega \cup (a^* \$)^*a^\omega$ with
\[
L' = \set{ a^{n_0} \$ a^{n_1} \$ \cdots \$ a^{n_{k}} \$ b^n \$^\omega \mid \text{$k >0$ and $n < n_i$ for some $0 \le i \le k$ }}
\]
 is accepted by an \spa.
 Towards a contradiction, assume that its complement is accepted by an \rpa, say with $n$ states and, without loss of generality, with synchronous reachability acceptance.
 Note that the set
 \[
 \set{ a^{n_0} \$ a^{n_1} \$ \cdots \$ a^{n_{k}} \$ b^n \$^\omega \mid \text{ $k >0$ and $n \ge n_i$ for all $0 \le i \le k$ }}
 \]
 is a subset of $L$'s complement.
 In particular, it contains the word~$(a^n\$)^{n+1}b^n\$^\omega$. 
 So, consider an accepting run~$\rho$ of $(\aut, C)$ processing $w$, i.e., the run has an $FC$-prefix.
 This prefix has to be longer than $(n+1)^2$ as we can otherwise extend this $FC$-prefix into an accepting run processing the word~$(a^n\$)^\omega$, which is in $L$ and therefore is not accepted by $(\aut, C)$.
 
 Hence, the $FC$-prefix is longer than $(n+1)^2$.
 However, this allows us to shift some nonempty $a$-block around, thereby producing a run that is still accepting, but processing a word with an  $a$-block of length~${n+k}$ for some $k>0$, but only $n$ $b$'s. 
 This word is again in $L$
 and not in the complement, i.e., we have again derived a contradiction.

\ref{theorem_nondetcomplementation_buchi2cobuchi}.\ and \ref{theorem_nondetcomplementation_cobuchi2buchi}.) In the proof of Theorem~\ref{thm_omegaseparations}, we show that the language~$\bal$ of infinite words over $\set{a,b}$ that have a nonempty $(a,b)$-balanced prefix is accepted both by a \bpa and by a \cpa, as it is accepted by an \rpa.
However, in the same proof, we show that the complement of $\bal$, the set of words such that all nonempty prefixes are $(a,b)$-unbalanced is not accepted by any \bpa nor by any \cpa. 
\end{proof}

%%%%%%%%%%%%%%%%%%%%%%%%%%%%%%%%%%%%%%%%%%%%%%%%%%%%%%%%%%%%%%%%
%%%%%%%%%%%%%%%%%%%%%%%%%%%%%%%%%%%%%%%%%%%%%%%%%%%%%%%%%%%%%%%%
\subsection{Proof of Theorem~\ref{thm_omegaclosureprops}}

\begin{proof}
Throughout this proof, we fix two automata~$(\aut_i, C_i)$ for $i \in \set{1,2}$ with $\aut_i = (Q_i, \Sigma\times D_i, q_\init^i, \Delta_i, F_i)$, where we assume without loss of generality $Q_1 \cap Q_2=\emptyset$.
Furthermore, we assume without loss of generality that $C_1$ and $C_2$ have the same dimension, say $d$ (if this is not the case, we can apply Proposition~\ref{propsemilinearclosure}).
Finally, note that we also assume that both automata have the same alphabet. 
This is not a restriction, as we allow incomplete automata, even reachability ones, here.

Due to inclusions and techniques that work for several types of automata, we group the cases.

\textbf{Union for \rpa, \spa, \bpa, \cpa})
For all nondeterministic classes of automata, we prove closure under union by taking the disjoint union of two automata with a fresh initial state. 
We add a new dimension to the vectors labeling the transitions to ensure that the union of the two semilinear constraints is also disjoint.

Let $q_\init$ be a fresh state not in $Q_1 \cup Q_2$. 
Then, we define $(\aut, C)$ where $\aut = (Q_1 \cup Q_2 \cup \set{q_\init}, \Sigma \times D, q_\init, \Delta, F_1 \cup F_2)$ where $D = D_1 \cdot \set{(0),(1)} \cup D_2 \cdot \set{(0),(2)}$, and $\Delta$ is the union of the following sets of transitions.
\begin{itemize}
    \item $\set{ (q, (a, \vec{v}\cdot (0)),q') \mid (q, (a, \vec{v}),q') \in \Delta_1 \cup \Delta_2 }$: we keep all transitions of both automata, adding a zero in the last component.

    \item $\set{ (q_\init, (a, \vec{v}\cdot (i)),q') \mid (q_\init^i, (a, \vec{v}),q') \in \Delta_i \text{ for some }i }$: The new initial state has all transitions the initial states of the original automata have, adding an $i$ in the last component if the transition is copied from $\aut_i$.
\end{itemize}
So, after the first transition, the value in the last dimension is in $\set{1,2}$, is never updated, and reflects in which of the two automata the run proceeds. 
Thus, we define $C = C_1 \cdot \set{(1)} \cup C_2 \cdot \set{(2)}$ to take the disjoint union of the two semilinear sets.

Then, we have $\Laexists(\aut, C) = \Laexists(\aut_1, C_1) \cup \Laexists(\aut_2, C_2)$ as well as the corresponding statements for all other acceptance conditions (synchronous and asynchronous). 
Let us remark that this construction introduces nondeterminism, i.e., the initial choice in which automaton to proceed. 
We will later see that this is in some cases unavoidable.

\textbf{Union for deterministic \srpa, deterministic \sbpa})
For deterministic synchronous automata, we show that their union can be accepted by their product automaton, after reflecting whether a run prefix is an $F$-prefix in its extended Parikh image.

Recall the construction from the proof of Theorem~\ref{theorem_synchvsasynch}.\ref{theorem_synchvsasynch_reach} which reflects whether a nonempty run prefix ends in an accepting state in its extended Parikh image, i.e., there is a bijection between $FC$-prefixes in the original automaton and $C'$-prefixes in the new automaton.
Thus, we assume the $(\aut_i, C_i)$ to be of this form.
In particular, every state is accepting.

Now, we define the product automaton as follows:
$(\aut, C)$ where $\aut = (Q, \Sigma \times (D_1\cdot D_2),(q_\init^1,q_\init^2),\Delta, Q_1 \times Q_2)$ where
\[
((q_1,q_2),(a,\vec{v_1}\cdot\vec{v_2}),(q_1',q_2')) \in \Delta \text{ if and only if } (q_i, (a, \vec{v_i}),q_i')\in\Delta_i \text{ for } i\in\set{1,2},
\]
and $C = C_1 \cdot \nats^d \cup \nats^d \cdot C_2$.
Then, we have $\Lsexists(\aut, C) = \Lsexists(\aut_1,C_1) \cup \Lsexists(\aut_2,C_2)$ and $\LsBuchi(\aut, C) = \LsBuchi(\aut_1,C_1) \cup \LsBuchi(\aut_2,C_2)$ (for the Büchi case note that if a run of the product automaton has infinitely many $C$-prefixes, then one of the simulated runs has infinitely many $C_i$-prefixes and vice versa).

\textbf{Union for deterministic \arpa})
Let $L$ be the language of infinite words over $\set{a,b,c,d,e,f}$ containing a nonempty prefix~$w$ with $\occ{w}{a} = \occ{w}{b}$ and at least one $c$.
Similarly, let $L'$ be the language of infinite words over $\set{a,b,c,d,e,f}$ containing a nonempty prefix~$w$ with $\occ{w}{d} = \occ{w}{e}$ and at least one $f$.
Both languages are accepted by a deterministic \arpa (cp.\ the proof of Theorem~\ref{theorem_synchvsasynch}.\ref{theorem_synchvsasynch_detreach}), but we show that their union is not accepted by any deterministic \arpa.

Towards a contradiction, assume the union is accepted by a deterministic \arpa~$(\aut, C)$, say with $n$ states.
Consider the finite word~$w = ad(a^nb^n)^{n+1} (d^ne^n)^{n+1}$ and the unique run~$\rho$ of $\aut$ processing this prefix (which has to exist as $w$ can be extended to a word in the union).

As usual, we find a cycle in each run infix processing an infix~$a^n$ ($d^n)$.
Thus, we also find two cycles starting in the same state processing a word in $a^+$ ($d^+$).
Now, let $\rho_a$ ($\rho_d$) be the run obtained from $\rho$ by shifting the first of the $a$-cycles ($d$-cycles) to the back.
Note that $\rho_a$ ($\rho_d$) processes a word~$w_a$ ($w_d$) that has a nonempty prefix with the same number of $a$'s and $b$'s (the same number of $d$'s and $e$'s).
Also, $\rho$, $\rho_a$ and $\rho_d$ visit the same set of states, as we have just shifted cycles around.
In particular, if one of these runs has an $F$-prefix, then all have one.

Now consider the unique run~$\rho_a\rho_c$ of $\aut$ processing $w_ac^\omega$ (note that $\rho_c$ processes the whole $c^\omega$ suffix).
This is an extension of $\rho_a$ due to determinism of $\aut$. 
Furthermore, it is accepting as $w_ac^\omega$ is in $L$, i.e., it has an $F$-prefix and a $C$-prefix.
Note that not both of these prefixes can be prefixes of $\rho_a$, as we could otherwise construct an accepting run on a word without $c$'s and without $f$'s.
Similarly, it cannot be the case that the prefix~$\rho_a$ does contain neither an $F$-prefix nor a $C$-prefix: Otherwise, as both $\rho_a$ and $\rho$ end in the same state and have the same extended Parikh image, the run $\rho\rho_c$ is an accepting run on the word~$wc^\omega$ that is not in $L \cup L'$.
So, $\rho_a$ has either an $F$-prefix or a $C$-prefix, but not both.

An analogous argument shows that $\rho_d$ contains either an $F$-prefix or a $C$-prefix, but not both.
Now, recall that $\rho_a$ and $\rho_d$ contain the same states.
Hence, either both $\rho_a$ and $\rho_d$ contain an $F$-prefix but no $C$-prefix, or both contain a $C$-prefix but no $F$-prefix.

First, assume that they both contain an $F$-prefix. 
Then, the accepting run~$\rho_a\rho_c$ contains a $C$-prefix that is longer than $\rho_a$.
As $\rho$ and $\rho_a$ visit the same states, end in the same state, and have the same extended Parikh image, the run $\rho\rho_c$ is also accepting, but it processes the word~$w c^\omega$, which is not in $L \cup L'$, a contradiction.

Finally, assume that both $\rho_a$ and $\rho_d$ contain a $C$-prefix.
Then, the accepting run~$\rho_a\rho_c$ contains an $F$-prefix that is longer than $\rho_a$, i.e., $\rho_c$ contains an accepting state.
Then, $\rho_d\rho_c$ is also accepting, but it processes the word~$w_dc^\omega$ that is not in $L\cup L'$, a contradiction. 

\textbf{Union for deterministic \spa})
Let $L$ be the language of infinite words over $\set{a,b,c,d}$ such that all prefixes~$w$ of length at least two satisfy $\occ{w}{a} \neq \occ{w}{b}$.
Similarly, let $L'$ be the language of infinite words over $\set{a,b,c,d}$ such that all prefixes~$w$ of length at least two satisfy $\occ{w}{c} \neq \occ{w}{d}$.
Both are accepted by a deterministic \spa (cp.\ the proof of Theorem~\ref{thm_omegaseparations}).
We show that their union is not accepted by any deterministic \spa.

Towards a contradiction, assume it is accepted by some deterministic \spa~$(\aut, C)$, say with $n$ states.
Consider the word~$ac(a^nb^n)^{n+1}(c^n d^n)^{n+1}a^\omega \in L \cap L'$ and its unique accepting run~$\rho$ of $\aut$.
Hence, every prefix of $\rho$ is an $FC$-prefix.

In each infix of $\rho$ processing an infix~$a^n$ there is a cycle.
So, there are two starting with the same state. 
Moving the first of those to the back yields a run~$\rho_a$ on a word in $L' \setminus L$ (as it has a prefix of length at least two with the same number of $a$'s and $b$'s, but the $c$'s and $d$'s are not touched).
So, $\rho_a$ must still be accepting, i.e., every prefix of $\rho_a$ is an $FC$-prefix.

Dually, by shifting a cycle processing some $c$'s to the back, we obtain a run~$\rho_c$ on a word in $L \setminus L'$, so it must still be accepting, i.e., every prefix of $\rho_c$ is an $FC$-prefix.

However, due to determinism and the fact that both of these shifts are independent, we conclude that the run obtained by shifting both cycles is also accepting, as each its prefixes is a prefix of $\rho_a$ or $\rho_c$.
However, it processes a word that is neither in $L$ (as there is a prefix of length at least two with the same number of $a$'s and $b$'s obtained by shifting the $a$'s to the back) nor in $L'$ (as there is a prefix of length at least two with the same number of $c$'s and $d$'s obtained by shifting the $c$'s to the back).
This yields the desired contradiction.

\textbf{Union for deterministic \cpa})
The proof is a generalization of the previous one for deterministic \spa.
Consider the language~$L$ of infinite words over $\set{a,b,c,d}$ such that almost all prefixes~$w$ satisfy $\occ{w}{a} \neq \occ{w}{b}$.
Similarly, let $L$ be the language of infinite words over $\set{a,b,c,d}$ such that almost all prefixes~$w$ satisfy $\occ{w}{c} \neq \occ{w}{d}$.
Both are accepted by a deterministic \cpa (cp.\ the proof of Theorem~\ref{thm_omegaseparations}).
We show that their union is not accepted by any deterministic \cpa.

Here, we start with the word~$ac((a^nb^n)^{n+1}(c^n d^n)^{n+1})^\omega \in L \cap L'$.
Now, we can shift infinitely many cycles processing only $a$'s, thereby obtaining an accepting run~$\rho_a$ processing a word in $L'\setminus L$.
Dually, we can shift infinitely many cycles processing only $c$'s, thereby obtaining an accepting run~$\rho_c$ processing a word in $L\setminus L'$.
Now, doing both types of shifting infinitely often yields an accepting run, as due to determinism every prefix of the resulting run is a prefix of either $\rho_a$ or of $\rho_c$.
However, the resulting run processes a word with infinitely many prefixes that have the same number of $a$'s and $b$'s and with infinitely many prefixes that have the same number of $c$'s and $d$'s, which yields the desired contradiction.

\textbf{Intersection for \rpa})
Here, we assume the automata without loss of generality to be synchronous.
We again take the Cartesian product~$\aut$ of the two automata (taking the concatenation of the vectors), but allow the product automaton to nondeterministically freeze the counters of one of the automata during a transition leaving an  accepting state of that automaton (see the proof of Theorem~\ref{theorem_synchvsasynch}.\ref{theorem_synchvsasynch_detreach}). 
Then, using the constraint~$C = C_1 \cdot C_2$, we obtain $\Laexists(\aut, C) = \Lsexists(\aut_1, C_1) \cap \Lsexists(\aut_2, C_2)$. 

\textbf{Intersection for \spa, deterministic \spa, \cpa, deterministic \cpa})
Again, let $\aut$ be the Cartesian product of the two automata (without reflecting or freezing) and let $C = C_1 \cdot C_2$.
Note that this construction preserves determinism.
We have $\Lall(\aut, C) = \Lall(\aut_1, C_1) \cap \Lall(\aut_2, C_2)$ and $\LcoBuchi(\aut, C) = \LcoBuchi(\aut_1, C_1) \cap \LcoBuchi(\aut_2, C_2)$. 

\textbf{Intersection for deterministic \arpa})
We have shown in the proof of Theorem~\ref{theorem_detvsnondet}.\ref{theorem_detvsnondet_buchi} that the language~$\bal$ of infinite words over $\set{a,b}$ having a nonempty prefix~$w$ with $\occ{w}{a} = \occ{w}{b}$ is accepted by a deterministic \srpa, and thus also by a deterministic \arpa (Theorem~\ref{theorem_synchvsasynch}.\ref{theorem_synchvsasynch_detreach}).
Similarly, in the proof of Theorem~\ref{theorem_detvsnondet}.\ref{theorem_detvsnondet_reach} we have shown that the language of infinite words over $\set{a,b,c,d}$ that have prefixes $w_1, w_2$ with  $\occ{w_1}{a} = \occ{w_1}{b}$ and  $\occ{w_2}{c} = \occ{w_2}{d}$ is not accepted by any deterministic \arpa.
This yields the desired counterexample.

\textbf{Intersection for deterministic \srpa})
Recall that we have shown in Theorem~\ref{theorem_synchvsasynch}.\ref{theorem_synchvsasynch_detreach} that the language of infinite words over $\set{a,b,c}$ that have a nonempty $(a,b)$-balanced prefix and contain at least one $c$ is not accepted by any deterministic \srpa.
However both the language of infinite words containing an $(a,b)$-balanced prefix and the language of infinite words containing a $c$ are accepted by deterministic \srpa.
This yields the desired counterexample.

\textbf{Intersection for \bpa, deterministic \sbpa, deterministic \abpa})
In order to show that \bpa, deterministic \sbpa and deterministic \abpa
are not closed under intersection, we proceed as follows.
We know that both the language $L_{a,b} \subseteq \{a,b,c,d\}^\omega$ of words containing infinitely many $(a,b)$-balanced prefixes
and the language $L_{c,d} \subseteq \{a,b,c,d\}^\omega$
of words containing infinitely many  $(c,d)$-balanced prefixes
are accepted by deterministic \sbpa (thus also by deterministic \abpa and \bpa).
We show that the intersection $L_{a,b} \cap L_{c,d}$ is not accepted by a \bpa
(thus neither by a deterministic \abpa or \sbpa).

Towards a contradiction, assume there is a (non-deterministic) \bpa $(\aut,C)$
accepting $L_{a,b} \cap L_{c,d}$, say with $n$ states.
We show that, while the infinite word $w = a^{n}(c^{n}a^nb^{2n}a^nd^{n})^{\omega}$ is in $L_{a,b} \cap L_{c,d}$,
by swapping well chosen parts of some accepting run of $(\aut,C)$ processing $w$
we can build an accepting run of $(\aut,C)$ that processes a word that is not in $L_{a,b} \cap L_{c,d}$.

Let us consider an accepting run $\rho$ processing $w$.
Then either infinitely many $FC$-prefixes of $\rho$ end in the $a^nb^{2n}a^n$ blocks
of $w$,
or  infinitely many $FC$-prefixes of $\rho$ end in the $d^nc^n$ blocks of $w$.
We show how to reach a contradiction in the former case, and at the end of the proof we will explain
how the latter case can be treated similarly.

So, let us suppose that infinitely many $FC$-prefixes of $\rho$
end in the $a^nb^{2n}a^n$ blocks of the word $w = a^{n}(c^{n}a^nb^{2n}a^nd^{n})^{\omega}$.
Since $\aut$ has $n$ states, this means that
we can decompose $\rho$ into
$\rho_0 \greekC \greekABA \greekD \rho_1 \greekC \greekABA \greekD \rho_2 \greekC \greekABA \greekD \cdots$,
where the run infix~$\greekC \greekABA \greekD$
occurring infinitely often satisfies:
\begin{itemize}
    \item 
    $\greekC$ is a cycle processing a word in $c^+$.
    \item
    $\greekABA$ processes a word in $c^* a^n b^{2n} a^n d^*$,
    and there is an $FC$-prefix ending in each copy of $\greekABA$ in the decomposition above.
    \item
    $\greekD$ is a cycle processing a word in $d^+$.
\end{itemize}
Note that the $\rho_j$ may contain copies of $\greekC \greekABA \greekD$ where no $FC$-prefix ends.

We denote by $\rho'$ the run obtained by moving
all the odd copies of $\greekC$ (except the first) one step backward in this decomposition of $\rho$,
and all the odd copies of $\greekD$ one step forward:

\medskip
\begin{tikzpicture}
\providecommand\x{}
\providecommand\y{}
\renewcommand{\x}{0.8}
\renewcommand{\y}{1.0}

\node[] at (-1.25*\x,0.5*\y) {$\rho$ \strut};
\node[] at (-1.25*\x,-0.5*\y) {$\rho'$ \strut};

\node[] at (-0.5*\x,0.5*\y) {= \strut};
\node[] at (-0.5*\x,-0.5*\y) {= \strut};

\node[] at (0.25*\x,0.5*\y) {$\underline{\rho_0}$ \strut};
\node[] at (0.25*\x,-0.5*\y) {$\underline{\rho_0}$ \strut};

\node[] at (0.75*\x,0.5*\y) {$\underline{\greekC}$ \strut};
\node[] at (0.75*\x,-0.5*\y) {$\underline{\greekC}$ \strut};

\node[] at (13.25*\x,0.5*\y) {$\underline{\greekABA}$ \strut};
\node[] at (13.25*\x,-0.5*\y) {$\underline{\greekABA}$ \strut};

\node[] at (14*\x,0.5*\y) {$\cdots$ \strut};
\node[] at (14*\x,-0.5*\y) {$\cdots$ \strut};

% \node[] at (14.5*\x,0.5*\y) {$;$ \strut};
% \node[] at (14.5*\x,-0.5*\y) {$.$ \strut};

\foreach \i in
{1,3,5}{
\node[circle] at (2*\x*\i-0.75*\x,0.5*\y) {$\underline{\greekABA}$ \strut};
\node[circle] at (2*\x*\i-0.25*\x,0.5*\y) {$\greekD$ \strut};
\node[circle] at (2*\x*\i+0.25*\x,0.5*\y) {$\rho_\i$ \strut};
\node[circle] at (2*\x*\i+0.75*\x,0.5*\y) {$\greekC$ \strut};
}

\foreach \i in
{2,4,6}{
\node[circle] at (2*\x*\i-0.75*\x,0.5*\y) {$\greekABA$ \strut};
\node[circle] at (2*\x*\i-0.25*\x,0.5*\y) {$\greekD$ \strut};
\node[circle] at (2*\x*\i+0.25*\x,0.5*\y) {$\rho_\i$ \strut};
\node[circle] at (2*\x*\i+0.75*\x,0.5*\y) {$\greekC$ \strut};
}

\foreach \i in
{1,3,5}{
\node[circle] at (2*\x*\i-0.75*\x,-0.5*\y) {$\underline{\greekABA}$ \strut};
\node[circle] at (2*\x*\i-0.25*\x,-0.5*\y) {$\rho_\i$ \strut};
\node[circle] at (2*\x*\i+0.25*\x,-0.5*\y) {$\greekC$ \strut};
\node[circle] at (2*\x*\i+0.75*\x,-0.5*\y) {$\greekC$ \strut};
}

\foreach \i in
{2,4,6}{
\node[circle] at (2*\x*\i-0.75*\x,-0.5*\y) {$\greekABA$ \strut};
\node[circle] at (2*\x*\i-0.25*\x,-0.5*\y) {$\greekD$ \strut};
\node[circle] at (2*\x*\i+0.25*\x,-0.5*\y) {$\greekD$ \strut};
\node[circle] at (2*\x*\i+0.75*\x,-0.5*\y) {$\rho_\i$ \strut};
}

\foreach \i in
{1,3,5}{
\draw[->,shorten <=2pt,rounded corners]
(2*\x*\i+2.7*\x,0.35*\y) -- (2*\x*\i+2.6*\x,0.1*\y) -- (2*\x*\i+0.3*\x,-0.1*\y) -- (2*\x*\i+0.2*\x,-0.35*\y);

\draw[->,line width=1mm,white,shorten <=2pt,rounded corners]
(2*\x*\i-0.3*\x,0.35*\y) -- (2*\x*\i-0.2*\x,0.1*\y) -- (2*\x*\i+2.1*\x,-0.1*\y) -- (2*\x*\i+2.2*\x,-0.35*\y);

\draw[->,shorten <=2pt,rounded corners]
(2*\x*\i-0.3*\x,0.35*\y) -- (2*\x*\i-0.2*\x,0.1*\y) -- (2*\x*\i+2.1*\x,-0.1*\y) -- (2*\x*\i+2.2*\x,-0.35*\y);
}
\end{tikzpicture}

Let $\balance{n}$ denote the difference between the number of $c$'s and the number of $d$'s occurring in the prefix of $\rho$ of length~$n$, and define $\balanceprime{n}$ similarly for prefixes of $\rho'$.
We now compare $\balance{n}$ and $\balanceprime{n}$ to argue that $\rho'$ processes a word that is not in $L_{c,d}$.

The underlined elements of $\rho$ and $\rho'$ denote the only parts containing positions~$n$ such that $\balance{n} = \balanceprime{n}$.
All the other positions~$n$ satisfy $\balanceprime{n} > \balance{n}$ since the length-$n$ prefix of $\rho'$ has either processed more $c$'s than the corresponding prefix of $\rho$ (as a $c$-cycle $\greekC$ has been moved backward)
or has processed less $d$'s (as a $d$-cycle $\greekD$ has been moved forward).
Thus, the word processed by $\rho'$ only has finitely many $(c,d)$-balanced prefixes.
This follows from the fact that along the run $\rho$
the difference~$\balance{n}$ between the number of $c$'s and $d$'s
processed so far never falls below $0$,
and is always strictly greater than $0$ along the run infixes~$\greekABA$.
Therefore, moving $c$-cycles backward and $d$-cycles forward in $\rho'$
ensures that, past the run prefix $\rho_0$,
the difference between the number of $c$'s and $d$'s
processed so far is always strictly greater than $0$. Thus, there is no $(c,d)$-balanced prefix.

Moreover, note that
along the underlined parts
both runs visit the same states and have the same extended Parikh images (see Remark~\ref{remark:shifting}).
Hence, the run $\rho'$ is accepting since
there is an $FC$-prefix ending in each
run infix $\greekABA$ of $\rho$.

This gives us the desired contradiction: the run $\rho'$ of $\aut$
is accepting, yet it processes a word that is not in $L_{a,b} \cap L_{c,d}$.

To conclude, if infinitely many $FC$-prefixes of $\rho$
end in the $d^nc^n$ blocks of $w$ (instead of the $a^nb^{2n}a^n$ blocks),
we again decompose $\rho$ into
$\rho_0 \greekA \greekABA \greekB \rho_1 \greekA \greekABA \greekB \rho_2 \greekA \greekABA \greekB \cdots$,
where this time:
\begin{itemize}
    \item 
    $\greekA$ is a cycle processing a word in $a^+$;
    \item
    $\greekABA$ processes a word in $a^* d^n c^n a^n b^{*}$,
    and there is an $FC$-prefix ending in each copy of $\greekABA$;
    \item
    $\greekB$ is a cycle processing a word in $b^+$.
\end{itemize}
We then complete the proof as in the previous case.

\textbf{Complement for \rpa, deterministic \srpa, deterministic \arpa})
The language of infinite words over~$\set{a,b}$ containing at least one $a$ is accepted by a deterministic \srpa. 
However, its complement, the language of infinite words having no $a$, is not accepted by any (even nondeterministic) \rpa:
Assume it is and consider an accepting run~$\rho$ processing $b^\omega$: it has a $F$-prefix and a $C$-prefix.
Due to our completeness assumption, after these prefixes the run can be continued in any way, e.g., by processing a word containing an $a$, while staying accepting, a contradiction.
This yields the desired counterexample for all three types of automata.

\textbf{Complement for \spa, deterministic \spa})
The language of infinite words over~$\set{a,b}$ containing no $a$ is accepted by a deterministic \spa. 
However, its complement, the language of infinite words having at least one $a$, is not accepted by any (even nondeterministic) \spa, as shown in the proof of Theorem~\ref{thm_omegaseparations}.
This yields the desired counterexample for both types of automata.

\textbf{Complement for \bpa, deterministic \sbpa, deterministic \abpa})
The language of infinite  words over $\set{a,b}$ containing infinitely many $(a,b)$-balanced prefixes is accepted by the deterministic \sbpa~$(\aut, C)$ where $\aut$ is depicted in Figure~\ref{figure_balanced} and $C = \set{(n,n) \mid n \ge 0}$.
However, we showed in the proof of Theorem~\ref{thm_omegaseparations} that its complement, the language~$U'$ of infinite words such that almost all prefixes are $(a,b)$-unbalanced, is not accepted by any (even nondeterministic) \bpa. 
This yields the desired counterexample for all three types of automata.

\textbf{Complement for \cpa})
The language of infinite words over $\set{a,b}$ containing a nonempty $(a,b)$-balanced prefix is accepted by a nondeterministic \cpa, as shown in the proof of Theorem~\ref{theorem_detvsnondet}.\ref{theorem_detvsnondet_buchi}.
However, we showed in the proof of Theorem~\ref{thm_omegaseparations} that its complement, the language~$U$ of infinite words that have only $(a,b)$-unbalanced nonempty prefixes, is not accepted by any nondeterministic \cpa. 
This yields the desired counterexample.

\textbf{Complement for deterministic \cpa}) Deterministic \cpa are closed under intersection, but not under union. 
Hence, due to De Morgan's laws, they cannot be closed under complementation.
\end{proof}

% %%%%%%%%%%%%%%%%%%%%%%%%%%%%%%%%%%%%%%%%%%%%%%%%%%%%%%%%%%%%%%%%
% %%%%%%%%%%%%%%%%%%%%%%%%%%%%%%%%%%%%%%%%%%%%%%%%%%%%%%%%%%%%%%%%
% %%%%%%%%%%%%%%%%%%%%%%%%%%%%%%%%%%%%%%%%%%%%%%%%%%%%%%%%%%%%%%%%
% %%%%%%%%%%%%%%%%%%%%%%%%%%%%%%%%%%%%%%%%%%%%%%%%%%%%%%%%%%%%%%%%
% \section{Proofs omitted in Section~\ref{sec:decision}}

% %%%%%%%%%%%%%%%%%%%%%%%%%%%%%%%%%%%%%%%%%%%%%%%%%%%%%%%%%%%%%%%%
% %%%%%%%%%%%%%%%%%%%%%%%%%%%%%%%%%%%%%%%%%%%%%%%%%%%%%%%%%%%%%%%%
% \subsection{Proof of Theorem~\ref{thm_reachsafetyemptiness}}

% %%%%%%%%%%%%%%%%%%%%%%%%%%%%%%%%%%%%%%%%%%%%%%%%%%%%%%%%%%%%%%%%
% %%%%%%%%%%%%%%%%%%%%%%%%%%%%%%%%%%%%%%%%%%%%%%%%%%%%%%%%%%%%%%%%
% \subsection{Proof of Lemma~\ref{lemma_buchilike}}

% %%%%%%%%%%%%%%%%%%%%%%%%%%%%%%%%%%%%%%%%%%%%%%%%%%%%%%%%%%%%%%%%
% %%%%%%%%%%%%%%%%%%%%%%%%%%%%%%%%%%%%%%%%%%%%%%%%%%%%%%%%%%%%%%%%
% \subsection{Proof of Theorem~\ref{theorem_universality}}

% %%%%%%%%%%%%%%%%%%%%%%%%%%%%%%%%%%%%%%%%%%%%%%%%%%%%%%%%%%%%%%%%
% %%%%%%%%%%%%%%%%%%%%%%%%%%%%%%%%%%%%%%%%%%%%%%%%%%%%%%%%%%%%%%%%
% \subsection{Proof of Theorem~\ref{theorem_modelchecking}}

% %%%%%%%%%%%%%%%%%%%%%%%%%%%%%%%%%%%%%%%%%%%%%%%%%%%%%%%%%%%%%%%%
% %%%%%%%%%%%%%%%%%%%%%%%%%%%%%%%%%%%%%%%%%%%%%%%%%%%%%%%%%%%%%%%%
% \subsection{Proof of Theorem~\ref{thm_dpagamesundec}}

%%%%%%%%%%%%%%%%%%%%%%%%%%%%%%%%%%%%%%%%%%%%%%%%%%%%%%%%%%%%%%%%
%%%%%%%%%%%%%%%%%%%%%%%%%%%%%%%%%%%%%%%%%%%%%%%%%%%%%%%%%%%%%%%%
%%%%%%%%%%%%%%%%%%%%%%%%%%%%%%%%%%%%%%%%%%%%%%%%%%%%%%%%%%%%%%%%
%%%%%%%%%%%%%%%%%%%%%%%%%%%%%%%%%%%%%%%%%%%%%%%%%%%%%%%%%%%%%%%%
\section{Appendix: Parikh Automata and Two-counter Machines}
\label{sec_minsky}

Many of our undecidability proofs are reductions from nontermination problems for two-counter machines.
To simulate these machines by Parikh automata, we require them to be in a certain normal form. 
We first introduce two-counter machines, then the normal form, and conclude this section by presenting the simulation via Parikh automata. 

A two-counter machine~$\mach$ is a sequence
\[
(0:  \instr_0) (1:  \instr_1) \cdots (k-2:  \instr_{k-2})(k-1:  \stopp), 
\]
where the first element of a pair~$(\ell: \instr_\ell)$ is the line number and $\instr_\ell$ for $0 \le \ell < k-1$ is an instruction of the form
\begin{itemize}
    \item $\inc{i}$ with $i \in\set{0,1}$, 
    \item $\dec{i}$ with $i \in\set{0,1}$, or
    \item $\ite{i}{\ell'}{\ell''}$ with $i \in\set{0,1}$ and $\ell',\ell'' \in \set{0, \ldots,k-1}$. 
\end{itemize}
A configuration of $\mach$ is of the form~$(\ell, c_0, c_1)$ with $\ell \in \set{0, \ldots, k-1}$ (the current line number) and $c_0, c_1\in\nats$ (the current contents of the counters). 
The initial configuration is~$(0,0,0)$ and the unique successor configuration of a configuration~$(\ell,  c_0, c_1)$ is defined as follows:
\begin{itemize}
    \item If $\instr_\ell = \inc{i}$, then the successor configuration is $(\ell +1, c_0', c_1')$ with $c_i' = c_i +1$ and $c_{1-i}' = c_{1-i}$.
    \item If $\instr_\ell = \dec{i}$, then the successor configuration is $(\ell +1, c_0', c_1')$ with $c_i' = \max\set{c_i -1,0}$ and $c_{1-i}' = c_{1-i}$.
    \item If $\instr_\ell = \ite{i}{\ell'}{\ell''}$ and $c_i = 0$, then the successor configuration is $(\ell', c_0, c_1)$.
    \item If $\instr_\ell = \ite{i}{\ell'}{\ell''}$ and $c_i > 0$, then the successor configuration is $(\ell'', c_0, c_1)$.
    \item If $\instr_\ell = \stopp$, then $(\ell, c_0, c_1)$ has no successor configuration.
\end{itemize}
The unique run of $\mach$ (starting in the initial configuration) is defined as expected.
It is either finite (line~$k-1$ is reached) or infinite (line~$k-1$ is never reached).
In the former case, we say that $\mach$ terminates.

\begin{proposition}[\cite{Minsky67}]
The following problem is undecidable: Given a two-counter machine~$\mach$, does $\mach$ terminate?
\end{proposition}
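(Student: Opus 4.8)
The plan is to obtain undecidability by a reduction from the halting problem for deterministic Turing machines, via the classical fact that two-counter machines --- in exactly the format of the machines $\mach$ introduced above --- are Turing complete; since the statement is attributed to \cite{Minsky67}, the write-up will simply invoke this construction, but here I sketch it. The reduction proceeds in two stages: first I would simulate an arbitrary deterministic Turing machine by a counter machine with a fixed number $k$ of counters, and then collapse $k$ counters down to two.

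For the first stage, fix a deterministic Turing machine $T$ whose tape alphabet has $b$ symbols. A configuration of $T$ is encoded by its control state --- which is folded into the line number of the counter machine --- together with two natural numbers, one reading the tape content to the left of the head and the other the content to its right, each as a number written in base $b$ with the least significant digit adjacent to the head. Then ``read the scanned symbol'' is ``compute one of these numbers modulo $b$'', ``rewrite the scanned symbol'' replaces the last digit, and ``move the head'' shifts a single base-$b$ digit from one number to the other, i.e.\ multiplies one of them by $b$ and performs a division-with-remainder by $b$ on the other. Each of these is an arithmetic manipulation that a counter machine carries out using a constant number of auxiliary counters as scratch, so $T$ is faithfully simulated by a counter machine with, say, four counters whose run reaches its $\stopp$ line exactly when $T$ halts.

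For the second stage I would apply the standard prime-power packing. A counter machine with counters holding $c_1,\dots,c_k$ is simulated by a two-counter machine storing the single integer $N = p_1^{c_1}p_2^{c_2}\cdots p_k^{c_k}$ (with $p_i$ the $i$-th prime) in its first counter and using the second counter purely as scratch. Then $\inc{i}$ becomes ``multiply $N$ by $p_i$''; $\dec{i}$ becomes ``divide $N$ by $p_i$'', which is correct precisely because the source machine only ever decrements a positive counter, so $p_i \mid N$ at that point; and the zero-test $\ite{i}{\ell'}{\ell''}$ becomes ``is $N$ divisible by $p_i$?''. Each of multiplication by the constant $p_i$, division by $p_i$, and the divisibility test is a short subprogram built from increments, truncating decrements, and zero-tests on the two counters that shuttles units between them --- e.g.\ to multiply, move $N$ into the scratch counter and add $p_i$ to the first counter for each unit removed; to divide-and-test, repeatedly subtract $p_i$ from $N$, counting blocks in the scratch counter, and branch on whether the subtraction lands exactly on zero. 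Concatenating these gadgets with appropriately renumbered lines yields a genuine two-counter machine in the sense of the excerpt that terminates iff the original $k$-counter machine does.

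Composing the two stages gives a computable map from Turing machines to two-counter machines that preserves (non)termination, whence the claimed undecidability. I expect the only delicate point to be the bookkeeping of the second stage: checking that the scratch counter is always emptied before each macro ``returns'', that the line renumbering of the concatenated gadgets is consistent, and that the truncating semantics of $\dec{i}$ is harmless because every division is guarded by a prior divisibility test. This is routine but fiddly, and there is no conceptual obstacle beyond it --- which is exactly why the statement is quoted rather than reproved.
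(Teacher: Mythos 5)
Your sketch is correct and is exactly the classical argument that the paper's citation to Minsky (1967) refers to: the paper itself gives no proof, and the standard one proceeds precisely by simulating a Turing machine with a bounded number of counters via the two base-$b$ tape halves and then collapsing to two counters by prime-power packing. The points you flag as delicate (emptying the scratch counter, guarding divisions so the truncating decrement is harmless) are indeed the only bookkeeping issues, and your treatment of them is sound.
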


In the following, we assume without loss of generality that each two-counter machine satisfies the \emph{guarded-decrement property}:
Every decrement instruction~$(\ell: \dec{i})$ is preceded by $(\ell-1: \ite{i}{\ell+1}{\ell})$ and decrements are never the target of a goto instruction. 
As the decrement of a zero counter has no effect, one can modify each two-counter machine~$\mach$ into an $\mach'$ satisfying the guarded-decrement property such that $\mach$ terminates if and only if $\mach'$ terminates: One just adds the the required guard before every decrement instruction and changes each target of a goto instruction that is a decrement instruction to the preceding guard.

The guarded-decrement property implies that decrements are only executed if the corresponding counter is nonzero. 
Thus, the value of counter~$i$ after a finite sequence of executed instructions (starting with value zero in the counters) is equal to the number of executed increments of counter~$i$ minus the number of executed decrements of counter~$i$. 
Note that the number of executed increments and decrements can be tracked by a Parikh automaton.

Consider a finite or infinite word~$w = w_0 w_1 w_2 \cdots$ over the set~$\set{0,1,\ldots, k-1}$ of line numbers.
We now describe how to characterize whether $w$ is (a prefix of) the projection to the line numbers of the unique run of $\mach$ starting in the initial configuration.
This characterization is designed to be checkable by a Parikh automaton.
Note that $w$ only contains line numbers, but does not encode values of the counters. These will be kept track of by the Parikh automaton by counting the number of increment and decrement instructions in the input, as explained above (this explains the need for the guarded-decrement property).
Formally, we say that $w$ contains an \emph{error} at position~$n < \size{w}-1$ if either $w_n = k-1$ (the instruction in line~$w_n$ is $\stopp$), or if one of the following two conditions is satisfied:
    \begin{enumerate}
        \item\label{casenonif} The instruction~$\instr_{w_n}$ in line~$w_{n}$ of $\mach$ is an increment or a decrement and $w_{n+1} \neq w_{n}+1$, i.e., the letter $w_{n+1}$ after $w_n$ is not equal to the line number $w_n + 1$, which it should be after an increment or decrement.
        
        \item\label{caseif} $\instr_{w_{n}}$ has the form~$\ite{i}{\ell}{\ell'}$, and one of the following cases holds: Either, we have
            \[
            \sum\nolimits_{j  \colon \instr_j = \inc{i}} \occ{w_0 \cdots w_{n}}{j} =
            \sum\nolimits_{j \colon \instr_j = \dec{i}} \occ{w_0 \cdots w_{n}}{j}
            \] and $w_{n+1} \neq \ell$,
            i.e., the number of increments of counter~$i$ is equal to the number of decrements of counter~$i$ in $w_0 \cdots w_{n}$ (i.e., the counter is zero) but the next line number in $w$ is not the target of the if-branch.
            Or, we have  
            \[
            \sum\nolimits_{j  \colon \instr_j = \inc{i}} \occ{w_0 \cdots w_{n}}{j} \neq 
            \sum\nolimits_{j \colon \instr_j = \dec{i}} \occ{w_0 \cdots w_{n}}{j} , 
            \]
            and $w_{n+1} \neq \ell'$,
            i.e., the number of increments of counter~$i$ is not equal to the number of decrements of counter~$i$ in $w_0 \cdots w_{n}$ (i.e., the counter is nonzero) but the next line number in $w$ is not the target of the else-branch.

    \end{enumerate}
Note that the definition of error (at position~$n$) refers to the number of increments and decrements in the prefix~$w_0 \cdots w_{n}$, which does not need to be error-free itself. 
However, if a sequence of line numbers does not have an error, then the guarded-decrement property yields the following result.

\begin{lemma}
\label{lemma-simulation}
Let $w \in \set{0,1,\ldots, k-1}^+$ with $w_0 = 0$. Then, $w$ has no errors at positions~$\set{0,1,\ldots, \size{w}-2}$ if and only if $w$ is a prefix of the projection to the line numbers of the run of $\mach$.
\end{lemma}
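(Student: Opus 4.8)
Write the unique run of $\mach$ as $C_0 C_1 C_2 \cdots$ (finite or infinite) with $C_t = (\ell_t, c_0^t, c_1^t)$, so that its projection to the line numbers is $\ell_0 \ell_1 \ell_2 \cdots$. The crux is a preliminary invariant: for every $t$ for which $C_t$ is defined and every $i \in \set{0,1}$,
\[
c_i^t \;=\; \sum\nolimits_{j \colon \instr_j = \inc{i}} \occ{\ell_0 \cdots \ell_{t-1}}{j} \;-\; \sum\nolimits_{j \colon \instr_j = \dec{i}} \occ{\ell_0 \cdots \ell_{t-1}}{j}.
\]
I would prove this by induction on $t$. The base case $t=0$ is trivial, and the increment case and the cases where $\instr_{\ell_t}$ does not touch counter~$i$ are immediate. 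The only delicate case is $\instr_{\ell_t} = \dec{i}$: by the guarded-decrement property line~$\ell_t$ is reached only via its guard $(\ell_t-1 \colon \ite{i}{\ell_t+1}{\ell_t})$ taking its else-branch (here the clause forbidding decrements as goto-targets is essential), hence $c_i^t > 0$, so $\max\set{c_i^t - 1, 0} = c_i^t - 1$, matching the right-hand side. Since an if-instruction never changes a counter, for an if-line~$\ell_t$ the sums in the invariant are the same whether ranged over $\ell_0 \cdots \ell_{t-1}$ or over $\ell_0 \cdots \ell_t$, the latter being the form appearing in the definition of an error.

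\textbf{Forward direction.} First I would show that the projection $\ell_0 \ell_1 \cdots$ of the run itself has no errors, and hence so does every prefix of it. Indeed $\ell_0 = 0$ by definition of the initial configuration; and whenever both $\ell_n$ and $\ell_{n+1}$ are defined, there is no error at position~$n$: we have $\instr_{\ell_n} \neq \stopp$ since $C_n$ has a successor; if $\instr_{\ell_n}$ is an increment or a decrement then $\ell_{n+1} = \ell_n + 1$ by the definition of the successor configuration, excluding case~\ref{casenonif}; and if $\instr_{\ell_n} = \ite{i}{\ell}{\ell'}$ then, by the counting invariant, the two sums over $\ell_0 \cdots \ell_n$ are equal exactly when $c_i^n = 0$, in which case $\ell_{n+1} = \ell$, and unequal exactly when $c_i^n > 0$, in which case $\ell_{n+1} = \ell'$, excluding case~\ref{caseif}. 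Since whether there is an error at position~$n$ depends only on $\ell_0 \cdots \ell_{n+1}$, every $w$ that is a prefix of $\ell_0 \ell_1 \cdots$ is error-free at all positions $0, \ldots, \size{w}-2$.

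\textbf{Backward direction.} Conversely, suppose $w = w_0 \cdots w_m$ with $w_0 = 0$ has no errors at $\set{0, \ldots, m-1}$; I would show by induction on $m$ that $w = \ell_0 \cdots \ell_m$ (so in particular $w$ is a prefix of the projection). For $m = 0$ this is $w_0 = 0 = \ell_0$. For $m \geq 1$, the prefix $w' = w_0 \cdots w_{m-1}$ has no errors at $\set{0, \ldots, m-2}$ (an error at a position $n \leq m-2$ concerns only $w_0 \cdots w_{n+1}$, a prefix of $w'$), so by the induction hypothesis $w_t = \ell_t$ for $t \leq m-1$. Absence of an error at position~$m-1$ forces $w_{m-1} = \ell_{m-1} \neq k-1$, so $C_{m-1}$ has a successor and $\ell_m$ is defined; then a case analysis on $\instr_{\ell_{m-1}}$ — using the definition of the successor configuration when it is an increment or decrement, and the counting invariant together with $w' = \ell_0 \cdots \ell_{m-1}$ when it is an if-instruction — yields $w_m = \ell_m$, completing the induction.

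\textbf{Main obstacle.} I expect the only real difficulty to be the counting invariant, and specifically the decrement case: one must invoke both halves of the guarded-decrement property to conclude that a $\dec{i}$ instruction is executed only when counter~$i$ is strictly positive, so that the saturating decrement $\max\set{c_i^t-1,0}$ equals the plain $c_i^t - 1$ tracked by the occurrence counts. Everything else is routine bookkeeping; the only points requiring mild care are the boundary conventions — the last letter of $w$ is never tested for an error, and the backward induction simply cannot proceed past a $\stopp$ line, which is consistent because reaching such a line would already constitute an error one position earlier.
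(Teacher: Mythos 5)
Your proposal is correct and follows essentially the same route as the paper: the paper's proof is a two-line sketch asserting exactly your counting invariant (that $(w_n, c_0^n, c_1^n)$ with the counters given by the difference of increment and decrement occurrence counts is the $n$-th configuration, proved by induction) for one direction and the error-freeness of the run's projection for the other. You have merely filled in the details, correctly isolating the decrement case and the guarded-decrement property as the one nontrivial point.
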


\begin{proof}
If $w$ has no errors at positions~$\set{0,1,\ldots, \size{w}-2}$, then an induction shows that $(w_n, c_0^n, c_1^n)$ with
\[c_i^n =\sum\nolimits_{j  \colon \instr_{j} = \inc{i}} \occ{w_0 \cdots w_{n-1}}{j} -
            \sum\nolimits_{j \colon \instr_{j} = \dec{i}} \occ{w_0 \cdots w_{n-1}}{j} \]
is the $n$-th configuration of the run of $\mach$.

On the other hand, projecting a prefix of the run of $\mach$ to the line numbers yields a word~$w$ without errors at positions~$\set{0,1,\ldots, \size{w}-2}$.
\end{proof}

The existence of an error can be captured by a Parikh automaton, leading to the undecidability of  the safe word problem for Parikh automata, which we now prove.
Let $(\aut, C)$ be a \pa accepting finite words over $\Sigma$. 
A \emph{safe} word of $(\aut, C)$ is an infinite word in $ \Sigma^\omega$ such that each of its prefixes is in $L(\aut, C)$. 

\begin{lemma}
\label{lemma_safeword}
The following problem is undecidable: Given a deterministic \pa, does it have a safe word? 
\end{lemma}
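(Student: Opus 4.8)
The plan is to reduce to the safe word problem the question whether a two-counter machine does \emph{not} terminate, which is undecidable since termination is (by the cited result of Minsky). Fix a two-counter machine~$\mach$ with lines $0,\ldots,k-1$ satisfying the guarded-decrement property. Over the alphabet $\Sigma=\set{0,\ldots,k-1}$ of line numbers I will build a deterministic \pa~$(\aut,C)$ such that: (i)~$L(\aut,C)$ contains every prefix of the projection to the line numbers of the run of~$\mach$; and (ii)~every $w\in L(\aut,C)$ satisfies $w_0=0$ (or $w=\epsilon$) and has no error at position $\size{w}-2$, in the sense of Section~\ref{sec_minsky}. This suffices: if $\mach$ does not terminate, its run is infinite and, by Lemma~\ref{lemma-simulation}, every finite prefix of its projection~$\pi\in\Sigma^\omega$ is error-free, hence lies in $L(\aut,C)$ by~(i), so $\pi$ is a safe word. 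Conversely, if $w\in\Sigma^\omega$ is a safe word, then for every $n$ the prefix $w_0\cdots w_{n+1}$ lies in $L(\aut,C)$, so by~(ii) it has no error at position~$n$; thus $w$ is error-free at every position, and by Lemma~\ref{lemma-simulation} every finite prefix of $w$ is a prefix of the projection of the run of $\mach$, which is therefore infinite.

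For the construction, the finite-state part of $\aut$ remembers the last two letters read (with extra states for run prefixes of length $0$ and $1$) and has \emph{no} outgoing transition whenever continuing would create a \emph{local} error: reading a first letter $\neq 0$, reading a letter after a \stopp{} instruction, reading a wrong successor line of an increment or decrement, or reading a conditional-jump outcome that is neither target line. The Parikh counters track, with a one-step delay, the numbers of executed increments and decrements of each counter~$i$, so that after reading a length-$m$ word these values are exactly $\sum_{j:\instr_j=\inc{i}}\occ{w_0\cdots w_{m-2}}{j}$ and $\sum_{j:\instr_j=\dec{i}}\occ{w_0\cdots w_{m-2}}{j}$, the quantities occurring in the definition of an error at position~$m-2$. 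The accepting states are exactly those reachable by the transitions above, and $C$ is designed so that membership of the Parikh image in $C$ encodes: ``if the instruction in line $w_{m-2}$ is a conditional jump on counter~$i$ whose outcome $w_{m-1}$ took the zero-branch, require $\#\mathrm{inc}_i=\#\mathrm{dec}_i$; if it took the nonzero-branch, require $\#\mathrm{inc}_i\neq\#\mathrm{dec}_i$; otherwise, no constraint.''

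The main obstacle is that a Parikh automaton consults its semilinear set only once, at the end of a run, whereas the zero-tests of $\mach$ occur at all conditional jumps and each depends on a prefix sum. Two ideas get around this. First, since a safe word exhibits \emph{all} of its finite prefixes, it is enough for $(\aut,C)$ to check only the conditional jump located at the second-to-last position of the input; the prefix of length $n+2$ then accounts for the jump at position~$n$. Second, the constraint that must be applied depends on which conditional jump (if any) closes the input and on its outcome — information stored in the control state, not in the Parikh image — yet $C$ sees only the latter. I fix this with a telescoping device: assign each control state $s$ a bounded ``type code'' $\mathrm{tc}(s)\in\set{0,1,2,3,4}$ (distinguishing ``no check'' from the four cases ``counter~$i$ zero/nonzero''), and add two further counters $p^+,p^-$ incremented on every transition $s\to s'$ by $\max(\mathrm{tc}(s')-\mathrm{tc}(s),0)$ and $\max(\mathrm{tc}(s)-\mathrm{tc}(s'),0)$, respectively. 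The sum telescopes, and since the initial state has type code~$0$, the final value $p^+-p^-$ equals the type code of the last state of the run; hence $C$ can perform the required case distinction using the Parikh image alone. It then remains to verify, by a routine induction on the length of run prefixes using Lemma~\ref{lemma-simulation} and the guarded-decrement property, that $(\aut,C)$ — which is deterministic, with $C$ semilinear by Proposition~\ref{propsemilinearclosure} — indeed satisfies~(i) and~(ii).
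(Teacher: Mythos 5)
Your proof is correct and follows essentially the same route as the paper: a reduction from (non)termination of decrement-guarded two-counter machines via a deterministic \pa over the line-number alphabet that accepts exactly the words with no error at the second-to-last position (counting increments and decrements with a one-step delay), so that safe words correspond to nonterminating runs by Lemma~\ref{lemma-simulation}. The only differences are cosmetic: where the paper communicates the pending zero-test case to the semilinear set via residues modulo~$4$ in two auxiliary goto dimensions, you use the telescoping pair $p^+,p^-$ whose difference recovers the type code of the final state, and you exclude the purely finite-state (\myquot{local}) errors at every position by incompleteness rather than only at the second-to-last position.
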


\begin{proof}
Our proof proceeds by a reduction from the nontermination problem for decrement-guarded two-counter machines.
Given such a machine~$\mach = (0:  \instr_0) \cdots (k-2:  \instr_{k-2})(k-1:  \stopp)$ let $\Sigma = \set{0, \ldots, k-1}$ be the set of its line numbers.
We construct a deterministic \pa~$(\aut_\mach, C_\mach)$ that accepts a word~$w \in \Sigma^*$ if and only if $w = \epsilon$, $w = 0$, or if $\size{w} \geqslant 2$ and $w$ does not contain an error at position~$\size{w}-2$ (but might contain errors at earlier positions).
Intuitively, the automaton checks whether the second-to-last instruction is executed properly. 
The following is then a direct consequence of Lemma~\ref{lemma-simulation}: $(\aut_\mach, C_\mach)$ has a safe word if and only if $\mach$ does not terminate.

The deterministic \pa~$(\aut_\mach, C_\mach)$ keeps track of the occurrence of line numbers with increment and decrement instructions of each counter (using four dimensions) and two auxiliary dimensions to ensure that the two cases in Condition~\ref{caseif} of the error definition on Page~\pageref{caseif} are only checked when the second-to-last letter corresponds to a goto instruction.
More formally, we construct $\aut_\mach$ such the unique run processing some input~$w = w_0\ldots w_{n-1}$ has the extended Parikh image~$(v_{\text{inc}}^0, v_{\text{dec}}^0, v_{\text{goto}}^0,v_{\text{inc}}^1, v_{\text{dec}}^1, v_{\text{goto}}^1)$ where
\begin{itemize}
    
    \item $v_{\text{inc}}^i$ is equal to $\sum_{j  \colon \instr_j = \inc{i}} \occ{w_0 \cdots w_{n-2}}{j}$, i.e., the number of increment instructions read so far (ignoring the last letter),
    
    \item $v_{\text{dec}}^i$ is equal to $\sum_{j \colon \instr_j = \dec{i}} \occ{w_0 \cdots w_{n-2}}{j}$, i.e., the number of decrement instructions read so far (ignoring the last letter), and
    
    \item $v_{\text{goto}}^i \bmod 4 = 0$, if the second-to-last instruction~$\instr_{w_{n-2}}$ is not a goto testing counter~$i$,
    
    \item $v_{\text{goto}}^i \bmod 4 = 1$, if the second-to-last instruction~ $\instr_{w_{n-2}}$ is a goto testing counter~$i$ and the last letter~$w_{n-1}$ is equal to the target of the if-branch of this instruction, and
    
    \item $v_{\text{goto}}^i \bmod 4 = 2$, if the second-to-last instruction~ $\instr_{w_{n-2}}$ is a goto testing counter~$i$ and the last letter~$w_{n-1}$ is equal to the target of the else-branch of this instruction.
    
    \item $v_{\text{goto}}^i \bmod 4 = 3$, if the second-to-last instruction~ $\instr_{w_{n-2}}$ is a goto testing counter~$i$ and the last letter~$w_{n-1}$ is neither equal to the target of the if-branch nor equal to the target of the else-branch of this instruction.
    Note that this constitutes an error at position~$n-2$.
    
\end{itemize}
Note that $v_{\text{goto}}^i \bmod 4 \neq 0$ can be true for at most one of the $i$ at any time (as a goto instruction~$\instr_{w_{n-2}}$ only refers to one counter) and that the $v_{\text{inc}}^i$ and $v_{\text{dec}}^i$ are updated with a delay of one transition (as the last letter of $w$ is ignored).
This requires to store  the previously processed letter in the state space of $\aut_\mach$.

Further, $C_\mach$ is defined such that $(v_{\text{inc}}^0, v_{\text{dec}}^0, v_{\text{goto}}^0,v_{\text{inc}}^1, v_{\text{dec}}^1, v_{\text{goto}}^1)$ is in $C_\mach$ if and only if
\begin{itemize}
    \item $v_{\text{goto}}^i \bmod 4 =0$  for both $i$, or if
    \item $v_{\text{goto}}^i \bmod 4 = 1$ for some $i$ (recall that $i$ is unique then) and $v_{\text{inc}}^i = v_{\text{dec}}^i$, or if
    \item $v_{\text{goto}}^i \bmod 4 = 2$ for some $i$ (again, $i$ is unique) and $v_{\text{inc}}^i \neq v_{\text{dec}}^i$.
\end{itemize}

All other requirements, e.g., Condition~\ref{casenonif} of the error definition on Page~\pageref{caseif}, the second-to-last letter not being $k-1$, and the input being in $\set{\epsilon, 0}$, can be checked using the state space of $\aut_\mach$. 
\end{proof}

\subsection{Proofs omitted in Section~\ref{sec:decision}}

First, we prove that nonemptiness for deterministic \spa is undecidable. 

\begin{proof}[Proof of Theorem \ref{thm_reachsafetyemptiness}.\ref{spaemptiness}]
The result follows immediately from Lemma \ref{lemma_safeword}:
% undecidability of the nonemptiness problem for deterministic
% \spa 
% Now, consider safety.
% Let $(\aut, C)$ be a \pa accepting finite words over $\Sigma$. 
% A \emph{safe} word of $(\aut, C)$ is an infinite word in $ \Sigma^\omega$ such that each of its prefixes is in $L(\aut, C)$. 
A \pa~$(\aut, C) $ has a safe word if and only if $\Lall(\aut, C) \neq \emptyset$. 
% As existence of a safe word is undecidable (see Lemma~\ref{lemma_safeword} in the appendix), we obtain the result.
\end{proof}

Next, we prove undecidability of nonemptiness for deterministic \cpa.

\begin{proof}[Proof of Theorem \ref{thm_reachsafetyemptiness}.\ref{cpaemptiness}]
% To prove the
% undecidability of the nonemptiness problem for
% \cpa,
% Now, consider co-Büchi acceptance.
We present a reduction from the 
% undecidable 
universal termination problem~\cite{BlondelBKPT01}\footnote{The authors use a slightly different definition of two-counter machine than we do here. Nevertheless, the universal termination problem for their machines can be reduced to the universal termination problem for decrement-guarded two-counter machines as defined here.} for decrement-guarded two-counter machines, which is undecidable. 
The problem asks whether a given two-counter machine $\mach$ terminates from every configuration.
If this is the case, we say that $\mach$ is universally terminating.

Now, consider a decrement-guarded two-counter machine~$\mach$ that contains, without loss of generality, an increment instruction for each counter, say in lines~$\ell^+_0 $ and $\ell^+_1$.
In the  proof of Lemma~\ref{lemma_safeword}, we construct a deterministic \pa~$(\aut_\mach, C_\mach)$ that accepts a finite word~$w$ over the line numbers of $\mach$ if and only if $w = \epsilon$, $w = 0$, or if $\size{w} \geqslant 2$ and $w$ does not contain an error at position~$\size{w}-2$.
We claim that $\LcoBuchi(\aut_\mach, C_\mach)$ is nonempty if and only if $\mach$ is not universally terminating. 

So, first assume there is some $w \in \LcoBuchi(\aut_\mach, C_\mach)$. Hence, there is a run of $(\aut_\mach,C_\mach)$ processing $w$ satisfying the co-Büchi acceptance condition:
From some point~$n_0$ onward, the run only visits states in $F$ and the extended Parikh image is in $C_\mach$.
This means that there is no error in $w$ after position~$n_0$.
So, $\mach$ does not terminate from the configuration~$(w_{n_0}, c_0, c_1)$ with 
\[
c_i = \sum\nolimits_{j  \colon \instr_{j} = \inc{i}} \occ{w_0 \cdots w_{n_0-1}}{j} -
            \sum\nolimits_{j \colon \instr_{j} = \dec{i}} \occ{w_0 \cdots w_{n_0-1}}{j},
\]
i.e., $\mach$ is not universally terminating. 

Now, assume $\mach$ does not universally terminate, say it does not terminate from configuration~$(\ell, c_0, c_1)$.
Recall that the instruction in line~$\ell^+_i$ is an increment of counter~$i$.
We define $w = (\ell_0^+)^{c_0} (\ell_1^+)^{c_1} w' $ where $w'$ is the projection to the line numbers of the (nonterminating) run of $\mach$ starting in $(\ell, c_0, c_1)$.
This word does not have an error after position~$\ell$ (but may have some before that position). 
Hence there is a co-Büchi accepting run of $(\aut_\mach, C_\mach)$ processing $w$, i.e., $\LcoBuchi(\aut_\mach, C_\mach)$ is nonempty.
\end{proof}

Finally, we prove undecidability of universality for \spa.

\begin{proof}[Proof of Theorem \ref{theorem_universality}.\ref{theorem_universality_nondetsafety}]
% In order to show that universality of \spa is undecidable,
We present a reduction from the termination problem for (decrement-guarded) two-counter machines.
% (see the appendix for definitions) to universality of \spa.
So, fix such a machine~$\mach$ with line numbers~$0,1,\ldots, k-1$ where $k-1$ is the line number of the stopping instruction, fix $\Sigma = \set{0,1,\ldots,k-1}$, and consider $L_\mach = L_\mach^0 \cup L_\mach^1$ with
\begin{align*}
L_\mach^0 {}=&{} \set{ w \in \Sigma^\omega \mid  \text{$w_0 \neq 0$}} \text{ and }\\
L_\mach^1 {}=&{}\{w \in \Sigma^\omega \mid \text{if $\occ{w}{k-1}>0$ then $w$ contains an error}\\
&\hspace{5cm}\text{strictly before the first occurrence of $k-1$}\}.
\end{align*}
% Here, errors are defined as in the appendix.
We first prove that $L_\mach$ is not universal if and only if $\mach$ terminates, then that $L_\mach$ is accepted by some \spa.

So, assume that $\mach$ terminates and let $w \in \Sigma^*$ be the projection of the unique finite run of $\mach$ to the line numbers.
Then, $w$ starts with $0$, contains a $k-1$, and no error before the $k-1$. 
Thus, $w0^\omega$ is not in $L_\mach$, i.e., $L_\mach$ is not universal.

Conversely, assume that $L_\mach$ is not universal.
Then, there is an infinite word~$w$ that is neither in $L_\mach^0$ nor in $L_\mach^1$.
So, $w$ must start with $0$, contain a $k-1$, but no error before the first $k-1$.
Thus, Lemma~\ref{lemma-simulation} implies that the prefix of $w$ up to and including the first $k-1$ is the projection to the line numbers of the run of $\mach$.
This run is terminating, as the prefix ends in $k-1$.
Thus, $\mach$ terminates.

It remains to argue that $L_\mach = L_\mach^0 \cup L_\mach^1$ is accepted by an \spa.
Due to closure of \spa under union and the fact that every $\omega$-regular safety property is also accepted by an \spa, we only need to consider $L_\mach^1$.
Recall that we have constructed a deterministic \pa~$(\aut, C)$ (on finite words) accepting a word if it is empty, $0$, or contains an error at the second-to-last position.
We modify this \pa into an \spa~$(\aut',C')$ that accepts $L_\mach^1$.

To this end, we add a fresh accepting state~$q_a$ and a fresh rejecting state~$q_r$ to $\aut$ while making all states of $\aut$ accepting in $\aut'$. 
Both fresh states are sinks equipped with self-loops that are labeled with $(\ell, \vec{0})$ for every line number~$\ell$. 
Here, $\vec{0}$ is the appropriate zero vector, i.e., the counters are frozen when reaching the fresh states.

Intuitively, moving to $q_a$ signifies that an error at the current position is guessed.
Consequently, if a $k-1$ is processed from a state of $\aut$, the rejecting sink~$q_s$ is reached.
We reflect in a fresh component of the extended Parikh image whether $q_a$ has been reached.
Due to this, we can define $C'$ so that the extended Parikh image of every run prefix ending in a state of $\aut$ is in $C'$ and that the extended Parikh image of a run prefix ending in $q_a$ is in $C'$ if removing the last entry (the reflecting one) yields a vector in $C$, i.e., an error has indeed occurred.
Then, we have $\Lall(\aut', C') = L_\mach^1$ as required.
\end{proof}

%paste the full content (from begin intro to last line, including bibliography) from mainlipics.tex here. Delete after use to make sure we always use the current version of the content.

\end{document}